\theoremstyle{plain}
\newtheorem{theorem}{Theorem}[section]
\newtheorem{lemma}{Lemma}[section]
\newtheorem{cor}{Corollary}[section]
\newtheorem{prop}{Proposition}[section]
\theoremstyle{definition}
\newtheorem{definition}{Definition}[section]
\newtheorem{algorithm}{Algorithm}[section]
\theoremstyle{remark}
\newtheorem{remark}{Remark}[section]
\newcommand{\x}{{\mathbf x}}
\newcommand{\y}{{\mathbf y}}
\newcommand{\z}{{\mathbf z}}
\newcommand{\w}{{\mathbf w}}
\newcommand{\0}{{\mathbf 0}}
\newcommand{\p}{{\mathbf p}}
\newcommand{\m}{{\mathbf m}}
\newcommand{\G}{\mathcal{G}}
\newcommand{\OO}{{\mathcal O}}
\newcommand{\Dir}{\mathcal E}
\newcommand{\Time}{\mathcal T}
\newcommand{\A}{\mathcal{A}}
\newcommand{\cw}{{\sf CW}}
\newcommand{\tm}{t_\text{\rm mix}}
\newcommand{\tr}{{t_\text{\rm rel}}}
\newcommand{\pimin}{\pi_{\text{\rm min}}}
\newcommand{\tv}[1]{\left\|#1\right\|_{\rm TV}}
\newcommand{\pmt}[3]{t_{#1}^{#2}(#3)}
\newcommand{\sign}[1]{{\sf sign}\left(#1\right)}
\newcommand{\Prob}[2]{\mathbf{P}_{#1} \left( #2 \right)}
\newcommand{\Expec}[2]{\mathbf{E}_{#1} \left[ #2 \right]}
\newcommand{\tauset}[1]{\tau_{\scriptscriptstyle #1}}
\begin{document}

\raggedbottom

\title{Metastability of Asymptotically Well-Behaved Potential Games}

\author{Diodato Ferraioli\thanks{DIAG, Sapienza Universit\`a di Roma, Italy. E-mail: {\tt ferraioli@dis.uniroma1.it}.} \and Carmine Ventre\thanks{School of Computing, Teesside University, UK. E-mail: {\tt C.Ventre@tees.ac.uk}.}}
\date{}

\setcounter{page}{0}
\maketitle
\thispagestyle{empty}

\begin{abstract}
One of the main criticisms to game theory concerns the assumption of full rationality. Logit dynamics is a decentralized algorithm in which a level of irrationality (a.k.a. ``noise'') is introduced in players' behavior. In this context, the solution concept of interest becomes the logit equilibrium, as opposed to Nash equilibria. Logit equilibria are
distributions over strategy profiles that possess several nice properties, including existence and uniqueness. However, there are games in which their computation may take time exponential in the number of players. We therefore look at an approximate version of logit equilibria, called \emph{metastable distributions}, introduced by Auletta et al. \cite{afppSODA12j}. These are distributions that remain stable (i.e., players do not go too far from it) for a super-polynomial number of steps (rather than forever, as for logit equilibria). The hope is that these distributions exist and can be reached quickly by logit dynamics.

We identify a class of potential games, called asymptotically well-behaved, for which the behavior of the logit dynamics is not chaotic as the number of players increases so to guarantee meaningful asymptotic results. We prove that any such game admits distributions which are metastable no matter the level of noise present in the system, and the starting profile of the dynamics. These distributions can be quickly reached if the rationality level is not too big when compared to the inverse of the maximum difference in potential. Our proofs build on results which may be of independent interest, including some spectral characterizations of the transition matrix defined by logit dynamics for generic games and the relationship of several convergence measures for Markov chains.
\end{abstract}

\newpage

\section{Introduction}
One of the most prominent assumptions in game theory dictates that people are rational. This is contrasted by many concrete instances of people making irrational choices in certain strategic situations, such as stock markets \cite{bookIrrationality}. This might be due to the incapacity of exactly determining one's own utilities: the strategic game is played with utilities perturbed by some noise.

Logit dynamics \cite{blumeGEB93} incorporates this noise in players' actions and then is advocated to be a good model for people behavior. More in detail, logit dynamics features a rationality level $\beta \geq 0$ (equivalently, a noise level $1/\beta$) and each player is assumed to play a strategy with a probability which is proportional to her corresponding utility and $\beta$. So the higher $\beta$ is, the less noise there is and the more rational players are. Logit dynamics can then be seen as a noisy best-response dynamics.

The natural equilibrium concept for logit dynamics is defined by a probability distribution over the pure strategy profiles of the game. Whilst for best-response dynamics pure Nash equilibria are stable states, in logit dynamics there is a chance, which is inversely proportional to $\beta$, that players deviate from such strategy profiles. Pure Nash equilibria are then not an adequate solution concept for this dynamics. However, the random process defined by the logit dynamics can be modeled via an ergodic Markov chain. Stability in Markov chains is represented by the concept of stationary distributions. These distributions, dubbed logit equilibria, are suggested as a suitable solution concept in this context due to their properties \cite{afppSAGT10}. For example, from the results known in Markov chain literature, we know that any game possesses a logit equilibrium and that this equilibrium is unique. The absence of either of these guarantees is often considered a weakness of pure Nash equilibria. Nevertheless, as for Nash equilibria, the computation of logit equilibria may be computationally hard depending on whether the chain mixes rapidly or not \cite{afpppSPAA11j}.

As the hardness of computing Nash equilibria justifies approximate notions of the concept \cite{lmmEC03,csSODA07}, so Auletta et al. \cite{afppSODA12j} look at an approximation of logit equilibria that they call \emph{metastable distributions}.
These distributions aim to \emph{describe} regularities arising during the transient phase of the dynamics before stationarity has been reached.
Indeed, they are distributions that remain stable for a time which is long enough for the observer (in computer science terms, this time is assumed to be super-polynomial) rather than forever.
Roughly speaking, the stability of the distributions in this concept is measured in terms of the generations living some historical era, while stationary distributions remain stable throughout all the generations. When the convergence to logit equilibria is too slow, then there are generations which are outlived by the computation of the stationary distribution. For these generations, metastable distributions grant an otherwise impossible descriptive power. (We refer the interested reader to \cite{afppSODA12j} for a complete overview of the rationale of metastability.) It is unclear whether and which strategic games possess these distributions and if logit dynamics quickly reaches them.

The focus of this paper is the study of metastable distributions for the logit dynamics run on the class of potential games \cite{MS96}. Potential games are an important and widely studied class of games modeling many strategic settings. Each such game satisfies a number of appealing properties, the existence of pure Nash equilibria being one of them. A general study of metastability of potential games was left open by \cite{afppSODA12j} and assumes particular interest due to the known hardness results, see e.g. \cite{FPT04}, which suggest that the computation of pure Nash equilibria for them is an intractable problem, even for centralized algorithms. 

\paragraph{Our contribution.} 
We aim to prove asymptotic results, in terms of the number of players $n$ of potential games, concerning the super-polynomially long stability of metastable distributions, and the polynomial convergence time to them. This desiderata imposes some requirement on the potential games of interest, for otherwise a chaotic behavior (w.r.t. $n$) of the logit dynamics run on a game would not allow any meaningful asymptotic guarantee for it. We therefore identify a simple-to-describe class of potential games, termed \emph{asymptotically well-behaved}, for which the behavior of the dynamics is ``almost'' the same for any number of players. Intuitively, the potential function of a game in this class has a shape which is, in a sense, immaterial from the actual value of $n$. For example, the potential might be minimized when all the players agree on either strategy $x$ or $y$, maximized when half of the players play $x$ and the other half $y$, and increase as the number of players playing a strategy different from that played by the majority of players increases. The technical definition of this notion can be found in Section~\ref{sec:awd}. We stress that similar assumptions are made in related literature on logit dynamics either implicitly (as in \cite{msFOCS09,afpppSPAA11j}, where it is assumed that certain properties of the potential function do not change as $n$ changes), or explicitly, by considering specific games that clearly enjoy this property \cite{afppSODA12j}. Moreover, asymptotic results on the mixing time of Markov chains do require some assumption on the behavior of the chain (technically, the minimum bottleneck ratio must either be a polynomial or a super-polynomial) usually implicitly guaranteed by the definition of the chain at hand. Given that our objective is much more complex than bounding the mixing time (i.e., measuring asymptotically the transient phase of the chain -- defined on \emph{a} potential game -- and ascertain stability of \emph{and} convergence time to metastable distributions) a similar, yet stronger, requirement ought to be used. 


Together with the formalization of the class of games of interest, we formalize, building upon \cite{afppSODA12j}, the concept of asymptotic convergence/closeness to a metastable distribution, as a function of the number of players of a game. We then note, via the careful construction of an ad-hoc $n$-player potential game 
that not all potential games admit metastable distributions (cf. Section~\ref{sec:asymmeta}), thus showing formally that some form of restriction of games under consideration is necessary.

Our main result proves that any asymptotically well-behaved $n$-player potential game has a met\-a\-sta\-ble distribution for each starting profile of the logit dynamics.
These distributions remain stable for a time which is super-polynomial in $n$,
if one is content of being within distance $\varepsilon > 0$ from the distributions.
(The distance is defined in this context as the total variation distance, see below.) 
We also prove that the convergence rate to these distributions, called \emph{pseudo-mixing time},
is polynomial in $n$ for values of $\beta$ not too big
when compared to the (inverse of the) maximum difference in potential of neighboring profiles.
Note that when $\beta$ is very high then logit dynamics is ``close'' to the best-response dynamics
and therefore it is impossible to prove in general quick convergence results for potential games
due to the aforementioned hardness results.
We then give a picture which is as complete as possible relatively to the class of well-behaved potential games.
(To maintain $n$ as our only parameter of interest,
we assume that the logarithm of the number of strategies available to players is upper bounded by a polynomial in $n$;
this assumption can, however, be relaxed to prove bounds asymptotic in $n$
and in the logarithm of the maximum number of strategies.)

The proof of the above results consists of two main steps.
We first devise a sufficient property for any $n$-player (not necessarily potential) game 
to have, for any starting profile, a distribution that is metastable for a super-polynomial number of steps
and reached in polynomial time.
The main idea behind this sufficient condition
is that when the dynamics starts from a subset from which it is ``hard to leave'' and in which it is ``easy to mix'',
then the dynamics will stay for a long time close to the stationary distribution restricted to that subset.
Moreover, if a subset is ``easy-to-leave,'' then the dynamics will quickly reach a ``hard-to-leave'' subset.
The sufficient property  consists of a rather technical definition
that is intuitively a partition of the profiles into subsets that are asymptotically ``hard-to-leave \& easy-to-mix'' or ``easy-to-leave''.

The second step amounts to showing that any asymptotically well-behaved potential game admits such a partition. The proof of this result builds on a number of involved technical contributions, some of which might be of independent interest.
They mainly concern Markov chains.
The concepts of interest are mixing time (how long the chain takes to mix), bottleneck ratio (intuitively, how hard it is for the stationary distribution to leave a subset of states), hitting time (how long the chain takes to hit a certain subset of states) and spectral properties of the transition matrix of Markov chains. 
In particular, we define a procedure which computes the required partition for these games. We iteratively identify in the set of pure strategy profiles the ``hard-to-leave'' subsets.
To prove that these subsets are ``easy-to-mix'',
we firstly relate the pseudo-mixing time to the mixing time of a certain family of restricted Markov chains.
We then prove that the mixing time of these chains is polynomial by using a spectral characterization of the transition matrix of restricted Markov chains.
Finally, the proof that the remaining profiles are ``easy-to-leave'' mainly relies on a connection between bottleneck ratio and hitting time.
Specifically, we prove both an upper bound and a lower bound on the hitting time of a subset of states in terms of the bottleneck ratio of its complement.

We remark that, as a byproduct of our result, we essentially close an open problem of \cite{afppSODA12j} about metastability of the Curie-Weiss game.

In appendix, we complement the above contributions along two different dimensions. Firstly, the ``simple'' definition of asymptotic well-behaved games comes at the cost of sacrificing the full generality of the argument, in that a refinement of the condition actually suffices for our main result (see Section \ref{sec:awc}). Secondly, we prove further spectral results about the transition matrix of Markov chains defined by logit dynamics for a strategic (not necessarily potential) game  (cf. Section~\ref{sec:spectral}).
These results enhance our understanding of the dynamics and pave the way to further advancements in the area.

\hyphenation{Blu-me}
\paragraph{Related works.}
Blume \cite{blumeGEB93} introduced logit dynamics for modeling a noisy-rational behavior in game dynamics. Early works about this dynamics have focused on its long-term behavior: Blume \cite{blumeGEB93} showed that, for $2 \times 2$ coordination games
and potential games, the long-term behavior of the system is concentrated around a specific Nash
equilibrium; Al\`os-Ferrer and Netzer \cite{Ferrer2010} gave a general characterization of long-term behavior of logit dynamics for wider classes of games. Several works gave bounds on the time that the dynamics takes to reach specific Nash equilibria of a game: Ellison \cite{ellisonECO93} considered graphical coordination games on cliques and rings; Peyton Young \cite{youngTR00} and Montanari and Saberi \cite{msFOCS09} extended this work to more general families of graphs; Asadpour and Saberi \cite{asWINE09} focused on a class of congestion games.
Auletta et al. \cite{afppSAGT10} were the first to propose the stationary distribution of the logit dynamics Markov chain as a new equilibrium concept in game theory and to focus on the time the dynamics takes to get close to this equilibrium \cite{afpppSPAA11j}.

In physics, chemistry, and biology, metastability is a phenomenon related to the evolution of systems under noisy dynamics.
In particular, metastability concerns moves between regions of the state spaces and the existence of multiple, well separated time scales: at short time scales, the system appears to be in a quasi-equilibrium, but really explores only a confined region of the available space state, while, at larger time scales, it undergoes transitions between such different
regions.
Previous research about metastability aims at expressing typical features of a metastable state and to evaluate the transition time between metastable states. Several monographs on the subject are available in literature (see, for example, \cite{holSPA04,ovCUP05,bovICM06,holLNM09}). Auletta et al. \cite{afppSODA12j} applied metastability to probability distributions, introducing the concepts of metastable distribution and pseudo-mixing time and proving results for some specific potential games.

Roughly speaking, metastability is a kind of approximation for stationarity. From this point of view, metastable distributions may be likened to approximate equilibria. Two different approaches to approximated equilibria have been proposed in literature. In the multiplicative version \cite{csSODA07} a profile is an approximate equilibrium as long as each player gains at least a factor $(1 - \varepsilon)$ of the payoff she gets by playing any other strategy: these equilibria have been shown to be computationally hard both in general \cite{dasSODA11} and for congestion games \cite{svSTOC08}. In the additive version \cite{kmUAI02}, a profile is an approximate equilibrium as long as each player gains at least the payoff she gains by playing any other strategy minus a small additive factor $\varepsilon>0$: for these equilibria a quasi-polynomial time approximation scheme exists \cite{lmmEC03} but it is impossible to have an FPTAS \cite{cdtFOCS06}.

\section{Preliminary definitions}
A \emph{strategic game} $\G$ is a triple $([n], {\cal S}, {\cal U})$, where $[n] = \{1, \ldots, n\}$
is a finite set of players, ${\cal S} = (S_1, \ldots, S_n)$ is a family of non-empty finite sets ($S_i$ is the set of strategies available to player $i$), and ${\cal U} = (u_1, \ldots, u_n)$ is a family of utility functions (or payoffs), where $u_i \colon S \rightarrow \mathbb{R}$, $S = S_1 \times \ldots \times S_n$ being the set of all strategy profiles, is the utility function of player $i$. We let $m$ denote an upper bound to the size of players' strategy sets, that is, $m \geq \max_{i=1,\ldots,n} |S_i|$. We focus on (exact) \emph{potential games}, i.e., games for which there exists a function $\Phi \colon S \rightarrow \mathbb{R}$ such that for any pair of $\x, \y \in S$, $\y=(\x_{-i},y_i)$, we have:
$$
\Phi(\x) - \Phi(\y) = u_i(\y) - u_i(\x).
$$
Note that we use
the standard game theoretic notation $(\x_{-i},s)$ to mean the vector obtained from $\x$ by replacing the $i$-th entry with $s$;
i.e. $(\x_{-i},s)=(x_1,\ldots, x_{\scriptscriptstyle i-1},s,x_{\scriptscriptstyle i+1},\ldots,x_n)$.
A strategy profile $\x$ is a Nash equilibrium\footnote{In this paper, we only focus on pure Nash equilibria. We avoid explicitly mentioning it throughout.} if, for all $i$, $u_i(\x) \geq u_i(\x_{-i},s_i)$, for all $s_i \in S_i$. It is fairly easy to see that local minima of the potential function correspond to the Nash equilibria of the game.

For two vectors $\x,\y$, we denote with $H(\x,\y) = |\{i \colon x_i \neq y_i\}|$ the Hamming distance between $\x$ and $\y$. For every $\x \in S$, $N(\x) = \{\y \in S \colon H(\x,\y) = 1\}$ denotes the set of neighbors of $\x$ and $N_i(\x) = \{\y \in N(\x) \colon \y_{-i} = \x_{-i}\}$ is the set of those neighbors that differ exactly in the $i$-th coordinate.

In this paper, given a set of profiles $L$ we let $\overline{L}$ denote its complementary set, i.e., $\overline{L}= S \setminus L$.
Moreover, we say that a set $L$ is \emph{connected} if, for every $\x, \y \in L$, there are $\x_0, \x_1, \ldots, \x_k \in L$ with $\x_0 = \x$,
$\x_k = \y$ and $H(\x_{i-1},\x_i) = 1$ for each $i = 1, ldots, k$.

\subsection{Logit dynamics}
The logit dynamics has been introduced in~\cite{blumeGEB93} and runs as follows: at every time step (i) Select one player $i \in [n]$ uniformly at random; (ii) Update the strategy of player $i$ according to the \emph{Boltzmann distribution} with parameter $\beta$ over the set $S_i$ of her strategies. That is, a strategy $s_i \in S_i$ will be selected with probability
\begin{equation}\label{eq:updateprob}
\sigma_i(s_i \mid \x_{-i}) = \frac{1}{Z_i(\x_{-i})}  e^{\beta u_i(\x_{-i}, s_i)},
\end{equation}
where $\x_{-i}$ denotes the profile of strategies played at the current time step by players different from $i$, $Z_i(\x_{-i}) = \sum_{z_i \in S_i} e^{\beta u_i(\x_{-i}, z_i)}$ is the normalizing factor, and $\beta \geq 0$. One can see parameter $\beta$ as the inverse of the noise or, equivalently, the \emph{rationality level} of the system: indeed, from \eqref{eq:updateprob}, it is easy to see that for $\beta = 0$ player $i$ selects her strategy uniformly at random, for $\beta > 0$ the probability is biased toward strategies promising higher payoffs, and for $\beta$ that goes to infinity player $i$ chooses her best response strategy (if more than one best response is available, she chooses one of them uniformly at random).

The above dynamics defines a \emph{Markov chain} $\{ X_t \}_{t \in \mathbb{N}}$ with the set of  strategy profiles as state space, and where the transition probability from profile $\x = (x_1, \ldots, x_n)$ to profile $\y = (y_1, \ldots, y_n)$, denoted $P(\x,\y) = \Prob{\x}{X_1 = \y}$\footnote{Throughout this work, we denote with $\Prob{\x}{\cdot}$ the probability of an event conditioned on the starting state of the logit dynamics being $\x$.}, is zero if $H(\x,\y) \geq 2$ and it is $\frac{1}{n} \sigma_i(y_i \mid \x_{-i})$ if the two profiles differ exactly at player $i$. More formally, we can define the logit dynamics as follows.
\begin{definition}\label{def:LD}
Let $\G = ([n], \mathcal{S}, \mathcal{U})$ be a strategic game and let $\beta \geq 0$. The \emph{logit dynamics} for $\G$ is the Markov chain $\mathcal{M}_\beta = \left(\{ X_t \}_{t \in \mathbb{N}},S,P\right)$ where $S = S_1 \times \cdots \times S_n$ and
\begin{equation}\label{eq:transmatrix}
P(\x, \y) = \frac{1}{n} \cdot
\begin{cases}
\sigma_i(y_i \mid \x_{-i}), & \text{if } \y_{-i} = \x_{-i} \text{ and } y_i \neq x_i; \\
\overset{n}{\underset{i=1}{\sum}} \sigma_i(y_i \mid \x_{-i}), & \text{if } \y = \x; \\
0, & \text{otherwise;}
\end{cases}
\end{equation}
where $\sigma_i(y_i \mid \x_{-i})$ is defined in \eqref{eq:updateprob}.
\end{definition}
The Markov chain defined by \eqref{eq:transmatrix}
is ergodic~\cite{blumeGEB93}. Hence, from every initial profile $\x$ the distribution $P^t(\x, \cdot)$ over states of $S$ of the chain $X_t$ starting at $\x$ will eventually converge to a \emph{stationary distribution} $\pi$ as $t$ tends to infinity. As in \cite{afppSAGT10}, we call the stationary distribution $\pi$ of the Markov chain defined by the logit dynamics on a game $\G$, the \emph{logit equilibrium} of $\G$.
In general, a Markov chain with transition matrix $P$ and state space $S$ is said to be \emph{reversible} with respect to a distribution $\pi$ if, for all $\x,\y\in S$, it holds that
$
 \pi(\x) P(\x,\y)=\pi(\y) P(\y,\x).
$
If an ergodic chain is reversible with respect to $\pi$, then $\pi$ is its stationary distribution. Therefore when this happens, to simplify our exposition we simply say that the matrix $P$ is reversible. For the class of {potential games} the stationary distribution is the well-known \emph{Gibbs measure}.
\begin{theorem}[\cite{blumeGEB93}]
\label{thm:gibbsPot}
If $\G = ([n], \mathcal{S}, \mathcal{U})$ is a potential game with potential function $\Phi$, then the Markov chain given by \eqref{eq:transmatrix} is reversible with respect to the Gibbs measure
$
\pi(\x) = \frac{1}{Z} e^{-\beta \Phi(\x)},
$
where
$Z = \sum_{\y \in S} e^{-\beta \Phi(\y)}$
is the normalizing constant.
\end{theorem}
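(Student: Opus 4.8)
The plan is to verify the detailed balance (reversibility) equations directly: since an ergodic chain that is reversible with respect to a distribution $\pi$ has $\pi$ as its unique stationary distribution, it suffices to check that $\pi(\x) P(\x,\y) = \pi(\y) P(\y,\x)$ for every ordered pair $\x,\y \in S$, where $\pi(\x) = e^{-\beta \Phi(\x)}/Z$. The cases where $\x = \y$ or $H(\x,\y) \geq 2$ are immediate: in the former the identity is trivial, and in the latter both $P(\x,\y)$ and $P(\y,\x)$ vanish by \eqref{eq:transmatrix}, so both sides are $0$.

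The only substantive case is $H(\x,\y) = 1$. Suppose $\x$ and $\y$ differ exactly in the $i$-th coordinate, so $\y = (\x_{-i}, y_i)$ with $y_i \neq x_i$ and, crucially, $\y_{-i} = \x_{-i}$. Substituting the definitions of $\pi$, of $P(\x,\y) = \tfrac1n \sigma_i(y_i\mid \x_{-i})$, and of $\sigma_i$ from \eqref{eq:updateprob}, one gets
\[
\pi(\x) P(\x,\y) \;=\; \frac{1}{nZ}\cdot \frac{e^{-\beta \Phi(\x)}\, e^{\beta u_i(\x_{-i},y_i)}}{Z_i(\x_{-i})},
\qquad
\pi(\y) P(\y,\x) \;=\; \frac{1}{nZ}\cdot \frac{e^{-\beta \Phi(\y)}\, e^{\beta u_i(\x_{-i},x_i)}}{Z_i(\x_{-i})},
\]
where I used $\y_{-i}=\x_{-i}$ to see that the per-player normalizer $Z_i(\y_{-i})$ in the second expression equals $Z_i(\x_{-i})$ --- this is the one place the structure of the dynamics (only one coordinate changes per step) enters. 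Cancelling the common factors, the required equality becomes $-\Phi(\x) + u_i(\x_{-i},y_i) = -\Phi(\y) + u_i(\x_{-i},x_i)$, i.e.\ $\Phi(\x) - \Phi(\y) = u_i(\y) - u_i(\x)$, which is exactly the defining property of the potential $\Phi$ of $\G$. Reversibility follows, and ergodicity then gives that $\pi$ is the stationary distribution.

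I do not expect a genuine obstacle here: the statement is essentially a one-line computation once one observes that the per-player normalizers $Z_i(\cdot)$ depend only on the opponents' strategies and hence are shared by any two neighboring profiles differing in player $i$. The only care needed is the bookkeeping of which profile plays the role of $\x_{-i}$ in each of the two transition probabilities, and noting that the $1/n$ factor and the constant $Z$ cancel on both sides.
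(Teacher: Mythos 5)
Your detailed-balance verification is correct and complete: the only nontrivial case is $H(\x,\y)=1$, where the shared normalizer $Z_i(\x_{-i})=Z_i(\y_{-i})$ and the potential identity $\Phi(\x)-\Phi(\y)=u_i(\y)-u_i(\x)$ do exactly the work you describe. The paper itself states this theorem as a known result of Blume \cite{blumeGEB93} and gives no proof, but your argument is the standard one for this fact, so there is nothing further to compare.
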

It is worthwhile to notice that logit dynamics for potential games and Glauber dynamics for Gibbs distributions are two ways of looking at the same Markov chain (see \cite{blumeGEB93} for details). This, in particular, implies that we can write
$$\sigma_i(s_i \mid \x_{-i}) = \frac{e^{-\beta \Phi(\x_{-i}, s_i)}}{\sum_{z \in S_i} e^{-\beta \Phi(\x_{-i}, z)}}.$$

\subsection{Convergence of Markov chains}
\paragraph{Mixing time.}
Arguably, the principal notion to measure the rate of convergence of a Markov chain to its stationary distribution is the \emph{mixing time}, which is defined as follows. Let us set
$$
d(t)=
\max_{\x\in S}
\tv{P^t(\x,\cdot) - \pi},
$$
where the {\em total variation distance} $\tv{\mu - \nu}$ between two probability distributions $\mu$ and $\nu$ on the same state space $S$ is  defined as
$$
\tv{\mu - \nu}=\max_{A\subset S}
	|\mu(A)-\nu(A)| = \frac{1}{2} \sum_{\x \in S} | \mu(\x) - \nu(\x) |.
$$
For $0 < \varepsilon < 1/2$, the mixing time of the logit dynamics is  defined as
$$
\tm(\varepsilon) =
\min \{t\in\mathbb{N} \colon d(t)\leq\varepsilon\}.
$$
It is usual to set $\varepsilon = 1/4$ or $\varepsilon = 1/2e$. We write $\tm$ to mean $\tm(1/4)$ and we refer generically to ``mixing time'' when the actual value of $\varepsilon$ is immaterial. Observe that $\tm(\varepsilon)\leq\left\lceil\log_2 \varepsilon^{-1}\right\rceil\tm$.

\paragraph{Relaxation time.}
Another important measure of convergence for Markov chains is given by the \emph{relaxation time}. Let $P$ be the transition matrix of a Markov chain with finite state space $S$; let us label the eigenvalues of $P$ in non-increasing order
$$
\lambda_1\geq \lambda_2 \geq \dots \geq \lambda_{|S|}.
$$
It is well-known (see, for example, Lemma~12.1 in~\cite{lpwAMS08}) that $\lambda_1 = 1$ and, if $P$ is ergodic, then $\lambda_2<1$ and $\lambda_{|S|}>-1$. We set $\lambda^\star$ as the largest eigenvalue in absolute value other than $\lambda_1$,
$$
\lambda^\star = \max_{i=2, \ldots, |S|} \left\{ |\lambda_i| \right\}.
$$
The relaxation time $\tr$ of a Markov chain $\mathcal{M}$ is defined as
$$
\tr = {\frac{1}{1-\lambda^\star}}.
$$
The relaxation time is related to the mixing time by the following theorem (see, for example, Theorems 12.3 and 12.4 in \cite{lpwAMS08}).
\begin{theorem}[Relaxation time]\label{theorem:relaxation}
Let $P$ be the transition matrix of an ergodic and reversible Markov chain with state space
$S$ and stationary distribution $\pi$. Then
$$
(\tr-1)\log 2
\leq \tm\leq
\log\left({\frac{4}{\pimin}}\right) \tr,
$$
where
$\pimin=\min_{\x \in S} \pi(\x)$.
\end{theorem}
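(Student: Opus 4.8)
The plan is to run the classical spectral argument (this is exactly Theorems~12.3 and~12.4 of~\cite{lpwAMS08}), using reversibility to diagonalise $P$. Since $P$ is reversible with respect to $\pi$, it is self-adjoint on the inner-product space $\ell^2(\pi)$ with $\langle f,g\rangle_\pi=\sum_{\x\in S}\pi(\x)f(\x)g(\x)$; hence there is an $\ell^2(\pi)$-orthonormal basis of \emph{real} eigenfunctions $f_1\equiv 1,f_2,\dots,f_{|S|}$ with $Pf_j=\lambda_jf_j$. Expanding the point masses $\mathbf 1_{\{\y\}}/\pi(\y)$ in this basis, I would first record the two standard identities
$$
\frac{P^t(\x,\y)}{\pi(\y)}=\sum_{j=1}^{|S|}\lambda_j^t\,f_j(\x)f_j(\y),
\qquad
\sum_{j=1}^{|S|}f_j(\x)^2=\frac{1}{\pi(\x)} .
$$
Setting these up carefully (in particular checking the orthonormality normalisations and the fact that $f_1\equiv 1$ is the Perron eigenfunction) is the one place that needs a little care; everything afterwards is elementary.

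For the upper bound, fix $\x$, subtract the $j=1$ term (equal to $1$) from the first identity, and pass from $L^1(\pi)$ to $L^2(\pi)$ by Cauchy--Schwarz:
$$
4\,\tv{P^t(\x,\cdot)-\pi}^2
\le \sum_{\y\in S}\pi(\y)\Bigl(\tfrac{P^t(\x,\y)}{\pi(\y)}-1\Bigr)^2
= \sum_{j=2}^{|S|}\lambda_j^{2t}f_j(\x)^2
\le (\lambda^\star)^{2t}\!\!\sum_{j=2}^{|S|}f_j(\x)^2
\le \frac{(\lambda^\star)^{2t}}{\pimin},
$$
where the middle equality uses orthonormality and the last step the second identity together with $\pi(\x)\ge\pimin$. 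Hence $d(t)\le \tfrac12\pimin^{-1/2}(\lambda^\star)^t\le \tfrac12\pimin^{-1/2}e^{-t/\tr}$, since $\lambda^\star=1-1/\tr\le e^{-1/\tr}$. Taking $t=\log(4/\pimin)\,\tr$ forces the right-hand side below $1/4$, which yields $\tm\le \log(4/\pimin)\,\tr$ (the same computation in fact gives the slightly stronger $\tm\le\tfrac12\log(4/\pimin)\,\tr$, and, with a general target $\varepsilon$ replacing $1/4$, $\tm(\varepsilon)\le\log(1/(\varepsilon\pimin))\,\tr$).

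For the lower bound, I would take an eigenfunction $f=f_j$ whose eigenvalue satisfies $|\lambda_j|=\lambda^\star$, normalised so that $\|f\|_\infty=1$, and a profile $\z$ with $|f(\z)|=1$. Orthogonality to $f_1\equiv 1$ gives $\sum_{\y}\pi(\y)f(\y)=0$, so
$$
(\lambda^\star)^t=\bigl|\lambda_j^t f(\z)\bigr|=\bigl|P^t f(\z)\bigr|
=\Bigl|\sum_{\y\in S}\bigl(P^t(\z,\y)-\pi(\y)\bigr)f(\y)\Bigr|
\le \|f\|_\infty\sum_{\y\in S}\bigl|P^t(\z,\y)-\pi(\y)\bigr|
= 2\,\tv{P^t(\z,\cdot)-\pi}\le 2\,d(t).
$$
Evaluating at $t=\tm$ gives $(\lambda^\star)^{\tm}\le 2d(\tm)\le 1/2$, hence $\tm\ge \log 2/\log(1/\lambda^\star)$; the elementary inequality $\log(1/\lambda^\star)=-\log\bigl(1-(1-\lambda^\star)\bigr)\le \frac{1-\lambda^\star}{\lambda^\star}=\frac1{\tr-1}$ then gives $\tm\ge(\tr-1)\log 2$.

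I do not anticipate a real obstacle: the statement is a textbook fact and the argument above is self-contained. The only genuinely delicate points are (i) justifying the two spectral identities and (ii) knowing that a \emph{real} eigenfunction attaining $\lambda^\star$ exists, so that $P^tf=\pm(\lambda^\star)^tf$ — both of which are precisely what reversibility buys us — after which both inequalities collapse to a single use of Cauchy--Schwarz together with the scalar estimates $\lambda^\star\le e^{-1/\tr}$ and $-\log\lambda^\star\le 1/(\tr-1)$.
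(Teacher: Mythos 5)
Your proof is correct. Note that the paper does not prove this statement at all: it is imported verbatim as Theorems~12.3 and~12.4 of the cited reference \cite{lpwAMS08}, so there is no internal argument to compare against. What you have written is precisely the standard spectral proof from that reference --- the eigenfunction expansion of $P^t(\x,\y)/\pi(\y)$ plus Cauchy--Schwarz for the upper bound, and the $L^\infty$-normalised extremal eigenfunction orthogonal to constants for the lower bound --- and both halves check out, including the scalar estimates $\lambda^\star\le e^{-1/\tr}$ and $-\log\lambda^\star\le(1-\lambda^\star)/\lambda^\star$. Your computation in fact yields the slightly sharper $\tm\le\tfrac12\log(4/\pimin)\,\tr$, which of course implies the stated bound.
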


\paragraph{Hitting time.}
In some cases, we are interested in bounding the first time that the chain hits a profile in a certain set of states, also known as its \emph{hitting time}. Formally, for a set $L \subseteq S$, we denote by $\tauset{L}$ the random variable denoting the hitting time of $L$. Note that the hitting time, differently from mixing
and relaxation
time, depends on where the dynamics starts.
Some useful facts about hitting time are summarized in Appendix~\ref{sec:hitting}.

\paragraph{Bottleneck ratio.}
Quite central in our study is the concept of \emph{bottleneck ratio}. Consider an ergodic Markov chain with finite state space $S$, transition matrix $P$, and stationary distribution $\pi$. The probability distribution $Q(\x, \y) = \pi(\x)P(\x,\y)$  is of particular interest and is sometimes called the \emph{edge stationary distribution}. Note that if the chain is reversible then $Q(\x, \y)= Q(\y, \x)$. For any $L \subseteq S$, $L \neq \emptyset$, we let $Q(L, S 
\setminus L)=\sum_{\x \in L, \y \in S \setminus L} Q(\x,\y)$. Then the bottleneck ratio of $L$ is
$$
B(L) = \frac{Q(L,S \setminus L)}{\pi(L)}.
$$
We use the following theorem to derive lower bounds to the mixing time (see, for example, Theorem~7.3 in \cite{lpwAMS08}).
\begin{theorem}[Bottleneck ratio]\label{theorem:bottleneck}
Let $\mathcal{M} = \{ X_t \colon t \in \mathbb{N} \}$ be an ergodic Markov
chain with state space $S$,
transition matrix $P$, and stationary distribution $\pi$.
Let $L \subseteq S$ be any set with $\pi(L) \leq 1/2$.
Then the mixing time is
$$
\tm \geq \frac{1}{4 B(L)}.
$$
\end{theorem}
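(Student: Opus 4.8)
The plan is to run the chain from a deliberately chosen ``bad'' initial distribution and show it cannot escape $L$ quickly. Concretely, start from the stationary distribution conditioned on $L$, i.e. $\mu_0 := \pi_L$ where $\pi_L(\x) = \pi(\x)/\pi(L)$ for $\x \in L$ and $\pi_L(\x) = 0$ otherwise, and set $\mu_t := \pi_L P^t$. Since $\pi(\overline{L}) = 1 - \pi(L) \ge 1/2$, as long as most of the mass of $\mu_t$ is still trapped in $L$ we will have $\tv{\mu_t - \pi}$ bounded below by a constant; and because total variation distance is convex, this forces $d(t) = \max_{\x \in S}\tv{P^t(\x,\cdot)-\pi}$ to be at least as large, since $\mu_t = \sum_{\x}\pi_L(\x)P^t(\x,\cdot)$ is a $\pi_L$-average of the $P^t(\x,\cdot)$ while $\pi$ is fixed by that same averaging.

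The technical heart is therefore to bound the escaped mass $\mu_t(\overline{L})$. First I would record a domination fact: $\mu_t(\x) \le \pi(\x)/\pi(L)$ for every $\x$ and every $t$. This holds at $t=0$ by definition of $\pi_L$, and it is preserved under one step of the chain, since from a pointwise bound $\mu \le c\,\pi$ one gets $(\mu P)(\y) = \sum_{\x}\mu(\x)P(\x,\y) \le c\sum_{\x}\pi(\x)P(\x,\y) = c\,\pi(\y)$ using $\pi P = \pi$. Second, I would bound the one-step increase of escaped mass: writing $\mu_{t+1}(\overline{L}) - \mu_t(\overline{L}) = \sum_{\x \in L}\mu_t(\x)P(\x,\overline{L}) - \sum_{\x \in \overline{L}}\mu_t(\x)P(\x,L)$, dropping the nonnegative subtracted term, and invoking domination yields $\mu_{t+1}(\overline{L}) - \mu_t(\overline{L}) \le \frac{1}{\pi(L)}\sum_{\x \in L}\pi(\x)P(\x,\overline{L}) = \frac{Q(L,\overline{L})}{\pi(L)} = B(L)$. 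Since $\mu_0(\overline{L}) = 0$, telescoping gives $\mu_t(\overline{L}) \le t\,B(L)$.

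To finish, testing the set $\overline{L}$ in the definition of total variation distance gives $\tv{\mu_t - \pi} \ge \pi(\overline{L}) - \mu_t(\overline{L}) \ge \tfrac12 - t\,B(L)$, using $\pi(L) \le 1/2$, and by convexity $d(t) \ge \tv{\mu_t - \pi}$. Hence every integer $t < 1/(4B(L))$ satisfies $d(t) > 1/4$, so no such $t$ lies in $\{t : d(t) \le 1/4\}$, which is exactly the assertion $\tm = \tm(1/4) \ge 1/(4B(L))$.

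I do not expect a genuine obstacle here: this is a soft variational lower bound. The one step that deserves care is the domination claim $\mu_t \le \pi/\pi(L)$ pointwise, which is precisely what upgrades the trivial first-step identity $\mu_1(\overline{L}) = B(L)$ into an escape-rate bound uniform in $t$; without it one only controls a single step. I would also note, for later use in the paper, that this argument nowhere invokes reversibility, so it applies verbatim to the logit-dynamics chain of an arbitrary strategic game.
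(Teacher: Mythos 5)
Your proof is correct, and this is a result the paper does not prove at all --- it is quoted verbatim from the literature (Theorem~7.3 of \cite{lpwAMS08}), whose proof is exactly your argument: evolve $\pi_L$ under $P$, use the pointwise domination $\pi_L P^t \leq \pi/\pi(L)$ to bound the per-step escape of mass from $L$ by $B(L)$, telescope, and test the set $\overline{L}$ together with convexity of the total variation distance. Your closing remark that reversibility is never used is also accurate and consistent with the hypotheses of the cited theorem.
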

The bottleneck ratio is also strictly related to the relaxation time. Indeed, let
$$
B_\star = \min_{L \colon \pi(R) \leq 1/2} B(L),
$$
then the following theorem holds (see, for example, Theorem~13.14 in \cite{lpwAMS08}).
\begin{theorem}
 \label{thm:bottle_rel}
 Let $P$ be the transition matrix of an ergodic and reversible Markov chain with state space $S$. Let $\lambda_2$ be the second largest eigenvalue of $P$. Then
$$
\frac{B_\star^2}{2}
\leq 1 - \lambda_2 \leq
2 B_\star.
$$
\end{theorem}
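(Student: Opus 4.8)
The statement is the standard Cheeger-type inequality, so the plan is to reduce it to the variational characterization of $\lambda_2$ through the Dirichlet form. Since $P$ is reversible it is self-adjoint on $\ell^2(\pi)$ with inner product $\langle f,g\rangle_\pi=\sum_{\x}\pi(\x)f(\x)g(\x)$; writing $\mathcal{E}(f,f)=\langle (I-P)f,f\rangle_\pi=\tfrac12\sum_{\x,\y}(f(\x)-f(\y))^2 Q(\x,\y)$ for the Dirichlet form (the second identity using $\sum_\y Q(\x,\y)=\pi(\x)$ and the symmetry of $Q$ granted by reversibility), the min--max theorem gives $1-\lambda_2=\min\{\mathcal{E}(f,f)/\mathrm{Var}_\pi(f):\mathrm{Var}_\pi(f)>0\}$. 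The upper bound $1-\lambda_2\le 2B_\star$ is then immediate: take as test function $f=\mathbf{1}_L$ with $L$ attaining $B_\star$ (so $\pi(L)\le 1/2$); one computes $\mathcal{E}(\mathbf{1}_L,\mathbf{1}_L)=Q(L,S\setminus L)$ and $\mathrm{Var}_\pi(\mathbf{1}_L)=\pi(L)(1-\pi(L))\ge\pi(L)/2$, whence $1-\lambda_2\le 2Q(L,S\setminus L)/\pi(L)=2B(L)=2B_\star$.

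For the lower bound --- the genuine Cheeger direction --- I would start from an eigenfunction $f$ for $\lambda_2$ and, replacing $f$ by $-f$ if necessary, assume $\pi(\{f>0\})\le 1/2$; set $f_+=\max(f,0)$. The first step is to show $\mathcal{E}(f_+,f_+)\le(1-\lambda_2)\|f_+\|_\pi^2$: since $(I-P)f=(1-\lambda_2)f$ and $f f_+=f_+^2$, symmetrizing gives $(1-\lambda_2)\|f_+\|_\pi^2=\langle(I-P)f,f_+\rangle_\pi=\tfrac12\sum_{\x,\y}Q(\x,\y)(f(\x)-f(\y))(f_+(\x)-f_+(\y))$, and a short case analysis shows $(f(\x)-f(\y))(f_+(\x)-f_+(\y))\ge(f_+(\x)-f_+(\y))^2$ pointwise, so the last sum dominates $\mathcal{E}(f_+,f_+)$. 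The second step is a ``Cheeger inequality for nonnegative functions'': for any $g\ge 0$ with $\pi(\mathrm{supp}\,g)\le 1/2$ one has $\mathcal{E}(g,g)\ge\tfrac12 B_\star^2\|g\|_\pi^2$. Applying this to $g=f_+$ (whose support is $\{f>0\}$) and combining with the first step yields $1-\lambda_2\ge\mathcal{E}(f_+,f_+)/\|f_+\|_\pi^2\ge B_\star^2/2$.

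To prove the Cheeger inequality for nonnegative $g$, order the states so that $g(\x_1)\ge g(\x_2)\ge\cdots$, let $A_j=\{\x_1,\dots,\x_j\}$, and use the layer-cake identity $\sum_{\x,\y}|g^2(\x)-g^2(\y)|Q(\x,\y)=2\sum_j(g^2(\x_j)-g^2(\x_{j+1}))\,Q(A_j,S\setminus A_j)$. Every index $j$ contributing a nonzero term has $g(\x_j)>0$, hence $A_j\subseteq\mathrm{supp}\,g$ and $\pi(A_j)\le 1/2$, so $Q(A_j,S\setminus A_j)\ge B_\star\pi(A_j)$; an Abel summation then gives $\sum_{\x,\y}|g^2(\x)-g^2(\y)|Q(\x,\y)\ge 2B_\star\|g\|_\pi^2$. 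On the other hand, factoring $g^2(\x)-g^2(\y)=(g(\x)-g(\y))(g(\x)+g(\y))$ and applying Cauchy--Schwarz, together with $(a+b)^2\le 2(a^2+b^2)$ and $\sum_\y Q(\x,\y)=\pi(\x)$, yields $\sum_{\x,\y}|g^2(\x)-g^2(\y)|Q(\x,\y)\le\sqrt{2\mathcal{E}(g,g)}\cdot 2\|g\|_\pi$. Comparing the two bounds gives $B_\star^2\|g\|_\pi^2\le 2\mathcal{E}(g,g)$, i.e. $\mathcal{E}(g,g)\ge\tfrac12 B_\star^2\|g\|_\pi^2$, as required.

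The main obstacle is this second direction: the reduction from the eigenfunction to its positive part (with the correct choice of sign so that the support carries stationary mass at most $1/2$) and, above all, the layer-cake/Cauchy--Schwarz estimate, where the decomposition into level sets $A_j$ must be matched with Cauchy--Schwarz precisely so as to land the constant $1/2$, and where one must ensure that only sets with $\pi(A_j)\le 1/2$ ever feed into the bottleneck bound. The upper bound, by contrast, is a one-line consequence of the variational principle.
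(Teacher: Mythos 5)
Your proof is correct. The paper does not prove this statement itself --- it quotes it as a known result (Theorem~13.14 of Levin--Peres--Wilmer) --- and your argument is precisely the standard Cheeger-inequality proof from that reference: the indicator test function in the variational characterization for the easy direction, and the positive part of the $\lambda_2$-eigenfunction combined with the level-set/Cauchy--Schwarz estimate for the hard direction.
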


\section{Metastability}
In this section we give formal definitions of \emph{metastable distributions} and \emph{pseudo-mixing time}.
We also survey some of the tools used for our results.
For a more detailed description we refer the reader to~\cite{afppSODA12j}.

\begin{definition}\label{def:metastability}
Let $P$ be the transition matrix of a Markov chain with finite state space $S$. A probability distribution $\mu$ over $S$ is $(\varepsilon,\Time)$-\emph{metastable} for $P$ (or simply metastable, for short) if for every $0 \leq t \leq \Time$ it holds that
$$
\tv{\mu P^t - \mu} \leq \varepsilon.
$$
\end{definition}

The definition of metastable distribution captures the idea of a distribution that behaves approximately like the stationary distribution: if we start from such a distribution and run the chain we stay close to it for a ``long'' time. Some interesting properties of metastable distributions are discussed in \cite{afppSODA12j}, including the following lemmata, that turn out to be useful for proving our results.
\begin{lemma}[\cite{afppSODA12j}]
\label{lem:meta:bottleneck}
Let $P$ be a Markov chain with finite state space $S$ and stationary distribution $\pi$. For
a subset of states $L\subseteq S$ let $\pi_L$ be the stationary distribution conditioned on $L$, i.e.
\begin{equation}
\label{eq:piL}
\pi_L(\x)=\begin{cases}
\pi(\x)/\pi(L),& \text{if } \x\in L;\\
0,& \text{otherwise.}
\end{cases}
\end{equation}
Then, $\pi_L$ is $(B(L),1)$-metastable.
\end{lemma}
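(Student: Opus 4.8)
The plan is to verify Definition~\ref{def:metastability} directly with $\Time=1$. Since the chain evolves in discrete time, this only requires checking $\tv{\pi_L P^t-\pi_L}\le B(L)$ for $t=0$ and $t=1$. The case $t=0$ is immediate, as $\pi_L P^0=\pi_L$, so all the content lies in the case $t=1$, and for that I would compute the signed difference $\pi_L P-\pi_L$ coordinate by coordinate.

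Writing $(\pi_L P)(\y)=\sum_{\x\in S}\pi_L(\x)P(\x,\y)=\frac1{\pi(L)}\sum_{\x\in L}Q(\x,\y)$ and using stationarity of $\pi$ in the form $\pi(\y)=\sum_{\x\in S}Q(\x,\y)$, one obtains, for $\y\in L$,
$$(\pi_L P)(\y)-\pi_L(\y)=\frac1{\pi(L)}\Bigl(\sum_{\x\in L}Q(\x,\y)-\pi(\y)\Bigr)=-\frac1{\pi(L)}\sum_{\x\in\overline L}Q(\x,\y)\le 0,$$
while for $\y\in\overline L$, since $\pi_L(\y)=0$,
$$(\pi_L P)(\y)-\pi_L(\y)=\frac1{\pi(L)}\sum_{\x\in L}Q(\x,\y)\ge 0.$$

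To finish, I would invoke the standard identity $\tv{\mu-\nu}=\sum_{\z\colon\mu(\z)>\nu(\z)}(\mu(\z)-\nu(\z))$, valid for any two distributions on $S$. By the sign analysis above, $\{\z\colon (\pi_L P)(\z)>\pi_L(\z)\}\subseteq\overline L$, so
$$\tv{\pi_L P-\pi_L}=\sum_{\y\in\overline L}(\pi_L P)(\y)=\frac1{\pi(L)}\sum_{\y\in\overline L}\sum_{\x\in L}Q(\x,\y)=\frac{Q(L,S\setminus L)}{\pi(L)}=B(L),$$
which is exactly the claim (in fact with equality). Note that only stationarity of $\pi$ is used, not reversibility, so no extra hypothesis is needed. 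There is no genuine obstacle here; the one point deserving care is the sign bookkeeping that lets one drop the absolute values in the total variation distance and recognise the bottleneck flow $Q(L,S\setminus L)$ on the nose.
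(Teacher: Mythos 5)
Your proof is correct. The paper itself states Lemma~\ref{lem:meta:bottleneck} without proof, importing it from \cite{afppSODA12j}, and your argument is the standard one used there: the sign analysis showing $(\pi_L P)(\y)\le\pi_L(\y)$ on $L$ and $\ge$ on $\overline{L}$, combined with the identity $\tv{\mu-\nu}=\sum_{\z\colon\mu(\z)>\nu(\z)}(\mu(\z)-\nu(\z))$, gives exactly $\tv{\pi_L P-\pi_L}=Q(L,S\setminus L)/\pi(L)=B(L)$, using only stationarity (the implicit hypothesis $\pi(L)>0$, needed for $\pi_L$ and $B(L)$ to be defined at all, is the only thing worth flagging).
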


\begin{lemma}[\cite{afppSODA12j}]
\label{lem:meta:1}
If $\mu$ is $(\varepsilon,1)$-metastable for $P$ then $\mu$ is
$(\varepsilon \Time, \Time)$-metastable for $P$.
\end{lemma}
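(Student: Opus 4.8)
The plan is to control $\tv{\mu P^t - \mu}$ for all $t\le\Time$ by a telescoping argument together with the fact that a stochastic matrix is a contraction for the total variation distance. First I would write, for any integer $t\ge 1$,
\[
\mu P^t - \mu \;=\; \sum_{s=1}^{t}\bigl(\mu P^s - \mu P^{s-1}\bigr),
\]
so that the triangle inequality for $\tv{\cdot}$ gives
\[
\tv{\mu P^t - \mu}\;\le\;\sum_{s=1}^{t}\tv{\mu P^s - \mu P^{s-1}}.
\]

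The key step is to bound each summand by $\varepsilon$. For $s\ge 2$ we have $\mu P^s - \mu P^{s-1} = (\mu P^{s-1})P - (\mu P^{s-2})P$, and the standard fact that applying a stochastic matrix does not increase total variation distance, namely $\tv{\nu_1 P - \nu_2 P}\le\tv{\nu_1-\nu_2}$ for any two probability distributions $\nu_1,\nu_2$ (immediate from the coupling characterization of $\tv{\cdot}$, or by direct computation), yields $\tv{\mu P^s - \mu P^{s-1}}\le\tv{\mu P^{s-1} - \mu P^{s-2}}$. Iterating this down to $s=1$ gives $\tv{\mu P^s - \mu P^{s-1}}\le\tv{\mu P - \mu}$ for every $s\ge 1$, and $\tv{\mu P - \mu}\le\varepsilon$ is exactly the hypothesis that $\mu$ is $(\varepsilon,1)$-metastable (take $t=1$ in Definition~\ref{def:metastability}). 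Substituting back, $\tv{\mu P^t - \mu}\le t\varepsilon\le\Time\,\varepsilon$ for every $0\le t\le\Time$ (the case $t=0$ being trivial), which is precisely the assertion that $\mu$ is $(\varepsilon\Time,\Time)$-metastable.

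There is essentially no hard step here; the argument is a one-line telescoping combined with a textbook contraction property. The only point requiring a little care is that $\mu P - \mu$ is a signed measure of total mass zero rather than a probability distribution, so the contraction must be invoked in the form $\tv{\nu_1 P - \nu_2 P}\le\tv{\nu_1-\nu_2}$ applied to the pair $(\mu P^{s-1},\mu P^{s-2})$ of genuine distributions, rather than applied formally to the difference itself.
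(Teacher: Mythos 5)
Your proof is correct; the paper states this lemma without proof (citing \cite{afppSODA12j}), and the telescoping decomposition $\mu P^t-\mu=\sum_{s=1}^{t}(\mu P^s-\mu P^{s-1})$ combined with the contraction property of stochastic matrices under total variation is exactly the standard argument used there. Your closing remark --- that the contraction must be invoked on the pair of genuine distributions $(\mu P^{s-1},\mu P^{s-2})$ rather than formally on the signed measure $\mu P-\mu$ --- is the right point of care and is handled correctly.
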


Among all metastable distributions, we are interested in the ones that are quickly reached from a (possibly large) set of states. This motivates the following definition.

\begin{definition}\label{def:pseudomixing}
Let $P$ be the transition matrix of a Markov chain with state space $S$, let $L \subseteq S$ be a non-empty set of states and let $\mu$ be a probability distribution over $S$. We define the \emph{pseudo-mixing time} $\pmt{\mu}{L}{\varepsilon}$ as
$$
\pmt{\mu}{L}{\varepsilon} = \inf \{ t \in \mathbb{N} \colon \tv{P^t(\x,\cdot) - \mu} \leq \varepsilon \mbox{ for all } \x \in L \}.
$$
\end{definition}
Since the stationary distribution $\pi$ of an ergodic Markov chain is reached within $\varepsilon$ in time $\tm(\varepsilon)$ from every state, according to Definition~\ref{def:pseudomixing} we have that $\pmt{\pi}{S}{\varepsilon} = \tm(\varepsilon)$.
The following simple lemma connects metastability and pseudo-mixing time.
\begin{lemma}[\cite{afppSODA12j}]\label{lem:metaonetot}
Let $\mu$ be a $(\varepsilon,\Time)$-metastable distribution and let $L \subseteq S$ be a set of states such that $t_\mu^L(\varepsilon)$ is finite. Then for every $\x \in L$ it holds that
$
\tv{P^t(\x,\cdot) - \mu} \leq 2 \varepsilon \mbox{ for every } t_\mu^L(\varepsilon) \leq t \leq t_\mu^L(\varepsilon) + \Time.
$
\end{lemma}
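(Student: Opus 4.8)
The plan is to derive the bound from the definition of pseudo-mixing time by two applications of the triangle inequality for total variation distance, the only nontrivial ingredient being the elementary fact that multiplying a distribution by a stochastic matrix never increases total variation distance.

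First I would fix an arbitrary $\x \in L$ and set $s = t_\mu^L(\varepsilon)$, which is finite by hypothesis; by Definition~\ref{def:pseudomixing} this gives $\tv{P^s(\x,\cdot) - \mu} \le \varepsilon$. Any $t$ with $s \le t \le s + \Time$ can be written as $t = s + r$ with $0 \le r \le \Time$, and writing $\nu := P^s(\x,\cdot)$ we have $P^t(\x,\cdot) = \nu P^r$, so
$$
\tv{P^t(\x,\cdot) - \mu} = \tv{\nu P^r - \mu} \le \tv{\nu P^r - \mu P^r} + \tv{\mu P^r - \mu}.
$$
Then I would bound the two terms separately. The second term is at most $\varepsilon$ because $0 \le r \le \Time$ and $\mu$ is $(\varepsilon,\Time)$-metastable, so Definition~\ref{def:metastability} applies directly. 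For the first term I would invoke the contraction property: for any distributions $\rho_1,\rho_2$ on $S$, $2\tv{\rho_1 P - \rho_2 P} = \sum_{\y}\left|\sum_{\x}(\rho_1(\x)-\rho_2(\x))P(\x,\y)\right| \le \sum_{\x}|\rho_1(\x)-\rho_2(\x)|\sum_{\y}P(\x,\y) = 2\tv{\rho_1-\rho_2}$; iterating $r$ times yields $\tv{\nu P^r - \mu P^r} \le \tv{\nu - \mu} \le \varepsilon$. Adding the two estimates gives $\tv{P^t(\x,\cdot) - \mu} \le 2\varepsilon$, and since $\x\in L$ and $t$ were arbitrary in the stated ranges the claim follows.

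I do not expect a genuine obstacle here: the whole argument is bookkeeping around the triangle inequality, and the single substantive step — that stochastic matrices contract the total variation distance — is standard. The one point I would state explicitly is why the guaranteed window has length exactly $\Time$: the metastability bound is consumed for the exponent $r = t - s$, so the ``metastability clock'' is effectively reset at time $t_\mu^L(\varepsilon)$ rather than at time $0$, which is precisely what limits the interval to $[\,t_\mu^L(\varepsilon),\, t_\mu^L(\varepsilon)+\Time\,]$.
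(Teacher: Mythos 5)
Your proof is correct and is essentially the standard argument for this lemma (which the paper imports from \cite{afppSODA12j} without reproving): split via the triangle inequality at time $t_\mu^L(\varepsilon)$, bound one term by the definition of pseudo-mixing time combined with the contraction of total variation distance under a stochastic matrix, and the other by the metastability of $\mu$. No gaps; the remark about the metastability clock being reset at $t_\mu^L(\varepsilon)$ correctly explains the length of the guaranteed window.
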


\subsection{Asymptotic metastability}
\label{sec:asymmeta}
The notions and results introduced above apply to a single Markov chain.
Auletta et al. \cite{afppSODA12j} adopted these notions to evaluate the behavior
of the logit dynamics for potential games, as $n$ grows. 
Therefore, we do not have a single Markov chain
but a sequence of them, one for each number $n$ of players,
and need to consider an asymptotic counterpart of the notions above.
Auletta et al. \cite{afppSODA12j}, in fact, showed that the logit dynamics for specific classes of $n$-player potential games enjoys the following property,
that we name \emph{asymptotic metastability}.
\begin{definition}\label{def:asmetastability}
Let $\G$ be an $n$-player strategic game.
We say that the logit dynamics for $\G$ is \emph{asymptotically metastable} for the rationality level $\beta$
if there are constants $n_0, \varepsilon > 0$, a polynomial $p=p_\varepsilon$ and a super-polynomial $q=q_\varepsilon$
such that for each $n \geq n_0$, the logit dynamics for the $n$-player game $\G$
converges in time at most $p(n)$ from each profile of $\G$ to a $(\varepsilon, q(n))$-metastable distribution.
\end{definition}
When the logit dynamics for a game is (not) asymptotically metastable, we say for brevity that the game itself is (not) asymptotically metastable. Unfortunately, asymptotic metastability cannot be proved for every (potential) game as shown next.
\begin{lemma}
\label{lem:potg_no_asymMeta}
There is a $n$-player (potential) game $\G$ 
which 
is not asymptotically metastable for any $\beta$ sufficiently high and any $\varepsilon < \frac{1}{4}$. 
\end{lemma}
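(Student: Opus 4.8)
The plan is to construct an $n$-player potential game whose logit dynamics, started from one designated profile $\x_0$, is \emph{forced to keep moving} forever along a path of super-polynomial length; such a dynamics can never stay close to a single distribution for the super-polynomially many steps that metastability demands.

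\emph{The game.} Let $m:=2^n$ and let $\G$ be the $n$-player game in which players $1$ and $2$ have strategy set $\{0,1,\dots,m-1\}$ while players $3,\dots,n$ have a single dummy strategy; writing a profile as $\x=(a,b,\ast,\dots,\ast)$, put $\Phi(\x)=(a-b)^2-2a$. Taking $u_i\equiv-\Phi$ for all $i$ makes $\G$ an exact potential game with potential $\Phi$, and $\log m=n$ is polynomial in $n$, as required. One checks at once that $a'\mapsto\Phi(a',b)$ is minimised at $a'=\min(b+1,m-1)$, that $b'\mapsto\Phi(a,b')$ is minimised at $b'=a$, and that $(m-1,m-1)$ is the \emph{unique} local minimum of $\Phi$, hence the unique Nash equilibrium. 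The quadratic penalty $(a-b)^2$ plays two roles: best responses move a coordinate by only \emph{one} unit at a time, and every gap in $\Phi$ between a best and a second-best response is $\Theta(1)$ (with the normalising constants $Z_i$ bounded despite the $2^n$ strategies). Hence there is an \emph{absolute} constant $\beta_0$ such that for all $\beta\ge\beta_0$ the logit dynamics imitates best-response dynamics step by step, uniformly in $n$, and so from $\x_0:=(0,0,\ast,\dots,\ast)$ it climbs the staircase $(0,0),(1,0),(1,1),(2,1),(2,2),\dots,(m-1,m-1)$ of $2m-1$ profiles, advancing by one profile whenever the ``useful'' active player is drawn. Two facts follow: (i) reaching $(m-1,m-1)$ from $\x_0$ needs $\Theta(n\,2^n)$ steps, so $\tm$ is super-polynomial; (ii) the Gibbs measure $\pi\propto e^{-\beta\Phi}$ is concentrated within $O(1)$ coordinates of the corner $(m-1,m-1)$, giving mass $e^{-\Omega(\beta m)}$ to every ``low'' set $\{a+b\le\mathrm{poly}(n)\}$.

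\emph{The refutation.} Fix any $\varepsilon<1/4$ and any $\beta\ge\beta_0$ and assume, for contradiction, that $\G$ is asymptotically metastable for $\beta$: there are $n_0$, a polynomial $p$, a super-polynomial $q$, and for each $n\ge n_0$ a $(\varepsilon,q(n))$-metastable $\mu=\mu_n$ with $\tv{P^{t_0}(\x_0,\cdot)-\mu}\le\varepsilon$ for some $t_0=t_0(n)\le p(n)$. Since $\pmt{\mu}{\{\x_0\}}{\varepsilon}\le t_0\le p(n)$, Lemma~\ref{lem:metaonetot} yields $\tv{P^{t}(\x_0,\cdot)-\mu}\le 2\varepsilon$ for all $t\in[p(n),q(n)]$. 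Let $\rho(\x):=a+b$ and $A:=\{\x:\rho(\x)\le M\}$, where $M=M_{p,\varepsilon}(n)$ is a polynomial chosen below. Two estimates collide. First, the conditional one-step drift of $\rho$ is $O(1/n)$ (off the corner it is $\tfrac1n((b+1-a)+(a-b))+O(n^{-1}e^{-\beta})$, i.e.\ essentially $1/n$), so $\Expec{\x_0}{\rho(X_{t_0})}=O(p(n))$; choosing $M$ large enough, Markov's inequality gives $\Prob{\x_0}{X_{t_0}\notin A}\le\varepsilon/4$, hence $\mu(A)\ge P^{t_0}(\x_0,A)-\varepsilon\ge 1-\tfrac54\varepsilon>\tfrac12$. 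Second, at time $t_1:=q(n)$ the chain from $\x_0$, with probability $1-o(1)$, has either progress $\Omega(q(n)/n)\gg M$ (still climbing) or has entered the corner region $\{\rho\ge 2(m-1)-O(1)\}$; and from the corner, returning to $A$ within $q(n)$ steps has probability $q(n)\,e^{-\Omega(\beta m)}$ — while if $q(n)$ is so large that the chain has actually mixed, then $P^{q(n)}(\x_0,A)\approx\pi(A)=e^{-\Omega(\beta m)}$ anyway. In every case $P^{q(n)}(\x_0,A)\to 0$, so $\mu(A)\le P^{q(n)}(\x_0,A)+2\varepsilon\to 2\varepsilon<\tfrac12$, contradicting $\mu(A)>\tfrac12$ for large $n$. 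Hence $\G$ is not asymptotically metastable, for any $\beta\ge\beta_0$ and any $\varepsilon<1/4$.

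\emph{The main obstacle.} The real work is making the ``staircase'' picture quantitative at a \emph{finite} $\beta$: one must show that the moves leaving the staircase occur with probability $O(e^{-\beta})$ per active-player update, displace the chain only $O(1)$ from it in expectation, and are corrected before they accumulate, so that $\rho(X_t)$ is a light-tailed process with conditional drift $\Theta(1/n)$ off the corner and $\le 0$ only there; and one must check that the corner is genuinely a trap of $\pi$-mass $1-e^{-\Omega(\beta m)}$. Both rest on the $(a-b)^2$ term, which keeps all potential gaps and all normalising constants bounded so that ``logit $\approx$ best response'' holds uniformly in $n$ (whence the absolute $\beta_0$). Given these structural facts, the three ingredients used above — the Markov bound at time $t_0$, a martingale/Azuma-type lower tail for $\rho(X_{q(n)})$ with sub-exponential increments, and the $\pi$-mass estimate for the corner barrier — are routine.
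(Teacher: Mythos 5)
Your construction is correct in outline but follows a genuinely different route from the paper's. The paper diagonalizes against all (polynomial, super-polynomial) pairs: it builds a two-strategy-per-player game (with the potential depending on $\beta$) in which the single profile $(1,\ldots,1)$ has bottleneck ratio $\varepsilon/\Time(n)$ for a function $\Time$ that is infinitely often larger than each polynomial $p_j$ and infinitely often smaller than each super-polynomial $q_j$; it then argues that every candidate metastable distribution is either concentrated near that trap (hence stable only for the too-short time $\Time(n)$, via Lemmas~\ref{lem:meta:bottleneck} and~\ref{lem:meta:1}) or unreachable from the trap in polynomial time (via Lemma~\ref{lemma:hitting_bottle_ub}). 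You instead build a fixed, $\beta$-independent game with exponentially many strategies whose dynamics from $(0,0)$ is a directed walk up an exponentially long staircase, and you derive the contradiction from the fact that the laws $P^{t_0}(\x_0,\cdot)$ and $P^{q(n)}(\x_0,\cdot)$ are nearly mutually singular while Lemma~\ref{lem:metaonetot} forces both to be within $O(\varepsilon)$ of the same $\mu$; the arithmetic ($1-\tfrac54\varepsilon$ versus $2\varepsilon+o(1)$) indeed works for all $\varepsilon<1/4$. What the paper's approach buys is that its counterexample exhibits exactly the pathology (a bottleneck ratio that is neither polynomial nor super-polynomial) that the later definition of asymptotically well-behaved games is designed to exclude, which motivates Section~\ref{sec:awd}. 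Your game, by contrast, has $\Delta(n)=\Theta(4^n)$, so it sits outside the $\beta\le\rho(n)/\Delta(n)$ range of Theorem~\ref{thm:awd_part} anyway and illustrates a different failure mode (a super-polynomially long transient with no small bottleneck); it is more elementary and robust in that the game does not depend on $\beta$ or on an enumeration of function pairs, but it tells the reader less about why the paper's subsequent restriction is the right one.

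The one place where your write-up is thinner than it should be is the estimate $P^{q(n)}(\x_0,A)\to 0$, which carries essentially all the probabilistic weight. Two sub-claims are deferred: (a) the lower tail for $\rho(X_{q(n)})$ on the event that the corner has not been hit, which needs an Azuma-type bound for a stopped martingale with unbounded (sub-Gaussian, not a.s.\ bounded) increments — doable, but it is the bulk of the proof, not a footnote; and (b) the corner-to-$A$ return estimate, where your dichotomy ``union bound $q(n)e^{-\Omega(\beta m)}$ versus `$q(n)$ so large the chain has mixed''' leaves an intermediate regime that you do not verify is empty (it is, since $\tm = O(\mathrm{poly}(n)\cdot\beta 4^n)\ll e^{c\beta 2^n}$, but that itself requires a bottleneck or $\pimin$ computation). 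The cleanest fix for (b) avoids the case split entirely: since the transition matrix has nonnegative spectrum (Lemma~\ref{lemma:diod_conj_restricted} with $P'=P$) and is reversible, $P^t(\y,\z)\le\sqrt{\pi(\z)/\pi(\y)}$ for all $t$, which gives $\sup_{t}\sup_{\y}P^t(\y,A)\le |A|\cdot 2^{n}e^{-\Omega(\beta m)}$ uniformly over times and over starting states $\y$ in the corner region, with no union bound over steps. With (a) and (b) written out, the argument is complete.
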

\begin{proof}
We will show a game $\G$ and a profile $\x$ of this game such that  for any $0 < \varepsilon < 1/4$, for infinitely many value of $n$ and for each polynomial $p$ in $n$ and each super-polynomial $q$ in $n$, the logit dynamics for $\G$ does not converge in time at most $p(n)$ from $\x$ to any $(\varepsilon, q(n))$-metastable distribution, even if the mixing time of the dynamics is larger than $p$.

Consider now the following pairs $(p_j, q_j)$, where $p_j = n^j$ and $q_j = \exp\left(\log n \cdot {\log^{(j)}} n\right)$,
where $\log^{(j)}$ is the $j$-th functional iteration of the logarithm function.
Let us denote as $n_j$ a value such that $p_j(n_j) < q_j(n_j) - \varepsilon$.
Such a value surely exists since $p$ is polynomial and $q$ is super-polynomial.
Moreover, observe that for each $n > n_j$, we have $p_j(n) < q_j(n) - \varepsilon$.
Thus, we can assume without loss of generality that $1 = n_0 < n_1 < n_2 < \ldots$.
Now let $\Time$ be a function that is asymptotically sandwiched between $p_j$ and $q_j$, for any $j$.
This can be guaranteed by letting $\Time$ be a function such that
$\Time(n) = q_j(n) - \varepsilon$ for $j$ such that $n_{j-1} < n \leq n_j$.
Note that for any $p_j$ and for any $n \geq n_j$, we have
$\Time(n) = q_k(n) - \varepsilon > p_k(n) \geq p_j(n)$,
where $k \geq j$ is such that $n_{k-1} < n \leq n_k$.
Similarly, for any $q_j$ and for any $n \geq n_j$ we have
$\Time(n) = q_k(n) - \varepsilon < q_k(n) \leq q_j(n)$. The situation is depicted in Figure \ref{fig:AWDpotcl}.

\begin{figure}[ht]
\centering
\includegraphics{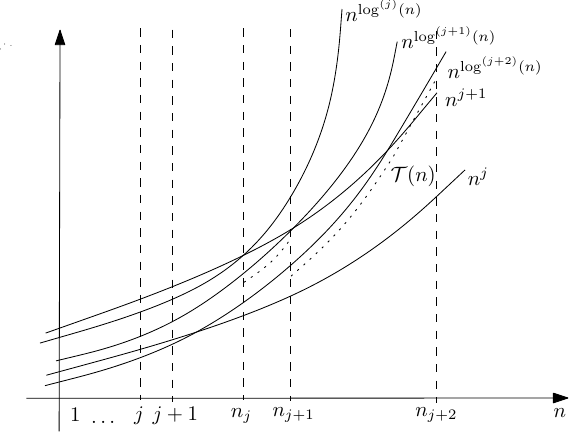}
\caption{The figure shows how $\Time(n)$ is built around the functions $p_j$'s and $q_j$'s
so that it is hard to classify $\Time(n)$ either as a polynomial or as a super-polynomial.
Note that $\Time(n)$  is the inverse of the bottleneck ratio of profile $(1, \ldots, 1)$
and thus it describes the time needed to leave that profile. }\label{fig:AWDpotcl}
\end{figure}

Let now $\G$ be a $n$-player potential game such that each player has exactly two strategies, say $0$ and $1$.
Consider the potential function $\Phi$ such that
for any $t = 0, \ldots, n-1$ and any profile $\x$ wherein exactly $t$ players play strategy $1$ we have 
$\Phi(\x) = n - t$, while $\Phi(1, \ldots, 1) = 1 + k_n$,
where $k_n = \frac{1}{\beta} \log \left(\frac{\Time(n)}{\varepsilon} - 1\right)$,
$\beta$ being the rationality parameter of the logit dynamics. 
Observe that if there is a pair $(p, q)$ with $p$ polynomial in $n$ and $q$ super-polynomial in $n$
such that it is possible to prove that the logit dynamics for $\G$ is asymptotic metastable with parameters $p$ and $q$,
then there is $j^\star$ such that the results holds also with $(p_{j^\star}, q_{j^\star})$ in place of $p$ and $q$,
where $(p_{j^\star}, q_{j^\star})$ corresponds to one of the pair of  functions described above in the definition of the game $\G$.
Hence, in order to prove the lemma is sufficient to show that it holds only for pairs $(p_j,q_j)$ as described above.

Note that, by taking $\beta$ sufficiently high, we have that: (i) $\pi(0, \ldots, 0) \geq \frac{1}{2}$; (ii) there exists a $j$ such that, for any subset
$L \subseteq \{0,1\}^n \setminus \{(0, \ldots, 0), (1, \ldots, 1)\}$, the bottleneck ratio $B(L)$ is at least the inverse of $p_j$; (iii) the bottleneck ratio $B(1, \ldots, 1) = \frac{\varepsilon}{\Time(n)}$. 

Firstly note that the mixing time of logit dynamics for $\G$ is not polynomial. Indeed, from Theorem~\ref{theorem:bottleneck}, it follows that the mixing time is at least $\frac{\Time}{4\varepsilon}$.
However, as suggested above, for each $p_j$, we have $\frac{\Time}{4\varepsilon} > \Time > p_j$ for infinitely many $n$, and hence the mixing time is asymptotically greater than any polynomial $p_j$.

We next discuss that no metastable distribution is stable for a super-polynomial time or, even if there is one, it cannot be reached in polynomial time. From Lemma~\ref{lem:meta:bottleneck} and Lemma~\ref{lem:meta:1}, we have that for each $n$ the distribution $\pi_1$ that assigns probability 1 to the profile $(1, \ldots, 1)$ is $(\varepsilon, \Time(n))$-metastable. However, as suggested above, for each $q_j$, the function $\Time$ is smaller than $q_j$ for infinitely many $n$. Thus, the distribution $\pi_1$ is metastable for time that is asymptotically smaller than any super-polynomial $q_j$.
Note that this argument extends to any $(3 \varepsilon, \Time(n))$-metastable distribution $\mu$ that is within distance $2\varepsilon$ from $\pi_1$.
Finally, observe that, the remaining distributions that are far from $\pi_1$ cannot be reached quickly from $(1,\ldots, 1)$.
In fact, from Lemma~\ref{lemma:hitting_bottle_ub} below, for any polynomial $p_j$ the probability that the logit dynamics leaves the profile $(1, \ldots, 1)$
in $p_j$ steps, is at most $\frac{\varepsilon \cdot p_j}{\Time - \varepsilon} < \varepsilon$, for $n$ sufficiently large.
Hence, for any $p_j$, starting from $(1, \ldots, 1)$ the pseudo-mixing time of any distribution $\mu$ that is at least $2\varepsilon$-far
from $\pi_1$ is asymptotically greater than $p_j$.
\end{proof}

\begin{remark}
\label{remark:weak_notion}
 The game described in the proof of Lemma~\ref{lem:potg_no_asymMeta} 
 also shows the necessity of having a definition of asymptotic metastability as the one given in Definition \ref{def:asmetastability}.
 
 Consider, indeed, the weaker definition of asymptotic metastability in which for each $n$ there is a polynomial $p_n(n)$ and a super-polynomial $q_n(n)$ governing convergence and stability time of metastability, respectively (i.e., a definition in which the order of quantifiers is reversed).
 This concept might at first glance look meaningful.
 However, it is instead of scarce significance as the distinction between polynomials and super-polynomials might become null in the limit.
 
 The game in Lemma~\ref{lem:potg_no_asymMeta} exemplifies this phenomenon,
 since it does not satisfies the metastability notion given in the one in Definition \ref{def:asmetastability},
 but it satisfies this weak notion.
 Indeed, for any $n$, there is a super-polynomial function, namely $\tilde{q}_n = q_j - \varepsilon$ for $j$ such that $n_{j-1} < n \leq n_j$,
 such that $\pi_1$ is $(\varepsilon, \tilde{q}_n(n))$-metastable. Obviously, the pseudo-mixing time of this distribution from the profile $(1, \ldots, 1)$
 is 1. From the remaining profiles, the dynamics quickly converges to the stationary distribution for any $\beta$ sufficiently large
 (this follows from well-known results about birth-and-death chains).
\end{remark}

Motivated by the result above, we next give a sufficient property for \emph{any} (not necessarily potential) game to be asymptotically metastable. Such a condition will help identifying the class of asymptotically metastable potential games.

\section{Asymptotic metastability and partitioned games}
\label{sec:partition}
In this section we will introduce the concept of game \emph{partitioned} by the logit dynamics. Then we give examples of games satisfying this notion. Finally, we prove that games partitioned by the logit dynamics are asymptotically metastable.

Henceforth, we will assume that the logarithm of the maximal number of strategies available to a player is at most a polynomial in $n$. Specifically, we denote as $m(\cdot)$ the function such that $m(n)$ is the maximum number of strategies available to a player in $\G$ when the number of players is $n$. Then, we will assume that the function $\log m(\cdot)$ is at most polynomial in its input. We can easily drop this assumption by asking for results that are asymptotic in $\log |S|$, where $|S|$ denotes the function returning the number of profiles of the game: each one of our proof can be rewritten according to this requirement with very small changes. Note that having results asymptotic in the logarithm of the number of states is a common requirement in Markov chain literature. Moreover, since $|S|$ for a game with $n$ players is at most $m(n)^n$, this requirement is equivalent to asking for results asymptotic in $n$ and in the logarithm of the function $m$.

Note also that we focus only on $n$-player (not necessarily potential) games and values of $\beta$ such that 
the mixing time of the logit dynamics 
is at least super-polynomial in $n$,
otherwise the stationary distribution enjoys the desired properties of stability and convergence.
Throughout the rest of the paper we will denote with $\beta_0$ the smaller value of $\beta$ such that the mixing time is not polynomial.

\subsection{Games partitioned by the logit dynamics}
Let $\G$ be an $n$-player game.
Let $P$ be the transition matrix of the logit dynamics on $\G$ and let $\pi$ be the corresponding stationary distribution.
For $L \subseteq S$ non-empty, we define a Markov chain with state space $L$ and transition matrix $\mathring{P}_L$ defined as follows.
\begin{equation}
\label{eq:restricted_loop}
 \mathring{P}_L(\x,\y) = \begin{cases}
                          P(\x, \y) & \text{if }
                          \x \neq \y;\\
                          1 - \sum_{\substack{\z \in L,\\\z \neq \x}}P(\x, \z) = P(\x,\x) + \sum_{\z \in S \setminus L}P(\x, \z) & \text{otherwise}.
                         \end{cases}
\end{equation}
It easy to check that the stationary distribution of this Markov chain is given by the distribution
$
\pi_L(\x) = \frac{\pi(\x)}{\pi(L)}
$,
for every $\x \in L$.
Note also that the Markov chain defined upon $\mathring{P}_L$ is aperiodic, since the Markov chain defined upon $P$ is,
and it will be irreducible if $L$ is a connected set.
For a fixed $\varepsilon > 0$, we will denote with $\tm^L(\varepsilon)$ the mixing time of the chain described in \eqref{eq:restricted_loop}.
We also denote with $B_{L}(A)$ the bottleneck ratio of $A \subset L$ in the Markov chain with state space $L$ and transition matrix $\mathring{P}_{L}$.

We are now ready to introduce the definition of partitioned games.
\begin{definition}
 Let $\G$ be an $n$-player strategic game and let us denote with $S$ its profile space.
 We say that $\G$ is \emph{partitioned} by the logit dynamics for the rationality level $\beta$
 if there are constants $n_0, \varepsilon > 0$,
 a polynomial $p=p_\varepsilon$ and a super-polynomial $q=q_\varepsilon$
 such that for each $n \geq n_0$
 there is a family of connected subsets $R_1, \ldots, R_k$ of $S$, with $k \geq 1$,
 and a partition $T_1, \ldots, T_k, N$ of $S$, with $T_i \subseteq R_i$ for any $i = 1, \ldots, k$, such that
 \begin{enumerate}
  \item the bottleneck ratio of $R_i$ is at most $1/q(n)$, for any $i = 1, \ldots, k$;
  \item the mixing time $\tm^{R_i}(\varepsilon)$ is at most $p(n)$, for any $i = 1, \ldots, k$;
  \item for any $i = 1, \ldots, k$ and for any $\x \in T_i$, it holds that
  $$\Prob{\x}{\tauset{S\setminus R_i} \leq \tm^{R_i}(\varepsilon)} \leq \varepsilon;$$
  \item for any $\x \in N$, it holds that
  $$\Prob{\x}{\tauset{\bigcup_i T_i} \leq p(n)} \geq 1 - \varepsilon.$$
 \end{enumerate}
\end{definition}
Note that we allow in the above definition that $T_i$, for some $i = 1, \ldots, k$, or $N$ are empty. In this context, we call $T_i$ the \emph{core} of $R_i$, $i \leq k$. 

The main result of this section proves that a game partitioned by the logit dynamics is asymptotically metastable. 
\begin{theorem}\label{thm:part_meta}
Let $\G$ be $n$-player game.
If $\G$ is partitioned by the logit dynamics for $\beta$,
then the logit dynamics for $\G$ is asymptotically metastable 
for the given $\beta$.
\end{theorem}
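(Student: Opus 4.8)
The plan is to produce, for every starting profile, a metastable distribution built from the conditional stationary distributions $\pi_{R_i}$, and to show it is reached in polynomial time. Fix the constants $\varepsilon,n_0$, the polynomial $p$, the super-polynomial $q$ and, for each $n\ge n_0$, the subsets $R_1,\dots,R_k$ and the partition $T_1,\dots,T_k,N$ witnessing that $\G$ is partitioned. Put $q'(n):=\sqrt{q(n)}$ (still super-polynomial, but now $q'(n)/q(n)\to 0$) and $p'(n):=2\,p(n)$. I will exhibit, for all large $n$ and every profile $\x$, a $(4\varepsilon,q'(n))$-metastable distribution $\mu_\x$ with $\tv{P^{t}(\x,\cdot)-\mu_\x}\le 4\varepsilon$ for some $t\le p'(n)$; by Definition~\ref{def:asmetastability} this is exactly asymptotic metastability for $\beta$, with constant $4\varepsilon$, polynomial $p'$ and super-polynomial $q'$.

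Each $\pi_{R_i}$ is metastable for $P$: Lemma~\ref{lem:meta:bottleneck} with condition~1 gives that $\pi_{R_i}$ is $(1/q(n),1)$-metastable, and Lemma~\ref{lem:meta:1} upgrades this to $\big(q'(n)/q(n),\,q'(n)\big)$-metastable, hence $(\varepsilon,q'(n))$-metastable once $q'(n)/q(n)\le\varepsilon$; by convexity the same bound holds for any mixture of the $\pi_{R_i}$'s. Now take $\x\in T_i$ and set $\mu_\x:=\pi_{R_i}$. Couple $P$ started at $\x$ with $\mathring{P}_{R_i}$ started at $\x$ so that the two chains move together until $P$ first leaves $R_i$ — possible because $\mathring{P}_{R_i}$ copies every move of $P$ inside $R_i$ and turns every move of $P$ out of $R_i$ into a self-loop; hence $\tv{P^t(\x,\cdot)-\mathring{P}_{R_i}^t(\x,\cdot)}\le\Prob{\x}{\tauset{S\setminus R_i}\le t}$. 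At $t_0:=\tm^{R_i}(\varepsilon)\le p(n)$ (condition~2) this is at most $\varepsilon$ by condition~3, while connectedness of $R_i$ makes $\mathring{P}_{R_i}$ ergodic with stationary distribution $\pi_{R_i}$, so $\tv{\mathring{P}_{R_i}^{t_0}(\x,\cdot)-\pi_{R_i}}\le\varepsilon$ by the definition of $\tm^{R_i}(\varepsilon)$; altogether $\tv{P^{t_0}(\x,\cdot)-\pi_{R_i}}\le 2\varepsilon$ at time $t_0\le p(n)\le p'(n)$. Moreover, since total variation is non-increasing under $P$, for $t_0\le t\le t_0+q'(n)$ we get $\tv{P^{t}(\x,\cdot)-\pi_{R_i}}\le\tv{P^{t_0}(\x,\cdot)-\pi_{R_i}}+\tv{\pi_{R_i}P^{t-t_0}-\pi_{R_i}}\le 3\varepsilon$, so $\mu_\x=\pi_{R_i}$ indeed stays close and is $(\varepsilon,q'(n))$-metastable.

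For $\x\in N$, let $\tau:=\tauset{\bigcup_i T_i}$ and $\varepsilon_0:=\Prob{\x}{\tau>p(n)}\le\varepsilon$ by condition~4, and write $j(\y)$ for the index with $\y\in T_{j(\y)}$. Take
$$
\mu_\x\ :=\ \Expec{\x}{\mathbf{1}[\tau\le p(n)]\,\pi_{R_{j(X_\tau)}}}\ +\ \varepsilon_0\,\pi_{R_1},
$$
a bona fide probability distribution which, being a mixture of the $\pi_{R_j}$'s, is $(\varepsilon,q'(n))$-metastable. By the strong Markov property at $\tau$, $P^t(\x,\cdot)=\Expec{\x}{\mathbf{1}[\tau\le p(n)]\,P^{t-\tau}(X_\tau,\cdot)}+\Expec{\x}{\mathbf{1}[\tau>p(n)]\,\delta_{X_t}}$. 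For $2\,p(n)\le t\le q'(n)$, on the event $\{\tau\le p(n)\}$ the residual time obeys $\tm^{R_{j(X_\tau)}}(\varepsilon)\le p(n)\le t-\tau\le t\le\tm^{R_{j(X_\tau)}}(\varepsilon)+q'(n)$, so the previous paragraph applied with start $X_\tau\in T_{j(X_\tau)}$ gives $\tv{P^{t-\tau}(X_\tau,\cdot)-\pi_{R_{j(X_\tau)}}}\le 3\varepsilon$; the complementary event contributes at most its mass $\varepsilon_0\le\varepsilon$. Hence $\tv{P^t(\x,\cdot)-\mu_\x}\le 4\varepsilon$ for all $2\,p(n)\le t\le q'(n)$, in particular at $t=p'(n)$ for $n$ large. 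Choosing $n_*\ge n_0$ large enough for all the ``large $n$'' requirements above, and taking $\mu_\x:=\pi_{R_i}$ when $\x\in T_i$ and $\mu_\x$ as above when $\x\in N$, completes the argument.

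The delicate points are the coupling between $P$ and $\mathring{P}_{R_i}$ — which must respect the self-loop bookkeeping in \eqref{eq:restricted_loop} so that condition~3 translates into $2\varepsilon$-closeness to $\pi_{R_i}$ — and, for profiles in $N$, defining the target as a mixture over which core the dynamics enters first and nesting the polynomial and super-polynomial windows so that, uniformly over the random entry time $\tau\le p(n)$, the leftover time $t-\tau$ is simultaneously past the restricted mixing time $\tm^{R_{j(X_\tau)}}(\varepsilon)$ and still inside the metastable window of $\pi_{R_{j(X_\tau)}}$. I expect this last bookkeeping to be the main obstacle; the choice $q'=\sqrt{q}$ is made precisely so that $q'/q\to 0$, which is what lets Lemma~\ref{lem:meta:1} push the metastability error below the fixed constant $\varepsilon$.
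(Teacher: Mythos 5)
Your proposal is correct and follows essentially the same route as the paper: you establish metastability of the conditional stationary distributions $\pi_{R_i}$ via the bottleneck-ratio and amplification lemmata, handle core profiles by coupling $P$ with the restricted chain $\mathring{P}_{R_i}$ until the first exit from $R_i$ (the paper's Lemma~\ref{lemma:PeqP_R} and Corollary~\ref{cor:mix_ristrected_pseudo}), and handle profiles in $N$ by targeting a mixture of the $\pi_{R_i}$'s weighted by which core is hit first, invoking the strong Markov property (the paper's distribution $\nu_\x$ in \eqref{eq:nu_distro}). The only differences are cosmetic bookkeeping: you take the super-polynomial window $\sqrt{q}$ rather than $\varepsilon q$, fold the escape mass into $\pi_{R_1}$ rather than conditioning on the hitting event, and land at $4\varepsilon$ rather than $3\varepsilon$, none of which affects the conclusion.
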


Before going into the details of the proof, we relate the technical notion of partitioned games with examples of actual games satisfying it as well as comment on games not enjoying it. All of our examples will be potential games.


\subsection{Examples of partitioned games}
\label{subsec:ex_part}
Three classes of $n$-player games,
namely pure coordination games,
Curie-Weiss games and graphical coordination games on the ring, have been proven asymptotically metastable \cite{afppSODA12j}.
Here we will show that any of these games is partitioned by the logit dynamics, whereas the game of Lemma \ref{lem:potg_no_asymMeta} is not partitioned by the logit dynamics.

\paragraph{Pure coordination games.}
The \emph{pure coordination game} is an $n$-player game where players have the same strategy set $A$ and each player is happy when
all players adopt the same strategy and unhappy otherwise.

Specifically, in \cite{afppSODA12j}, they consider the case in which each agent can choose between two strategies, namely $+1$ and $-1$; each agent has utility $1$ if all the players adopt the same strategy and utility $0$ otherwise. 
The mixing time of the logit dynamics for these games is polynomial for $\beta = \OO(\log n)$ and super-polynomial otherwise.
Auletta et al. \cite{afppSODA12j} show asymptotic metastability for any $\beta = \omega(\log n)$.
Next lemma proves that $n$-player pure coordination games are partitioned by the logit dynamics.
\begin{lemma}
 \label{lem:pure_coord_partition}
 Let $\G$ be an $n$-player pure coordination game. Then $\G$ is partitioned by the logit dynamics for any $\beta = \omega(\log n)$.
\end{lemma}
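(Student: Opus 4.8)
The plan is to exhibit explicitly the family $R_1,\dots,R_k$ and the partition $T_1,\dots,T_k,N$, and then verify the four defining properties using the Gibbs-measure structure of the stationary distribution together with the known mixing-time facts for pure coordination games. For the two-strategy pure coordination game with strategies $\{+1,-1\}$, the potential is $\Phi(\x)=-1$ on the two "consensus" profiles $\mathbf{1}:=(+1,\dots,+1)$ and $\mathbf{0}:=(-1,\dots,-1)$, and $\Phi(\x)=0$ on all remaining profiles (up to the normalisation making $u_i(\x)=1$ at consensus and $0$ elsewhere). The natural choice is therefore $k=2$, with $R_1$ the "basin" of $\mathbf{1}$ and $R_2$ the basin of $\mathbf{0}$; concretely one takes $R_1=\{\x : \#\{i:x_i=+1\}\ge n/2\}$ and $R_2=\{\x : \#\{i:x_i=-1\}\ge n/2\}$ (breaking ties arbitrarily so that $R_1,R_2$ overlap only on the "half-half" slice, which we can then assign to, say, $R_1$), set the cores $T_1=\{\mathbf{1}\}$, $T_2=\{\mathbf{0}\}$, and $N=S\setminus(T_1\cup T_2)$. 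Both $R_1$ and $R_2$ are connected.

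First I would verify properties 1 and 2. For property 1, the bottleneck ratio $B(R_1)=Q(R_1,\overline{R_1})/\pi(R_1)$: the denominator is $\ge\pi(\mathbf{1})=e^{\beta}/Z\ge 1/\mathrm{poly}$, while the numerator is a sum over the at most $\mathrm{poly}(n)\cdot m^n$ cut edges, each of weight $\pi(\x)P(\x,\y)$ with $\x$ on the $n/2$-slice (so $\Phi(\x)=0$, $\pi(\x)=1/Z$) times $P(\x,\y)=\Theta(1/n)$; since $Z\ge e^{\beta}$, this is $O(e^{-\beta}\cdot\mathrm{poly}(n))$, which is $\le 1/q(n)$ for a suitable super-polynomial $q$ once $\beta=\omega(\log n)$. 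For property 2, the restricted chain $\mathring P_{R_1}$ lives on the basin of a single local minimum of $\Phi$; its mixing time is polynomial because inside $R_1$ the only profiles with low potential are $\mathbf{1}$ itself (all others have $\Phi=0$), so the restricted chain has no internal bottleneck — one can make this precise via Theorem~\ref{thm:bottle_rel} applied to $\mathring P_{R_1}$, bounding $B_{R_1}(A)$ from below by a polynomial for every $A\subsetneq R_1$ with $\pi_{R_1}(A)\le 1/2$ (the worst cut is again a single slice, and now the denominator is at most $1$ so the bottleneck ratio is $\ge 1/\mathrm{poly}$), and then invoking Theorem~\ref{theorem:relaxation} together with $\log(1/\pi_{L,\min})=O(\beta+n\log m)=\mathrm{poly}(n)$.

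Next I would handle properties 3 and 4. For property 3, starting from the core $\x=\mathbf{1}$, leaving $R_1=S\setminus\overline{R_1}$ within $t:=\tm^{R_1}(\varepsilon)\le p(n)$ steps has probability at most the hitting probability $\Prob{\mathbf{1}}{\tauset{\overline{R_1}}\le t}$, which by the union bound is at most $t$ times the one-step escape probability from the set $R_1$ weighted along the chain; more cleanly, by Lemma~\ref{lemma:hitting_bottle_ub} (the hitting-time/bottleneck upper bound referenced in the excerpt) this probability is $O\!\big(t\cdot B(R_1)\big)\le O(p(n)/q(n))\le\varepsilon$ for $n$ large. Since $T_1$ is the single profile $\mathbf{1}$, nothing more is needed. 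For property 4, I must show that from any $\x\in N$ the chain hits $\{\mathbf{1}\}\cup\{\mathbf{0}\}$ within $p(n)$ steps with probability $\ge 1-\varepsilon$: outside consensus the potential is flat, so the logit dynamics restricted to $N$ behaves essentially like the lazy random walk on the hypercube slices, and a drift/coupling argument (or a direct hitting-time bound for the Ehrenfest-type chain on the number of $+1$'s, which is a birth-and-death chain with no bias away from the endpoints) gives that one of the two all-equal profiles is reached in $O(n^2\log n)$ steps with overwhelming probability.

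The main obstacle I expect is property 4: one must argue that from an \emph{arbitrary} non-consensus profile the dynamics genuinely reaches a consensus state quickly, and with high probability, rather than merely in expectation. The subtlety is that when $\beta$ is large the single-site update at a non-consensus profile is almost uniform (since all neighbours have equal potential $0$ unless the flip creates consensus), so there is no helpful drift \emph{toward} consensus from deep inside $N$; the hitting-time bound must come purely from the random-walk mixing of the "magnetisation" coordinate, and one has to be careful that the $e^{\beta}$-sized reward at the endpoints does not, paradoxically, slow entry (it does not, because the endpoint is \emph{absorbing-like} — once the walk is one flip away from $\mathbf 1$ the move into $\mathbf 1$ has probability $\Theta(1/n)$, independent of $\beta$, which is all the union bound needs). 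Making this rigorous, uniformly over the starting profile and with the high-probability (not just expectation) guarantee required by property 4, is where the real work lies; everything else is a routine translation of the Gibbs-measure estimates already used by Auletta et al.~\cite{afppSODA12j} to prove asymptotic metastability of this game.
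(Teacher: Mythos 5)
Your decomposition is structurally wrong, and the failure is exactly at the point you flagged as "the main obstacle": property 4 cannot be salvaged for your choice of $N$. With $N=S\setminus\{\p,\m\}$ you must show that from an \emph{arbitrary} non-consensus profile the dynamics hits a consensus profile within $p(n)$ steps with probability $\ge 1-\varepsilon$. But outside consensus the potential is flat, so the magnetisation coordinate evolves as an Ehrenfest-type birth-and-death chain whose drift points \emph{toward} the balanced slice (from a profile with a $+1$-majority, the updated player is more likely to be a $+1$-player and resamples almost uniformly). Polynomial mixing of this coordinate only means it equilibrates near magnetisation $0$; hitting the endpoint $\pm n$ from a typical starting profile takes expected time of order $2^n/\binom{n}{n/2}\cdot(\cdots)$, i.e.\ super-polynomial, and no coupling argument will give your claimed $O(n^2\log n)$ high-probability bound. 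The paper's proof uses precisely this super-polynomial hitting-time fact, but in the opposite direction: it takes $k=3$ with $R_1=\{\p\}$, $R_2=\{\m\}$, $R_3=\{+1,-1\}^n\setminus\{\p,\m\}$, sets $T_i=R_i$ and $N=\emptyset$, and proves property 3 for $T_3$ by projecting onto a birth-and-death chain and showing the hitting time of consensus is super-polynomial from every non-consensus start (then Markov's inequality). The non-consensus region is itself a metastable component — hard to leave and easy to mix within (its restricted chain is essentially the lazy hypercube walk) — not a transient region feeding into the consensus cores.

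There is a second, independent gap in your verification of property 1. For $R_1=\{\x:\#\{i:x_i=+1\}\ge n/2\}$ the cut passes through the $\binom{n}{n/2}=\Theta(2^n/\sqrt{n})$ profiles of the middle slice, each with $\pi(\x)=1/Z$; your bound $Q(R_1,\overline{R_1})=O(e^{-\beta}\,\mathrm{poly}(n))$ silently drops this exponential multiplicity. In the regime $\omega(\log n)=\beta=o(n)$ one has $e^{\beta}\ll 2^n$, hence $Z=\Theta(2^n)$, $\pi(R_1)=\Theta(1)$, and $B(R_1)=\Theta(1/\sqrt{n})$ — only polynomially small, so $R_1$ is not a valid hard-to-leave set for all $\beta=\omega(\log n)$. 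The paper's singleton choice $R_1=\{\p\}$ avoids this: the escape probability from $\p$ in one step is $\Theta(e^{-\beta})$, which is super-polynomially small for every $\beta=\omega(\log n)$ independently of how $\beta$ compares to $n$.
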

\begin{proof}
 Consider the following subsets of $S$: $R_1 = \{\p\}, R_2 = \{\m\}$ and $R_3 = \{+1,-1\}^n \setminus \{\p, \m\}$,
 where $\p = (+1)^n$ and $\m = (-1)^n$.
 As showed in \cite{afppSODA12j}, the bottleneck ratio of these subsets is super-polynomial for any $\beta = \omega(\log n)$.
 Moreover, the mixing time of the chains restricted to $R_1$ and $R_2$ is trivially polynomial.
 As for $R_3$, observe that the stationary distribution of the restricted chain
 is very close to the stationary distribution of a lazy random walk on a $n$-dimensional hypercube,
 whose mixing time is known to be polynomial (see, e.g., \cite{lpwAMS08}).
 
 Also let us consider the following partition of $S$: $T_i = R_i$ for $i = 1, 2, 3$ and $N$ is empty.
 Clearly, $T_1$ and $T_2$ satisfy the property required by the definition of partitioned games.
 This holds also for $T_3$. Indeed, consider a birth and death chain (see Section~2.5 in \cite{lpwAMS08})
 defined on the state space $\{0, 1, \ldots, m\}$
 with transition probability:
 $$
  p_0 = q_m = r_0 = r_m = \frac{1}{2}, q_0 = p_m = 0; \qquad
  p_i = \frac{m-i}{4m}, q_i = \frac{m+i}{4m}, r_i = \frac{1}{2}, \text{ for } i = 1, \ldots, m-1;
 $$
 where $p_i$ is the probability of going from state $i$ to state $i+1$,
 $q_i$ is the probability of going from state $i$ to state $i-1$
 and $r_i$ is the probability to stay in state $i$.
 
 Now observe that the expected hitting time of either $\p$ or $\m$ is equivalent to
 the expected hitting time of state\footnote{Here, we are assuming $n$ is even. The case for odd $n$ is similar.} $m = n/2$
 in the above birth an death chain. Indeed, the latter can be seen as the projection of our chain,
 where the state $\x$ of our chain is projected to the state $i$ of the birth and death chain,
 such that the minimum among the zeros and the ones in $\x$ is $\frac{n}{2}-i$.
 It is then easy to check that the expected hitting time of $m$ is super-polynomial in $n$ from any starting state
 (see Section~2.5 in \cite{lpwAMS08}). The claim finally follows by a simple application of Markov's inequality.
\end{proof}

\paragraph{The Curie-Weiss model.}
Consider now the following game-theoretic formulation of the well-studied \emph{Curie-Weiss model} (the \emph{Ising model} on the complete graph), that we will call \emph{\cw-game}: each one of $n$ players has two strategies, $-1$ and $+1$, and the utility of player $i$ for profile $\x = (x_1, \ldots, x_n) \in \{-1, +1\}^n$ is $u_i(\x) = x_i \sum_{j \neq i} x_j$.
Observe that for every player $i$ it holds that
$$u_i(\x_{-i}, +1) - u_i(\x_{-i}, -1) = {\cal H}(\x_{-i}, -1) - {\cal H}(\x_{-i}, +1),$$
where ${\cal H}(\x) = - \sum_{j\neq k} x_j x_k$, hence the \cw-game is a potential game with potential function $\cal H$.

It is known (see Chapter 15 in \cite{lpwAMS08}) that the logit dynamics for this game (or equivalently the \emph{Glauber dynamics} for the Curie-Weiss model) has mixing time polynomial in $n$ for $\beta < 1 / n$ and super-polynomial as long as $\beta > 1/n$. Moreover, \cite{afppSODA12j} describes metastable distributions for $\beta > c \log n /n$ and shows that such distributions are quickly reached from profiles where the number of $+1$ (respectively $-1$) is a sufficiently large majority, namely if the magnetization $k$ is such that $k^2 > c \log n/\beta$, where the \emph{magnetization} of a profile $\x$ is defined as $M(\x) = \sum_i x_i$.

It has been left open what happens when $\beta$ lies in the interval $(1/n , c \log n /n)$
and if a metastable distribution is quickly reached when in the starting point the number of $+1$ is close to the number of $-1$.
We observe that next lemma, along with Theorem~\ref{thm:part_meta}
essentially closes this problem by showing that \cw-games are asymptotic metastable for $\beta \geq c/n$ for some constant $c > 1$.

\begin{lemma}
\label{lem:curie_Weiss_partition}
 Let $\G$ be the $n$-player \cw-game. Then $\G$ is partitioned by the logit dynamics for any $\beta > c/n$, for constant $c > 1$.
\end{lemma}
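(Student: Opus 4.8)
The plan is to build the required partition from the level sets of the magnetization $M(\x)=\sum_i x_i$. Because the \cw-game is symmetric under permutations of players, the logit dynamics is lumpable onto $M$: the one-step transition probabilities of $M$ depend on the profile only through $M$, so $M$ evolves as a birth-and-death chain on $\{-n,-n+2,\dots,n\}$, reversible with respect to $\pi(\{M=m\})\propto\binom{n}{(n-m)/2}e^{-\beta\Ham(m)}$, where $\Ham(m)=\tfrac12(n-m^2)$. Setting $m=sn$ and using Stirling's formula, this weight is $e^{nf(s)+o(n)}$ with $f(s)=\eta((1-s)/2)+\tfrac{\beta n}{2}s^{2}$ and $\eta$ the binary entropy. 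The hypothesis $\beta n=c>1$ is exactly what makes $f''(0)=c-1>0$, so $s=0$ is a strict local minimum of $f$, and $f$ has two symmetric maxima at $\pm s^{\star}$, $s^{\star}=s^{\star}(c)>0$ the positive solution of $s=\tanh(cs)$, with $f$ strictly increasing on $(0,s^{\star})$. Fixing a constant $\delta=\delta(c)=s^{\star}/3$, I set
$$R_1=\{\x:M(\x)\ge\delta n\},\qquad R_2=\{\x:M(\x)\le-\delta n\},$$
$$T_1=\{\x:M(\x)\ge 2\delta n\},\quad T_2=\{\x:M(\x)\le-2\delta n\},\quad N=\{\x:|M(\x)|<2\delta n\}.$$
Then $T_1,T_2,N$ partition $S$, $T_i\subseteq R_i$, and each $R_i$ is connected (from any $\x\in R_1$ flip $-1$'s to $+1$'s, never decreasing $M$, to reach $\p=(+1)^n\in R_1$; symmetrically for $R_2$).

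Properties~1, 3 and~4 then follow from one-dimensional estimates on this birth-and-death chain, in the style of the pure coordination argument (Lemma~\ref{lem:pure_coord_partition}) and of the proof of Lemma~\ref{lem:potg_no_asymMeta}. For Property~1, every transition out of $R_1$ goes from $\{M=\delta n\}$ to $\{M=\delta n-2\}$, so $Q(R_1,\overline{R_1})\le\pi(\{M=\delta n\})$; since $f(\delta)<f(s^{\star})$ the slice $\{M=\delta n\}$ has stationary probability $e^{-\Theta_c(n)}$ (it equals $e^{-n(f(s^{\star})-f(\delta))+o(n)}$ times the probability of the mode slice, which is at most $1$), while $\pi(R_1)=\tfrac12-o(1)$; hence $B(R_1)\le e^{-\Theta_c(n)}\le 1/q(n)$ for a super-polynomial $q$, and likewise for $R_2$. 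For Property~3, for $\x\in T_1$ the exit time $\tauset{\overline{R_1}}$ coincides, in the magnetization chain started at $M(\x)\ge 2\delta n$, with the hitting time of $\{m\le\delta n-2\}$, which is at least that of $\{m=2\delta n-2\}$; the standard birth-and-death formula (Section~2.5 in~\cite{lpwAMS08}) gives $\Expec{\x}{\tauset{\overline{R_1}}}\ge\Theta(1)\cdot\pi(\{M\ge 2\delta n\})/\pi(\{M=2\delta n\})=e^{\Theta_c(n)}$ (as $2\delta n$ lies strictly below the mode, so $f(2\delta)<f(s^{\star})$), whence Markov's inequality together with Property~2 below yields $\Prob{\x}{\tauset{\overline{R_1}}\le\tm^{R_1}(\varepsilon)}\le\tm^{R_1}(\varepsilon)\,e^{-\Theta_c(n)}\le\varepsilon$ for $n\ge n_0$. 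For Property~4, a one-step drift computation shows that on $|m|=o(n)$ one has $\Expec{}{\Delta M\mid M=m}=\tfrac{(c-1)m}{n}(1+o(1))$ (again using $c>1$) with one-step variance $\Theta(1)$; therefore from any $\x\in N$ the chain leaves $[-\sqrt n,\sqrt n]$ within $O(n)$ steps and is then carried outward to $\{|M|\ge 2\delta n\}$ within a further $O(n\log n)$ steps, so $\Expec{\x}{\tauset{T_1\cup T_2}}$ is polynomial and Markov's inequality gives $\Prob{\x}{\tauset{T_1\cup T_2}\le p(n)}\ge 1-\varepsilon$.

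The main obstacle is Property~2, that each restricted chain $\mathring P_{R_i}$ mixes in polynomial time. The plan is to lower-bound by $1/\mathrm{poly}(n)$ its internal bottleneck ratio $B_\star(\mathring P_{R_1})=\min\{B_{R_1}(A):A\subseteq R_1,\ \pi_{R_1}(A)\le 1/2\}$ and then invoke Theorem~\ref{thm:bottle_rel} ($1-\lambda_2\ge B_\star^2/2$) and Theorem~\ref{theorem:relaxation}. The easy ingredients: $\min_{\x\in R_1}\pi_{R_1}(\x)\ge\pimin\ge e^{-\Theta(n)}$, so the $\log(4/\pimin)$ factor is only $O(n)$; and the smallest eigenvalue is bounded away from $-1$, because when $\beta n=c$ every \cw{} single-site update probability lies in $[(1+e^{2c})^{-1},(1+e^{-2c})^{-1}]$ (since $|\sum_{j\ne i}x_j|\le n$), hence $\mathring P_{R_1}(\x,\x)\ge\gamma:=(1+e^{2c})^{-1}$ for all $\x$, and writing $\mathring P_{R_1}=\gamma I+(1-\gamma)P'$ with $P'$ a stochastic matrix reversible with respect to $\pi_{R_1}$ gives $\lambda_{\min}(\mathring P_{R_1})\ge 2\gamma-1>-1$. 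The genuinely hard part is the lower bound on $B_\star(\mathring P_{R_1})$: one must exclude bottlenecks of $\mathring P_{R_1}$ that are not magnetization bands. I would do this by decomposing $R_1$ along the slices $\{M=m\}$, $m\ge\delta n$: the projected birth-and-death chain on $\{\delta n,\dots,n\}$ has a unimodal stationary law (single peak near $s^{\star}n$), hence bottleneck ratio $\ge 1/\mathrm{poly}(n)$; the chain restricted to any two adjacent slices mixes in polynomial time by the permutation symmetry of the game; and a state-decomposition argument in the spirit of Madras--Randall lifts these two facts to $B_\star(\mathring P_{R_1})\ge 1/\mathrm{poly}(n)$. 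This is precisely where the spectral characterization of restricted Markov chains announced in the introduction does the work; alternatively one can invoke the known $O(n\log n)$ mixing of the one-sided/censored mean-field Glauber dynamics above criticality.

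Putting the pieces together, with $\varepsilon$ a sufficiently small constant and $p,q$ the polynomial and super-polynomial produced above (uniformly in $i\in\{1,2\}$), all four requirements in the definition of a game partitioned by the logit dynamics hold for every $n\ge n_0$, so the \cw-game is partitioned for every $\beta>c/n$ with $c>1$. Together with Theorem~\ref{thm:part_meta} this shows \cw-games are asymptotically metastable throughout that range of $\beta$ --- in particular for $\beta\in(1/n,\,c\log n/n)$ and from balanced starting profiles --- settling the questions left open in~\cite{afppSODA12j}.
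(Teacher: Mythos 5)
Your construction is essentially the paper's: the paper takes $R_1=S_+$ and $R_2=S_-$ (profiles of positive/negative magnetization), cores $Z_\pm=\{\x\colon M(\x)\gtrless \pm\zeta n\}$ where $\zeta$ is the positive root of the mean-field fixed-point equation (your $s^\star$), and analyses all four properties through the projected birth-and-death magnetization chain, citing the censored-dynamics results of \cite{llp2010,ding2009censored} for Properties 2--4. Your Properties 1 and 4 are fine (there Markov's inequality is applied in the correct direction), and your plan for Property 2 is workable; note only that the off-the-shelf $O(n\log n)$ mixing result for the censored chain is stated for the restriction to $S_+$, not to $\{M\ge\delta n\}$, so your choice of $R_1$ forces you to actually carry out the Madras--Randall decomposition, whereas the paper sidesteps it by choosing $R_1=S_+$ so that the citation applies verbatim.

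The genuine gap is Property 3. You establish $\Expec{\x}{\tauset{\overline{R_1}}}\ge e^{\Theta_c(n)}$ and then assert that ``Markov's inequality'' yields $\Prob{\x}{\tauset{\overline{R_1}}\le \tm^{R_1}(\varepsilon)}\le \tm^{R_1}(\varepsilon)\,e^{-\Theta_c(n)}$. Markov's inequality goes the other way: it bounds $\Prob{}{\tau\ge a}$ above by $\Expec{}{\tau}/a$, i.e.\ it controls the probability that a hitting time is \emph{large}. A lower bound on the mean says nothing about $\Prob{\x}{\tau\le t}$ being small --- a nonnegative variable with mean $e^{\Theta(n)}$ can still be $0$ with probability $1/2$ --- and the inequality you actually invoke, $\Prob{}{\tau\le t}\le t/\Expec{}{\tau}$, is false in general. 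The bottleneck-based bound of Lemma~\ref{lemma:hitting_bottle_ub} gives the desired estimate only for the \emph{minimizing} state of $R_1$, whereas the definition of partitioned games requires it for \emph{every} $\x\in T_1$. The paper closes exactly this point with an excursion argument: by Theorem~4.10 of \cite{ding2009censored} there are levels $k'$ and $k''$ such that from $k'$ the magnetization chain returns to $\zeta n$ within $c_2 n\log n$ steps with probability $1-o(1)$ without ever reaching $k''$, so each attempt to cross from the core down to $\overline{R_1}$ fails with high probability, and the crossing probability within any polynomial horizon is $o(1)$. You need this renewal/escape-probability computation (or the standard birth-and-death formula for $\Prob{a}{\tau_b<\tau_{a'}}$) in place of the Markov step.
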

\begin{proof}
 Let $S_+$ (resp., $S_-$) be the set of profiles with positive (resp., negative) magnetization
 and let us set $R_1 = S_+$ and $R_2 = S_-$.
 It is known that the bottleneck ratio of these subset is super-polynomial for any $\beta > c/n$, for constant $c > 1$, (see, e.g., Chapter~15 in \cite{lpwAMS08}). Moreover, in \cite{llp2010} it has been proved that the mixing time of the chain restricted to $S_+$ (resp. $S_-$)
 is actually $c_1 n \log n$  for some constant $c_1 > 0$%
 \footnote{The result in \cite{llp2010} refers to censored chains, that are exactly the same as 
 our restricted chain, except that the probability
 that the original chain from a profile $\x$ goes out from $L$ is ``reflected'' to some profile in $L$ different from $\x$,
 instead than being ``added'' to the probability to stay in $\x$.
 It is immediate to see how their result extends also to our restricted chains.}.
 
 Let now $\zeta$ be the unique positive root \cite{ding2009censored,ding2009mixing} of the function
 $$
  f(x) = \frac{e^{\beta x}(1- x) - e^{-\beta x}(1+x)}{e^{\beta x}(1- x) + e^{-\beta x}(1+x)}.
 $$
 Observe that $\zeta \in [0,1]$, it is non decreasing in $\beta$ and does not depend on $n$.
 Let now $Z_+$ be the set of profiles with magnetization $k \geq \zeta n$
 and $Z_-$ be the set of profiles with magnetization $k \leq -\zeta n$.
 Note that for a constant $c > 1$ and $n$ sufficiently large, we have that $|k| \geq 1$  \cite{ding2009mixing}.
 
 Consider now the following partition of $S$: $T_1 = Z_+$, $T_2 = Z_-$ and $N = S \setminus (Z_+ \cup Z_-)$.
 We prove that from any profile $\x \in Z_+$
 the dynamics hits a profile $\y \in S_-$ in a time equivalent to the mixing time of the chain restricted to $S_+$
 with probability at most $\varepsilon$.
 Consider, indeed, the magnetization chain, i.e., the birth and death chain on the space $\{-n, 2 - n, \ldots, n- 2, n\}$.
 Then we are interested in the hitting time $\tau_l$ of $l \leq 0$ when the starting point is $k$.
 Clearly, in order to reach magnetization $l$ it is necessary to reach magnetization $k'$, with $l < k' < k$.
 And for reaching $l$ from $k'$ it is necessary to reach $k''$ such that $l \leq k'' < k'$.
 Then, we show that there is $k'$ from which the chain quickly goes back to $n \zeta$ with high probability without ever hitting the profile $k''$.
 In particular, in \cite[Theorem 4.10]{ding2009censored} it has been showed that there are $k'$ and $k''$ such that
 $$
  \Prob{k'}{\tau_{n \zeta} \leq c_2 n \log n} \geq 1 - o(1) \qquad \text{and} \qquad \Prob{k'}{\tau_{k''} \geq c_2 n \log n} \geq 1 - o(1).
 $$
 Hence, it follows that
 $$
  \Prob{k}{\tau_l \geq c_1 n \log n} \geq \left(1 - o(1)\right)^\kappa = 1 - o(1),
 $$
 where $\kappa$ is a constant.
 (Clearly, everything holds symmetrically by considering the set $Z_-$.)

 Finally, observe that from \cite{ding2009censored} (Theorems~4.4, 4.9 and 4.10 -- see also \cite{ding2009mixing}),
 we have that for each profile $\x \in N$, the
 hitting time of $T_1 \cup T_2$ is polynomial with high probability.
\end{proof}

\paragraph{Graphical coordination game on the ring.}
Finally consider a \emph{graphical coordination game on the ring},
in which the $n$ players are identified by the vertices of a ring $R=(V, E)$.
For each edge of $R$, the endpoints play the following two-strategy coordination game:
$$
\begin{game}{2}{2}
      &$+1$      &$-1$ \\
  $+1$ &$a,a$    &$c,d$ \\
  $-1$ &$d,c$    &$b,b$
\end{game}\hspace*{\fill}
$$
where $a>d$ and $b>c$.
Each player picks a strategy and uses it for each coordination game in which she is involved.
The utility of a player is the sum of the utilities for each coordination game she plays.

It is known \cite{afpppSPAA11j,bkmp2005} that the mixing time of the logit dynamics for this game is polynomial in $n$ for $\beta = \OO(\log n)$
and greater than any polynomial in $n$ for $\beta = \omega(\log n)$. Moreover, in \cite{afppSODA12j} it has been showed that for this game asymptotic metastability holds. 
\begin{lemma}
 \label{lem:coord_ring_partition}
 Let $\G$ be an $n$-player graphical coordination game on the ring. Then $\G$ is partitioned by the logit dynamics for any $\beta = \omega(\log n)$.
\end{lemma}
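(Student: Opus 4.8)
The plan is to mirror the structure of the proof for pure coordination games (Lemma~\ref{lem:pure_coord_partition}), since the graphical coordination game on the ring has the same two "all-agree" configurations $\p = (+1)^n$ and $\m = (-1)^n$ as its dominant low-potential states, with a potential landscape that is again governed by the number of "disagreement edges" along the cycle. Concretely, I would set $R_1 = \{\p\}$, $R_2 = \{\m\}$, and $R_3 = \{+1,-1\}^n \setminus \{\p, \m\}$, and take the partition $T_i = R_i$ for $i = 1,2,3$ with $N = \emptyset$. The first two items of the definition are then immediate: the bottleneck ratios of $R_1$ and $R_2$ are super-polynomial for $\beta = \omega(\log n)$ — this is exactly the content of the known mixing-time lower bound for this game cited via \cite{afpppSPAA11j,bkmp2005}, which proceeds precisely by exhibiting the bottleneck at $\p$ (and symmetrically $\m$) — and the restricted chains on the singletons $R_1, R_2$ trivially mix in constant time, so their mixing times are polynomial. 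Item~3 is vacuous for $T_1$ and $T_2$ because leaving a singleton $R_i = T_i$ requires at least one step, and the probability of moving out of $\p$ (resp.\ $\m$) in a polynomial number of steps is bounded by (number of steps) times the escape probability per step, which is super-polynomially small for $\beta = \omega(\log n)$; one can also read this off directly from $B(R_i) \le 1/q(n)$ via Lemma~\ref{lemma:hitting_bottle_ub}.

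The substantive work, exactly as in the pure coordination case, is showing that $R_3$ is "easy-to-mix" (item~2 for $i=3$) and that its core $T_3 = R_3$ is "hard-to-leave" (item~3 for $i=3$); both hinge on understanding the logit dynamics restricted to $\{+1,-1\}^n \setminus \{\p,\m\}$. For the mixing time of the restricted chain, I would argue that once we forbid the two monochromatic configurations, the potential barrier separating the $+1$-majority region from the $-1$-majority region is only $O(1)$ high (crossing from mostly-$+1$ to mostly-$-1$ passes through configurations with $\Theta(1)$ disagreement edges on the ring — in fact the minimum number of disagreement edges on any non-monochromatic cut of a cycle is $2$), so the restricted chain no longer has a bottleneck and standard canonical-path or comparison arguments (as used for the hypercube-type analysis in \cite{lpwAMS08}) give a polynomial mixing time. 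The cleanest route is to compare the restricted chain to a lazy walk on a one-dimensional structure indexed by the number of $+1$'s (the "magnetization"), the key point being that the ring's cut structure means the restricted chain's stationary distribution has no deep well anywhere except near $\p$ and $\m$, which have been excised.

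The hard part will be item~3 for $T_3$: showing that starting from \emph{any} profile $\x \in R_3$, the dynamics does not hit $\{\p,\m\}$ within $\tm^{R_3}(\varepsilon)$ steps except with probability $\le \varepsilon$. This is the analogue of the birth-and-death computation in Lemma~\ref{lem:pure_coord_partition}: I would project the chain onto the number of players disagreeing with the (local) majority and argue that hitting $\p$ or $\m$ requires driving this statistic all the way to $0$, and that from the "balanced" configurations — which, by the item~2 analysis, is where the restricted chain spends most of its first $\tm^{R_3}(\varepsilon) = \mathrm{poly}(n)$ steps — the expected hitting time of $\{\p,\m\}$ is super-polynomial. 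The subtlety compared with the clique case is that on the ring the projected process is \emph{not} an exact Markov chain (the transition rates depend on the detailed arrangement of disagreement edges, not just their count), so I would either (i) dominate it by a birth-and-death chain with appropriately pessimistic rates and invoke Section~2.5 of \cite{lpwAMS08} as before, or (ii) use the bottleneck-to-hitting-time connection (the lower-bound direction, Lemma~\ref{lemma:hitting_bottle_ub}'s companion) applied to the set $\{\p\}$ (resp.\ $\{\m\}$) inside the chain restricted to $R_3 \cup \{\p\}$, whose bottleneck ratio is super-polynomially small; a simple Markov-inequality argument then converts the super-polynomial expected hitting time into the required $\varepsilon$ bound on hitting within $\mathrm{poly}(n)$ steps. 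Since $N = \emptyset$, item~4 is vacuous, and the lemma follows.
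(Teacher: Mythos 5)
Your decomposition diverges from the paper's in a way that breaks the proof. The paper takes only $R_1=\{\p\}$, $R_2=\{\m\}$, sets $T_1=\{\p\}$, $T_2=\{\m\}$, and puts \emph{all} of $\{+1,-1\}^n\setminus\{\p,\m\}$ into $N$, verifying item~4 by citing the result of \cite{afppSODA12j} that from any non-monochromatic profile the dynamics hits $\p$ or $\m$ in polynomial time with high probability. You instead promote $R_3=S\setminus\{\p,\m\}$ to a third metastable region with $T_3=R_3$ and $N=\emptyset$, importing the structure of Lemma~\ref{lem:pure_coord_partition}. That transfer fails here, for two related reasons. First, item~1 fails for $R_3$: unlike the pure coordination game, where all non-monochromatic profiles carry equal Gibbs weight so that $\pi(R_3)=\Theta(2^n/Z)$ dwarfs the boundary term and $B(R_3)$ is exponentially small, on the ring the potential is governed by the number of disagreement edges, and for $\beta=\omega(\log n)$ the measure $\pi$ restricted to $R_3$ concentrates on the $\Theta(n^2)$ single-block configurations (two disagreement edges) --- a polynomial fraction of which are Hamming-neighbors of $\p$ or $\m$ and transition into them with probability $\Theta(1/n)$ per step. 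A direct computation gives $B(R_3)=\Theta(1/\mathrm{poly}(n))$, not $1/q(n)$ for any super-polynomial $q$. Second, and consequently, item~3 fails for $T_3$: your intuition that the projected "disagreement" statistic resists being driven to $0$ is backwards for the ring. At large $\beta$ a block of minority spins shrinks (or merges) under the dynamics essentially as an unbiased-to-contracting interface walk with no potential barrier, so from \emph{every} $\x\in R_3$ the hitting time of $\{\p,\m\}$ is polynomial with high probability --- which is exactly the fact the paper exploits to place these profiles in $N$. The parts of your argument concerning $R_1$, $R_2$, $T_1$, $T_2$ are fine and match the paper; the fix is to discard $R_3$ as a metastable region, set $N=S\setminus\{\p,\m\}$, and prove (or cite) the polynomial hitting time of $T_1\cup T_2$ from $N$.
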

\begin{proof}
 Let $R_1 = \{\p\}$ and $R_2 = \{\m\}$, $\p$ and $\m$ defined as in the proof of Lemma \ref{lem:pure_coord_partition}.
 Note that the bottleneck ratio of $R_1$ and $R_2$ is known to be super-polynomial for $\beta = \omega(\log n)$
 and the mixing time of the chains restricted to $R_1$ and $R_2$ is trivially polynomial.
 
 Consider now the following partition of $S$: $T_1 = \{\p\}$, $T_2 = \{\m\}$ and $N = \{+1, -1\}^n \setminus \{\p, \m\}$.
 Clearly, $T_1$ and $T_2$ satisfy the property required by the definition of partitioned games.
 Finally, the results in \cite{afppSODA12j} show that the hitting time of either $\p$ or $\m$
 when the chain starts from $\x \in N$ is polynomial.
\end{proof}

\paragraph{The game of Lemma \ref{lem:potg_no_asymMeta}.}
Finally, we consider the non-asymptotically metastable game of Lemma \ref{lem:potg_no_asymMeta} and prove the following corollary of Theorem \ref{thm:part_meta}.
\begin{lemma}
 \label{lem:potg_no_partitionable}
 Let $\G$ be the game defined in Lemma \ref{lem:potg_no_asymMeta}.
 Then, for any $\beta$ sufficiently large, $\G$ is not partitioned by the logit dynamics.
\end{lemma}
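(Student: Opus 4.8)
The plan is to read the statement off as the contrapositive of Theorem~\ref{thm:part_meta}. Suppose, towards a contradiction, that for some $\beta$ at least as large as required by Lemma~\ref{lem:potg_no_asymMeta} the game $\G$ defined in that lemma is partitioned by the logit dynamics for $\beta$. Then Theorem~\ref{thm:part_meta} applies and yields that the logit dynamics for $\G$ is asymptotically metastable for that very $\beta$; but this is exactly what Lemma~\ref{lem:potg_no_asymMeta} excludes for every sufficiently large $\beta$. Hence $\G$ is not partitioned by the logit dynamics for $\beta$, and the lemma follows.

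The one point that needs care is the constant $\varepsilon$: the notions ``partitioned'' and ``asymptotically metastable'' each quantify over their own $\varepsilon$, and the $\varepsilon$ delivered in the conclusion of Theorem~\ref{thm:part_meta} is a fixed constant multiple of the $\varepsilon$ witnessing the partition (it comes from the triangle inequalities used inside that proof). Since being partitioned is monotone in $\varepsilon$ --- enlarging $\varepsilon$ only weakens the last three conditions in the definition of partitioned game and leaves the first untouched --- it is enough to rule out partitions whose $\varepsilon$ is a small enough constant, for which the resulting metastability constant lies below the threshold $1/4$ appearing in Lemma~\ref{lem:potg_no_asymMeta}. Getting this bookkeeping exactly right is the bulk of the write-up.

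It is instructive to locate the obstruction concretely at the profile $(1,\dots,1)$, whose bottleneck ratio is $\varepsilon/\Time(n)$, where $\Time$ is the function built in Lemma~\ref{lem:potg_no_asymMeta} that is deliberately placed ``between'' polynomial and super-polynomial growth. In any purported partition $R_1,\dots,R_k,T_1,\dots,T_k,N$ the block containing $(1,\dots,1)$ cannot meet all four requirements at once: if $(1,\dots,1)\in N$, then the fourth requirement, combined with the bound of Lemma~\ref{lemma:hitting_bottle_ub} (the chain leaves $(1,\dots,1)$ within $t$ steps with probability at most $t\cdot\varepsilon/\Time(n)$), would force $\Time$ to be polynomial; and if $(1,\dots,1)\in T_i\subseteq R_i$, then either $R_i=\{(1,\dots,1)\}$ and the first requirement forces $\Time$ to be super-polynomial, or $R_i\supsetneq\{(1,\dots,1)\}$ and one shows, via Theorem~\ref{theorem:bottleneck} applied to the restricted chain $\mathring{P}_{R_i}$, that $\tm^{R_i}(\varepsilon)=\Omega(\Time(n)/\varepsilon)$, contradicting the second requirement as soon as $\Time$ outgrows a polynomial. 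Since $\Time$ is neither polynomial nor super-polynomial, in each case the failure occurs for infinitely many $n$, so the ``for all $n\ge n_0$'' clause in the definition breaks. The main obstacle in making this alternative route fully rigorous is the sub-case $R_i\supsetneq\{(1,\dots,1)\}$, where one must first control the conditional stationary mass $\pi_{R_i}((1,\dots,1))$ --- a short case split exploiting that $(1,\dots,1)$ is separated from the rest of the state space by the potential barrier of $\G$ --- before invoking the bottleneck estimate.
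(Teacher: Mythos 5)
Your second, ``instructive'' sketch --- ruling out $(1,\ldots,1)\in N$ via condition 4 and Lemma~\ref{lemma:hitting_bottle_ub}, ruling out $R_i=\{(1,\ldots,1)\}$ via condition 1, and killing the remaining case $R_i\supsetneq\{(1,\ldots,1)\}$ by applying the bottleneck lower bound of Theorem~\ref{theorem:bottleneck} to the restricted chain $\mathring{P}_{R_i}$ after first checking that $\pi_{R_i}(1,\ldots,1)\leq 1/2$ for large $\beta$ --- is essentially verbatim the proof the paper gives, down to the remark that the conditional stationary mass at $(1,\ldots,1)$ is the point needing care. Had you presented only that argument, the verdict would simply be ``same approach, correct.''

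Your primary route, however, contains a genuine gap, and the patch you propose points in the wrong direction. As you yourself observe, being partitioned is monotone \emph{increasing} in $\varepsilon$: a partition witnessing the definition for $\varepsilon'$ also witnesses it for every $\varepsilon''\geq\varepsilon'$. Consequently ``not partitioned'' means that \emph{no} $\varepsilon$ works, and by monotonicity the hard case is large $\varepsilon$, not small $\varepsilon$; ruling out partitions with small $\varepsilon$ rules out nothing further. Since Theorem~\ref{thm:part_meta} converts a partition with parameter $\varepsilon$ into asymptotic metastability with parameter roughly $3\varepsilon$, while Lemma~\ref{lem:potg_no_asymMeta} only excludes asymptotic metastability for $\varepsilon<1/4$, the contrapositive excludes only partitions with $\varepsilon$ below about $1/12$; a hypothetical partition with larger $\varepsilon$ is untouched, so the lemma as stated does not follow. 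This is precisely why the paper, although it announces the lemma as a ``corollary'' of Theorem~\ref{thm:part_meta}, actually proves it by the direct structural argument: the obstruction at $(1,\ldots,1)$, whose bottleneck ratio $\varepsilon_0/\Time(n)$ is infinitely often smaller than the inverse of every polynomial and infinitely often larger than the inverse of every super-polynomial, defeats conditions 1, 2 and 4 for \emph{every} constant $\varepsilon$ simultaneously, for every choice of $p$ and $q$, and for infinitely many $n$. Keep the direct argument as the proof and drop, or properly repair, the contrapositive.
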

\begin{proof}
 The bottleneck of $(1, \ldots, 1)$ is asymptotically larger than any polynomial and smaller than any super-polynomial.
 Hence this profile cannot be contained in $N$ and it is not possible that $R_i = \{(1, \ldots, 1)\}$ for some $i$.
 Thus, it must be the case that $R_i = L$ for some subset $L \subseteq S$,
 such that $(1, \ldots, 1) \in L$, and $L \neq \{(1, \ldots, 1)\}$.
 Then, for $\beta$ sufficiently large $\pi_L(1, \ldots, 1) \leq 1/2$.
 But, since the bottleneck of $(1, \ldots, 1)$ is asymptotically larger than any polynomial,
 the mixing time of the chain restricted to $L$ is not polynomial.
\end{proof}

\subsection{Proof of Theorem \ref{thm:part_meta}}
A high level idea of the proof is discussed next.
We initially need to define the metastable distributions to which the logit dynamics converges.
To describe the distributions of interest for the given $n$-player 
game $\G$,
we leverage known results connecting bottleneck ratio and metastability.
In particular, it turns out that the stationary distribution of the dynamics restricted to of a subset of profiles
with bottleneck ratio at most the inverse of a super-polynomial, as defined in \eqref{eq:piL},
is metastable for a super-polynomial amount of time (Theorem~\ref{lem:meta:bottleneck}).
In this way we can easily ``build'' metastable distributions from the sets $R_i$
given by the definition of partitioned games (cf. Proposition~\ref{prop:meta_description}).

What about the pseudo-mixing time of this distribution?
We distinguish two cases. First we consider profiles that are in the ``\emph{core}'' of the support of this distribution,
namely the sets $T_i$ given by the definition of partitioned game.
We show that the pseudo-mixing time from these profiles is related to the mixing time of
the dynamics restricted to $R_i$ as described in \eqref{eq:restricted_loop} (see Corollary~\ref{cor:mix_ristrected_pseudo}).
Then, being the mixing time of the chains restricted to $R_i$ polynomial,
it follows that the pseudo-mixing time from the core is polynomial.

What about out-of-core profiles?
Suppose that there is a profile from which the dynamics takes long time to converge to a metastable distribution.
Then it must be the case that the dynamics takes long time to hit the core of one such distribution with high probability.
However, this cannot be the case since, by definition of partitioned games,
the logit dynamics from any non-core profile quickly hits a profile in the core of some distribution.

Next we formally prove Theorem~\ref{thm:part_meta}.
In particular, the proof follows from the Propositions~\ref{prop:meta_description}, \ref{prop:pmt_closeNash} and \ref{prop:pmt_farNash}
given below that, respectively, describe the metastable distributions, bound the pseudo-mixing time from the core,
and evaluate the behavior of the dynamics starting from non-core profiles.

\paragraph{Metastable distributions.}
We start by proving that some distributions defined on the sets $R_i$
are metastable for super-polynomial time.
\begin{prop}
\label{prop:meta_description}
Let $\G$ be a $n$-player game and consider the stationary distribution $\pi$ of the logit dynamics for $\G$.
If $\G$ is partitioned by the logit dynamics for $\beta$,
then for every $\varepsilon > 0$
there exists a function $\Time=\Time_\varepsilon$ at least super-polynomial in the input such that
for each%
\footnote{There can be values of $n$ for which the partition does not use the $i$-th ``component'' and thus $R_i, T_i$ and $\mu_i$ are not well defined. However, as long as there are infinite values of $n$ for which $R_i$ is given then asymptotic bounds on the metastability and the pseudo-mixing time of $\mu_i$ are well defined. Since the partition contains at least one ``component'' for any input, we have that there exists $n_0$ such that for $i \leq \max_{n \geq n_0} k(n)$, $R_i$ is defined infinite times.}
$i$,
and $n$ large enough,
the distribution $\mu_i$ that sets $\mu_i(\x) = \pi(\x) / \pi(R_i)$
is $(\varepsilon, \Time(n))$-metastable for the given $\beta$.
\end{prop}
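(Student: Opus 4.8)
The plan is to combine the two metastability lemmata from \cite{afppSODA12j} quoted above with property~1 of partitioned games. Fix $\varepsilon > 0$ and let $n_0$, $p = p_\varepsilon$, $q = q_\varepsilon$ be the constants and (super-)polynomials guaranteed by the hypothesis that $\G$ is partitioned by the logit dynamics for $\beta$. For each $n \geq n_0$ and each index $i$ for which the component $R_i$ is defined, property~1 gives $B(R_i) \leq 1/q(n)$. Note that the bottleneck ratio appearing here is the one of $R_i$ in the \emph{original} chain $P$ (not the restricted chain $\mathring P_{R_i}$), which is exactly what Lemma~\ref{lem:meta:bottleneck} wants.

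First I would apply Lemma~\ref{lem:meta:bottleneck} to the set $L = R_i$: since $\mu_i$ is precisely the distribution $\pi_{R_i}$ of \eqref{eq:piL}, the lemma tells us that $\mu_i$ is $(B(R_i), 1)$-metastable, hence $(1/q(n), 1)$-metastable by property~1. Next I would boost the time horizon using Lemma~\ref{lem:meta:1}: a $(\varepsilon', 1)$-metastable distribution is $(\varepsilon' \Time, \Time)$-metastable for \emph{any} $\Time$. So $\mu_i$ is $\bigl(\Time(n)/q(n), \Time(n)\bigr)$-metastable for every choice of the function $\Time$. It remains to pick $\Time$ so that simultaneously (a) $\Time(n)/q(n) \leq \varepsilon$ for all large $n$, and (b) $\Time$ is at least super-polynomial. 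Since $q$ is super-polynomial, I can take $\Time(n) = \varepsilon \cdot \sqrt{q(n)}$, say, or more simply any function with $\Time(n) = \omega(\mathrm{poly}(n))$ and $\Time(n) = o(q(n))$ — for instance $\Time(n) = \varepsilon\, q(n)/\log q(n)$ if $q$ is nondecreasing, which is still super-polynomial because dividing a super-polynomial by a polynomial (or by $\log$ of itself) leaves it super-polynomial. With this choice, $\Time(n)/q(n) \to 0$, so in particular $\Time(n)/q(n) \leq \varepsilon$ for $n$ large enough, giving that $\mu_i$ is $(\varepsilon, \Time(n))$-metastable for all sufficiently large $n$, as claimed. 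The function $\Time$ depends only on $\varepsilon$ (through the fixed $q_\varepsilon$), consistent with the statement $\Time = \Time_\varepsilon$.

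The only subtlety — and the reason for the footnote in the statement — is uniformity of the index $i$: for some values of $n$ the partition may have fewer than $i$ components, so $R_i$, $\mu_i$ need not exist. But, as the footnote observes, $R_i$ is defined for infinitely many $n$ whenever $i \leq \max_{n \geq n_0} k(n)$, and the bound $\Time(n)/q(n) \leq \varepsilon$ holds for all large $n$ among those, so the asymptotic metastability statement for $\mu_i$ is well-posed. I do not expect any real obstacle here: the proof is essentially a two-line composition of Lemmas~\ref{lem:meta:bottleneck} and~\ref{lem:meta:1} with property~1, the only genuine content being the observation that a super-polynomial can always be ``shaved'' by a polynomial factor to still clear the threshold $\varepsilon$ while remaining super-polynomial. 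The heavier lifting — showing the pseudo-mixing time into these distributions is polynomial — is deferred to Propositions~\ref{prop:pmt_closeNash} and~\ref{prop:pmt_farNash}.
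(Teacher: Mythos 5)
Your proof is correct and follows essentially the same route as the paper's: both compose Lemma~\ref{lem:meta:bottleneck} with Lemma~\ref{lem:meta:1} and invoke property~1 of partitioned games, the only difference being that the paper takes $\Time(n) = \varepsilon/B(R_i)$ (so that $B(R_i)\cdot\Time(n)$ equals $\varepsilon$ exactly) while you shave $q$ by a lower-order factor to get the inequality. If anything, your choice of $\Time$ is uniform in $i$, which matches the quantifier order of the statement slightly more cleanly than the paper's ``Fix $i$'' opening.
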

\begin{proof}
Fix $i$. Given $\varepsilon > 0$, consider the function $\Time=\Time_\varepsilon$ such that $\Time(n) = \frac{\varepsilon}{B(R_i)} \geq \varepsilon q(n)$, where $R_i$ is the support of $\mu_i$. By the definition of $q$, $\Time$ is at least super-polynomial in the input.

By Lemma~\ref{lem:meta:bottleneck}, $\mu_i$ is $(B(R_i), 1)$-metastable. By Lemma~\ref{lem:meta:1}, $\mu_i$ is also $(B(R_i) \cdot \Time(n), \Time(n))$-metastable. The lemma follows since $B(R_i) \cdot \Time(n) = \varepsilon$.
\end{proof}

Finally, the following lemma shows that a combination of metastable distributions is metastable.
\begin{lemma}
 \label{lemma:convex_combo_metastable}
Let $P$ the transition matrix of a Markov chain with state space $S$ and  let $\mu_i$ be a distribution $(\varepsilon_i, \Time_i)$-met\-a\-sta\-ble for $P$, for $i = 1, 2, \ldots$. Set $\varepsilon = \max_i{\varepsilon_i}$ and $\Time = \min_i\{\Time_i\}$. Then, the distribution $\mu = \sum_{i} \alpha_i \mu_i$, with $\sum_{i} \alpha_i = 1$ and $\alpha_i \geq 0$, is $(\varepsilon, \Time)$-metastable.
\end{lemma}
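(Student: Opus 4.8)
The plan is to exploit the fact that the total variation distance is, up to the factor $1/2$, the $\ell_1$-norm on the space of signed measures on $S$, and hence is a genuine seminorm: it is positively homogeneous and satisfies the triangle inequality. First I would record that $P^t$ acts linearly on (signed) measures, so that $\mu P^t = \left(\sum_i \alpha_i \mu_i\right) P^t = \sum_i \alpha_i \left(\mu_i P^t\right)$, and therefore
$$
\mu P^t - \mu = \sum_i \alpha_i \left( \mu_i P^t - \mu_i \right).
$$

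Next, fix any $t$ with $0 \leq t \leq \Time$. Since $\Time = \min_i \Time_i \leq \Time_i$ for every $i$, the hypothesis that $\mu_i$ is $(\varepsilon_i,\Time_i)$-metastable gives $\tv{\mu_i P^t - \mu_i} \leq \varepsilon_i \leq \varepsilon$ for every $i$. Applying the triangle inequality and homogeneity of $\tv{\cdot}$, and using $\alpha_i \geq 0$ together with $\sum_i \alpha_i = 1$, we obtain
$$
\tv{\mu P^t - \mu} \;\leq\; \sum_i \alpha_i \, \tv{\mu_i P^t - \mu_i} \;\leq\; \varepsilon \sum_i \alpha_i \;=\; \varepsilon.
$$
As $t$ was an arbitrary element of $\{0,1,\dots,\Time\}$, this is precisely the assertion that $\mu$ is $(\varepsilon,\Time)$-metastable for $P$, per Definition~\ref{def:metastability}.

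I do not expect any real obstacle here: the argument is a one-line convexity estimate. The only minor point worth a sentence is that the family $\{\mu_i\}$ may be (countably) infinite, so the triangle inequality is invoked for a series; this is harmless because each partial sum is bounded by $\varepsilon$ and the tail contributes at most $\varepsilon \sum_{i > k} \alpha_i \to 0$. If one prefers, one can first prove the two-distribution case $\mu = \alpha \mu_1 + (1-\alpha)\mu_2$ and then extend to finitely many $\mu_i$ by induction and to countably many by this limiting argument.
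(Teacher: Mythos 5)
Your proof is correct and follows essentially the same route as the paper's: both decompose $\mu P^t - \mu = \sum_i \alpha_i(\mu_i P^t - \mu_i)$ via linearity and conclude by convexity of the total variation distance (the paper phrases the triangle-inequality step using the $\max_{A\subseteq S}$ characterization of $\tv{\cdot}$ rather than the $\ell_1$-seminorm language, but the estimate is identical). Your extra remark about the countably infinite case is a harmless refinement the paper glosses over.
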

\begin{proof}
 For every $t \leq \Time$ we have
 \begin{align*}
  \tv{\mu P^t - \mu} & = \max_{A \subseteq S} \left|(\mu P^t)(A) - \mu(A)\right| \\
  & = \max_{A \subseteq S} \left| \sum_{i} \alpha_i \left((\mu_i P^t)(A) - \mu_i(A)\right)\right| \\
  & \leq \sum_i \alpha_i \max_{A \subseteq S} \left|(\mu_i P^t)(A) - \mu_i(A)\right| \leq  \varepsilon.  \tag*{\qed}
\end{align*}
\let\qed\relax
\end{proof}

\paragraph{Pseudo-mixing time from the core.}
Now we prove that the logit dynamics for a $n$-player partitioned game converges in polynomial time
to the metastable distribution $\mu_i$ defined above, whenever the starting point is selected from the core $T_i$ of this distribution.
Specifically, we prove the following proposition.
\begin{prop}
 \label{prop:pmt_closeNash}
Let $\G$ be a $n$-player game and fix $\varepsilon > 0$.
If $\G$ is partitioned by the logit dynamics for $\beta$, then there is a function $p_\star$ at most polynomial in the input such that
for the given $\beta$, for each $i$ and for $n$ large enough, 
the pseudo-mixing time of $\mu_i$ from $T_i$ is $\pmt{\mu_i}{T_i}{2\varepsilon} = O(p_\star(n))$.
\end{prop}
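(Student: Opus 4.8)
The plan is to compare the full logit dynamics started in the core $T_i$ with the restricted chain $\mathring{P}_{R_i}$ of \eqref{eq:restricted_loop} — whose stationary distribution is exactly $\mu_i$, and which is ergodic because $R_i$ is connected — and then to feed in the three structural conditions in the definition of a partitioned game. Fix $i$ and set $t_0 = \tm^{R_i}(\varepsilon)$, so that $t_0 \leq p(n)$ by the second condition of partitioned games.

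First I would couple the chain $\{X_t\}$ with transition matrix $P$ started at an arbitrary $\x \in T_i$ with the chain $\{\tilde X_t\}$ with transition matrix $\mathring{P}_{R_i}$ started at the same $\x$: while both chains sit at a common state $\z \in R_i$, draw the $P$-transition out of $\z$; if its target lies in $R_i$ (in particular if the target is $\z$ itself), let $\tilde X$ copy it, and otherwise let $\tilde X$ stay at $\z$; once the chains separate, let them evolve independently. By \eqref{eq:restricted_loop} we have $\mathring{P}_{R_i}(\z,\w) = P(\z,\w)$ for all $\w \neq \z$ in $R_i$, while the self-loop mass of $\mathring{P}_{R_i}$ at $\z$ is $P(\z,\z) + \sum_{\w \in S \setminus R_i} P(\z,\w)$, so this is a legitimate coupling and it satisfies $X_t = \tilde X_t$ for every $t < \tauset{S \setminus R_i}$. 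Consequently, for every $t$,
$$\tv{P^t(\x,\cdot) - \mathring{P}_{R_i}^t(\x,\cdot)} \leq \Prob{\x}{X_t \neq \tilde X_t} \leq \Prob{\x}{\tauset{S \setminus R_i} \leq t}.$$

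Next I would evaluate the triangle inequality at $t = t_0$. Since $\mu_i$ is the stationary distribution of $\mathring{P}_{R_i}$ and $t_0$ is by definition its $\varepsilon$-mixing time, $\tv{\mathring{P}_{R_i}^{t_0}(\x,\cdot) - \mu_i} \leq \varepsilon$ for all $\x \in R_i$, hence for all $\x \in T_i$; and the third condition of partitioned games gives $\Prob{\x}{\tauset{S \setminus R_i} \leq t_0} \leq \varepsilon$ for all $\x \in T_i$. Combining,
$$\tv{P^{t_0}(\x,\cdot) - \mu_i} \leq \tv{P^{t_0}(\x,\cdot) - \mathring{P}_{R_i}^{t_0}(\x,\cdot)} + \tv{\mathring{P}_{R_i}^{t_0}(\x,\cdot) - \mu_i} \leq 2\varepsilon$$
for every $\x \in T_i$. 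Hence $t_0$ belongs to the set whose infimum defines $\pmt{\mu_i}{T_i}{2\varepsilon}$, so $\pmt{\mu_i}{T_i}{2\varepsilon} \leq t_0 \leq p(n)$, and the proposition follows with $p_\star = p$.

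The one delicate point is the coupling: one must check that, conditioned on not having escaped $R_i$, the full chain and the restricted chain have identical one-step laws from any shared state, so that the entire discrepancy between $P^{t_0}(\x,\cdot)$ and $\mathring{P}_{R_i}^{t_0}(\x,\cdot)$ is absorbed into the escape event $\{\tauset{S\setminus R_i} \leq t_0\}$; this is precisely what \eqref{eq:restricted_loop} is engineered to guarantee. Everything else is handed to us verbatim by the definition of partitioned game — the polynomial bound on $\tm^{R_i}(\varepsilon)$ and the small escape probability from the core — and no care is needed with the infimum in the definition of pseudo-mixing time since we produce the explicit witness $t_0$.
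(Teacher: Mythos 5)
Your proposal is correct and follows essentially the same route as the paper: the coupling you describe is exactly the one used in Lemma~\ref{lemma:PeqP_R}, the triangle-inequality step is Corollary~\ref{cor:mix_ristrected_pseudo}, and the final evaluation at $t_0=\tm^{R_i}(\varepsilon)$ using conditions 2 and 3 of the partitioned-game definition matches the paper's proof of Proposition~\ref{prop:pmt_closeNash} verbatim.
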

In order to prove Proposition~\ref{prop:pmt_closeNash},
for $L \subseteq S$ non-empty, consider the Markov chain defined in \eqref{eq:restricted_loop}.
Let us abuse the notation and denote with $\mathring{P}_L$ and $\pi_L$ also the Markov chain and the distribution defined on the entire state space $S$, assuming $\mathring{P}_L(\x,\y) = 0$ if $\x \notin L$ or $\y \notin L$, and similarly $\pi_L(\x) = 0$ when $\x \notin L$.

For $L \subseteq S$ we set $\partial L$ as the border of $L$, that is the set of profiles in $L$ with at least a neighbor in $S \setminus L$.
Recall that $\tauset{S \setminus L}$ is the random variable denoting the first time the Markov chain with transition matrix $P$ hits a profile $\x \in S\setminus L$.
The following lemma formally proves the intuitive fact that, by starting from a profile in $L$ the chain $P$ and the chain $\mathring{P}_L$ are the same up to the time in which the former chain hits a profile in $S\setminus L$. The proof uses the well-known coupling technique (cf., e.g., \cite{lpwAMS08}) which is summarized in Appendix~\ref{sec:Markovcoupling}.
\begin{lemma}
\label{lemma:PeqP_R}
 Let $P$ be the transition matrix of a Markov chain with state space $S$ and let $\mathring{P}_L$ be the restriction of $P$ to $L \subseteq S$, $L \neq \emptyset$, as given in~\eqref{eq:restricted_loop}. Then, for every $\x \in L$ and for every $t > 0$,
 $$
  \tv{P^t(\x,\cdot) - \mathring{P}_L^t(\x, \cdot)} \leq \Prob{\x}{\tauset{S\setminus L} \leq t}.
 $$
\end{lemma}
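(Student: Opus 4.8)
The plan is to use a coupling argument of the kind summarized in Appendix~\ref{sec:Markovcoupling}. Let $(X_t)_{t\ge 0}$ be a realization of the chain with transition matrix $P$ and $(Y_t)_{t\ge 0}$ a realization of the chain with transition matrix $\mathring{P}_L$, both started at $\x\in L$. I would build them on a common probability space so that they move together for as long as $X_t$ stays inside $L$: whenever $X_t=Y_t=\z$ for some $\z\in L$, first sample $X_{t+1}\sim P(\z,\cdot)$, and then set $Y_{t+1}=X_{t+1}$ if $X_{t+1}\in L$ and $Y_{t+1}=\z$ if $X_{t+1}\in S\setminus L$; once the two chains have ever differed (in particular, once $X_t\notin L$) let them evolve independently according to their own transition matrices.

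First I would check that this is a legitimate coupling. That $X$ has marginal law $P$ is immediate by construction. For $Y$, it suffices to verify the one-step probabilities from a state $\z\in L$: for $\y\in L$ with $\y\ne\z$ we get $\Prob{}{Y_{t+1}=\y\mid Y_t=\z}=P(\z,\y)=\mathring{P}_L(\z,\y)$, while the event $\{Y_{t+1}=\z\}$ occurs exactly when $X_{t+1}=\z$ or $X_{t+1}\in S\setminus L$, which has probability $P(\z,\z)+\sum_{\w\in S\setminus L}P(\z,\w)=\mathring{P}_L(\z,\z)$, matching the definition \eqref{eq:restricted_loop}. Next, by induction on $t$, as long as the chains have been identical and inside $L$ up to time $t$, they stay identical at time $t+1$ precisely on the event $\{X_{t+1}\in L\}$; hence the first time they can separate is $\tauset{S\setminus L}$ (at which $X$ leaves $L$ but $Y$ stays). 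Therefore $\{X_t\ne Y_t\}\subseteq\{\tauset{S\setminus L}\le t\}$, and so $\Prob{\x}{X_t\ne Y_t}\le\Prob{\x}{\tauset{S\setminus L}\le t}$. The claim then follows from the coupling inequality, since $X_t\sim P^t(\x,\cdot)$ and $Y_t\sim\mathring{P}_L^t(\x,\cdot)$:
$$
\tv{P^t(\x,\cdot)-\mathring{P}_L^t(\x,\cdot)}\le\Prob{\x}{X_t\ne Y_t}\le\Prob{\x}{\tauset{S\setminus L}\le t}.
$$

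The only delicate point — the one I would state most carefully — is the bookkeeping that makes $(X_t,Y_t)_{t\ge 0}$ a genuine Markovian coupling with the correct marginals even after the two chains separate (e.g.\ by declaring that after separation the pair evolves as a product chain), together with the verification that $Y$ indeed realizes $\mathring{P}_L$ rather than $P$. Everything else is a direct check against \eqref{eq:restricted_loop} and an application of the standard coupling inequality.
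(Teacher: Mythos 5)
Your proposal is correct and follows essentially the same route as the paper: the paper constructs exactly this coupling (splitting the case $X_i=Y_i\in L$ into the subcases $L\setminus\partial L$ and $\partial L$, which your single rule subsumes), observes that $X_t\neq Y_t$ forces $\tauset{S\setminus L}\leq t$, and applies the coupling inequality of Theorem~\ref{thm:coupling}. Your explicit verification that $Y$ has marginal $\mathring{P}_L$ is a detail the paper leaves implicit, but it matches \eqref{eq:restricted_loop} exactly as you state.
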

\begin{proof}
Consider the following coupling $(X_t, Y_t)_{t>0}$ of the Markov chains with transition matrix $P$ and $\mathring{P}_L$, respectively:
 \begin{itemize}
  \item If $X_i = Y_i \in L \setminus \partial L$, then we update the first chain according to $P$ and obtain $X_{i+1}$; we then set  $Y_{i+1}=X_{i+1}$;
  \item If $X_i = Y_i \in \partial L$, then we update the first chain according to $P$: if $X_{i+1} \in L$, then we set $Y_{i+1} = X_{i+1}$, otherwise we set $Y_{i+1} = Y_i$;
  \item If $X_i \neq Y_i$, then we update the chains independently.
 \end{itemize}
 Since $X_0 = Y_0 = \x \in L$, we have that $X_t \neq Y_t$ only if $\tauset{S \setminus L} \leq t$. Thus, by the properties of couplings (see Theorem~\ref{thm:coupling}), we have
\[
  \tv{P^t(\x,\cdot) - \mathring{P}_L^t(\x, \cdot)} \leq \Prob{\x}{X_t \neq Y_t} \leq \Prob{\x}{\tauset{S \setminus L} \leq t}.  \tag*{\qed}
\]
\let\qed\relax
\end{proof}

The following corollary follows from the Lemma~\ref{lemma:PeqP_R} and the triangle inequality property of the total variation distance.
\begin{cor}
 \label{cor:mix_ristrected_pseudo}
 Let $P$ the transition matrix of a Markov chain with state space $S$ and let $\mathring{P}_L$ be the restriction of $P$ to a non-empty $L \subseteq S$ as given in \eqref{eq:restricted_loop}. Then, for every $\x \in L$ and for every $t > 0$,
 $$
  \tv{P^t(\x, \cdot) - \pi_L} \leq \tv{\mathring{P}^t_L(\x, \cdot) - \pi_L} + \Prob{\x}{\tauset{S\setminus L} \leq t}.
 $$
\end{cor}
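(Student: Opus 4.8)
The plan is to combine the triangle inequality for total variation distance with the coupling bound established in Lemma~\ref{lemma:PeqP_R}. Recall that, by the convention adopted just before this corollary, both $\mathring{P}_L^t(\x,\cdot)$ and $\pi_L$ are regarded as probability distributions over the full state space $S$ (extended by zero outside $L$), so that for $\x \in L$ all three objects $P^t(\x,\cdot)$, $\mathring{P}_L^t(\x,\cdot)$ and $\pi_L$ are distributions on the same space and the total variation distance between any two of them is well defined.

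First I would invoke the fact that $\tv{\cdot-\cdot}$ is a metric on probability distributions over $S$ and apply the triangle inequality with the intermediate point $\mathring{P}_L^t(\x,\cdot)$:
$$
\tv{P^t(\x,\cdot) - \pi_L} \leq \tv{P^t(\x,\cdot) - \mathring{P}_L^t(\x,\cdot)} + \tv{\mathring{P}_L^t(\x,\cdot) - \pi_L}.
$$
Then I would bound the first term on the right-hand side directly by Lemma~\ref{lemma:PeqP_R}, which gives $\tv{P^t(\x,\cdot) - \mathring{P}_L^t(\x,\cdot)} \leq \Prob{\x}{\tauset{S\setminus L} \leq t}$ for every $\x \in L$ and every $t>0$. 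Substituting this into the displayed inequality yields precisely
$$
\tv{P^t(\x, \cdot) - \pi_L} \leq \tv{\mathring{P}^t_L(\x, \cdot) - \pi_L} + \Prob{\x}{\tauset{S\setminus L} \leq t},
$$
which is the claim.

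There is no real obstacle here: the content of the corollary is entirely carried by Lemma~\ref{lemma:PeqP_R}, and the only steps to check are that the triangle inequality is legitimately applied (it is, since total variation is a metric on distributions over $S$) and that the zero-extension convention makes the three distributions comparable for starting profiles $\x \in L$. The corollary is thus mainly a convenient repackaging: it will be used to reduce bounding the pseudo-mixing time of $\mu_i=\pi_{R_i}$ from $T_i$ to (i) bounding the mixing time $\tm^{R_i}(\varepsilon)$ of the restricted chain, and (ii) controlling the escape probability $\Prob{\x}{\tauset{S\setminus R_i}\leq t}$, both of which are supplied by the definition of a partitioned game.
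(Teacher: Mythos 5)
Your proof is correct and matches the paper's own argument exactly: the paper states that the corollary "follows from Lemma~\ref{lemma:PeqP_R} and the triangle inequality property of the total variation distance," which is precisely the two-step decomposition you give. Nothing is missing.
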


Using Corollary~\ref{cor:mix_ristrected_pseudo} we can prove Proposition~\ref{prop:pmt_closeNash}.
\begin{proof}[Proof of Proposition~\ref{prop:pmt_closeNash}.]
For each $\x \in T_i$, by Corollary~\ref{cor:mix_ristrected_pseudo} and since $\Prob{\x}{\tauset{S\setminus R_i} \leq \tm^{R_i}(\varepsilon)} \leq \varepsilon$, we obtain
 $$
 \tv{P^{\tm^{R_i}(\varepsilon)}(\x, \cdot) - \mu_i} \leq \varepsilon + \varepsilon.
 $$
 The lemma follows from the observation that $\tm^{R_i}(\varepsilon)$ is at most a polynomial by definition of partitioned game.
\end{proof}

\paragraph{Pseudo-mixing time starting from the remaining profiles.}
Consider the distributions $\mu_i$ defined above (i.e., the stationary distribution restricted to $R_i$).
We focus here on the convergence time to distributions of the form
$$
\nu(\y) = \sum_{i} \alpha_i \mu_i(\y),
$$
for $\alpha_i \geq 0$ and $\sum_i \alpha_i = 1$.
Specifically,
for every profile $\x \in N$, we define the distribution
\begin{equation}
 \label{eq:nu_distro}
 \nu_\x(\y) = \sum_{i} \mu_i(\y) \cdot \Prob{\x}{X_{\tauset{S \setminus N}} \in T_i \mid \tauset{S \setminus N} \leq \Time_{S \setminus N}^\varepsilon(\x)},
\end{equation}
where $\Time_{S \setminus N}^\varepsilon(\x)$ is the first time step $t$ in which $\Prob{\x}{\tauset{S \setminus L} > t} \leq \varepsilon$.
Observe that by definition of $\tauset{S\setminus N}$, since the $T_i$'s and $N$ are a partition of $S$, $X_{\tauset{S\setminus N}} \in \cup_i T_i$ is a certain event for all values of $\tauset{S\setminus N}$.
Moreover, by the definition of $\Time_{S \setminus N}^\varepsilon(\x)$, the event $\tauset{S \setminus N} \leq \Time_{S \setminus N}^\varepsilon(\x)$ has non-zero probability and thus we can condition on it.
Thus, $\sum_i \Prob{\x}{X_{\tauset{S \setminus N}} \in T_i \mid \tauset{S \setminus N} \leq \Time_{S \setminus N}^\varepsilon(\x)}=1$.
The above is then a valid definition of the $\alpha_i$'s.

Then, we prove the following proposition.
\begin{prop}
\label{prop:pmt_farNash}
Let $\G$ be a $n$-player game and fix $\varepsilon > 0$.
If $\G$ is partitioned by the logit dynamics for $\beta$,
then there is a function $\Time=\Time_\varepsilon$ at least super-polynomial in the input and a function $p_\star$ at most polynomial in the input
such that for the given $\beta$, for every $n$ large enough and for each $\x \in N$
the corresponding distribution $\nu_\x$ is $(\varepsilon, \Time(n))$-metastable
and the pseudo-mixing time of $\nu_\x$ from the profile $\x$ is $\pmt{\nu_\x}{\{\x\}}{3\varepsilon} = O(p_\star(n))$.
\end{prop}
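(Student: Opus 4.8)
The plan is to prove the two assertions separately, the metastability of $\nu_\x$ being the easy one and the pseudo-mixing bound the substantive part. For metastability, Proposition~\ref{prop:meta_description} already supplies, for the fixed $\varepsilon$, a single super-polynomial function $\Time=\Time_\varepsilon$ — independent of $\x$ and of the index $i$ — such that, for $n$ large enough, each $\mu_i$ is $(\varepsilon,\Time(n))$-metastable. Since $\nu_\x=\sum_i\alpha_i\mu_i$ is a convex combination of the $\mu_i$'s, with coefficients $\alpha_i=\Prob{\x}{X_{\tauset{S\setminus N}}\in T_i\mid \tauset{S\setminus N}\le \Time_{S\setminus N}^\varepsilon(\x)}\ge 0$ summing to $1$ (as observed just below \eqref{eq:nu_distro}), Lemma~\ref{lemma:convex_combo_metastable} immediately gives that $\nu_\x$ is $(\varepsilon,\Time(n))$-metastable, with this same $\Time$.

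For the pseudo-mixing time I would write $\tau=\tauset{S\setminus N}$; since $T_1,\dots,T_k,N$ partition $S$ we have $S\setminus N=\bigcup_i T_i$, so (when $\tau<\infty$) $X_\tau$ lies in exactly one core $T_i$. Set $T^\star=\Time_{S\setminus N}^\varepsilon(\x)$ and $t_i^\star=\tm^{R_i}(\varepsilon)$; then $T^\star\le p(n)$ by property~4 of partitioned games and $t_i^\star\le p(n)$ by property~2, so the deterministic time $t=p_\star(n)$, where $p_\star(n):=T^\star+\max_i t_i^\star\le 2p(n)$, is polynomial. Using the strong Markov property at the stopping time $\tau$ (note $t\ge T^\star\ge\tau$ on $\{\tau\le T^\star\}$), decompose, for an arbitrary $A\subseteq S$,
\[
 P^t(\x,A)=\Expec{\x}{\mathbf 1\{\tau\le T^\star\}\;P^{\,t-\tau}(X_\tau,A)}+\Prob{\x}{X_t\in A,\ \tau>T^\star},
\]
and bound the last term by $\Prob{\x}{\tau>T^\star}\le\varepsilon$. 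On the event $\{\tau\le T^\star,\ X_\tau\in T_i\}$ one has $t_i^\star\le t-\tau\le t\le 2p(n)\le t_i^\star+\Time(n)$ once $n$ is large (here the super-polynomiality of $\Time$ is used); hence, writing $P^{t-\tau}(X_\tau,\cdot)=P^{t_i^\star}(X_\tau,\cdot)\,P^{\,t-\tau-t_i^\star}$, and combining the contraction of total variation under a stochastic matrix with Proposition~\ref{prop:pmt_closeNash} (more precisely its proof, giving $\tv{P^{t_i^\star}(\y,\cdot)-\mu_i}\le 2\varepsilon$ for every $\y\in T_i$) and the $(\varepsilon,\Time(n))$-metastability of $\mu_i$, one gets $\bigl|P^{t-\tau}(X_\tau,A)-\mu_i(A)\bigr|\le 2\varepsilon+\varepsilon=3\varepsilon$. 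Substituting this, separating the $\mu_i(A)$-terms, and using $\Prob{\x}{\tau\le T^\star,\ X_\tau\in T_i}=\Prob{\x}{\tau\le T^\star}\,\alpha_i$ shows that the first summand equals $\Prob{\x}{\tau\le T^\star}\,\nu_\x(A)$ up to an additive error of absolute value at most $3\varepsilon$. Collecting everything and using $\Prob{\x}{\tau\le T^\star}\ge 1-\varepsilon$ yields $\tv{P^t(\x,\cdot)-\nu_\x}\le 3\varepsilon$ (the precise constant comes from a short sign accounting of the error contributions, or, if preferred, from a harmless rescaling of $\varepsilon$ at the outset); since $t=p_\star(n)$, this is exactly $\pmt{\nu_\x}{\{\x\}}{3\varepsilon}=O(p_\star(n))$.

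The step I expect to be the main obstacle is the aggregation just sketched: one must make the weights produced by conditioning a trajectory first on $\{\tau\le T^\star\}$ and then on $X_\tau\in T_i$ coincide with the weights $\alpha_i$ appearing in \eqref{eq:nu_distro}, while keeping three error sources simultaneously under control — the rare event $\{\tau>T^\star\}$, the $2\varepsilon$-gap from Proposition~\ref{prop:pmt_closeNash} at time $t_i^\star$, and the extra $\varepsilon$ incurred because after hitting $T_i$ the chain runs for $t-\tau$ rather than exactly $t_i^\star$ further steps. Controlling this last contribution is precisely where the super-polynomiality of $\Time$ is essential: the surplus of at most $2p(n)$ steps still lies inside the $(\varepsilon,\Time(n))$-metastability window of every $\mu_i$ for $n$ large, so that window absorbs it. Everything else — the strong Markov decomposition, the total-variation contraction inequality, and the estimates $T^\star\le p(n)$ and $t_i^\star\le p(n)$ from properties~4 and~2 — is routine.
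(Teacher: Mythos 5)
Your proposal is correct and follows essentially the same route as the paper: metastability of $\nu_\x$ via Lemma~\ref{lemma:convex_combo_metastable}, and for the pseudo-mixing time a decomposition on the event $\{\tauset{S\setminus N}\le \Time_{S\setminus N}^\varepsilon(\x)\}$, the strong Markov property at the hitting time to identify the conditional hitting weights with the $\alpha_i$ of \eqref{eq:nu_distro}, and then core pseudo-mixing plus the super-polynomial metastability window of each $\mu_i$ to absorb the extra (polynomially many) steps after hitting --- your explicit ``contraction plus metastability'' bound is exactly the content of Lemma~\ref{lem:metaonetot}, which the paper cites instead. The only discrepancy is bookkeeping: starting from the $2\varepsilon$ guarantee of Proposition~\ref{prop:pmt_closeNash} your accounting yields $4\varepsilon$ rather than $3\varepsilon$ (the paper itself glosses over this by invoking $\pmt{\mu_i}{T_i}{\varepsilon}$), which is harmless for the asymptotic claim.
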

\begin{proof}
 Notice that, the distribution $\nu_\x$ is a convex combination of distributions that are metastable for super-polynomial time: thus, from Lemma~\ref{lemma:convex_combo_metastable}, there exist a function $\Time$ at least super-polynomial in the input such that each such $\nu_\x$ is $(\varepsilon, \Time(n))$-metastable.

Moreover, from the definition of partitioned game, we have that $\Time_{S \setminus N}^\varepsilon\left(\x\right)$
is at most a polynomial $\rho_\star(n)$ for any $\x \in N$.
Consider then the function $p_\star(\cdot)$ such that $p_\star(n) = \rho_\star(n) + \max_i \pmt{\mu_i}{T_i}{\varepsilon}$. From Proposition~\ref{prop:pmt_closeNash} and the fact that $\rho_\star$ is at most polynomial,
it turns out that $p_\star(\cdot)$ is at most a polynomial function in the input.
We complete the proof by showing that, for any sufficiently large $n$ and any $\x \in N$,
$p_\star(n)$ upper bounds the pseudo-mixing time $\pmt{\nu_\x}{\{\x\}}{3\varepsilon}$ to $\nu_\x$ from the profile $\x$.

We set $t^\star = p^\star(n)$, and denote with $E$ the event ``$\tauset{S \setminus N} \leq \Time_{S\setminus N}^\varepsilon(\x)$'' and with $\overline{E}$ its complement. Recall from Definition \ref{def:LD} that $X_t$ denotes the state of the Markov chain defined by logit dynamics at step $t$ and observe that
\begin{align*}
\tv{P^{t^\star}(\x,\cdot) - \nu_\x} & =  \max_{A \subset S} \left| \Prob{\x}{X_{t^\star} \in A} - \nu_\x(A)\right| \\
& = \max_{A \subset S} \left| \Prob{\x}{X_{t^\star} \in A \wedge E} - \nu_\x(A) + \Prob{\x}{X_{t^\star} \in A \wedge \overline{E}}\right| \\
& = \max_{A \subset S} \left| \Prob{\x}{X_{t^\star} \in A \mid E} (1 - \Prob{\x}{\overline{E}}) - \nu_\x(A) + \Prob{\x}{X_{t^\star} \in A \mid \overline{E}} \Prob{\x}{\overline{E}} \right| \\
& \leq \max_{A \subset S} \left| \Prob{\x}{X_{t^\star} \in A \mid E} - \nu_\x(A)\right| + \Prob{\x}{\overline{E}} \\
& \leq \tv{\Prob{\x}{X_{t^\star} \mid E} - \nu_\x}+\varepsilon,
\end{align*}
where the definition of $\Time_{S\setminus N}^\varepsilon(\x)$ implies that $\Prob{\x}{E} \geq 1- \varepsilon >0$ and then yields the third equality and last inequality. The penultimate inequality, instead, simply follows from the subadditivity of the absolute value and the fact that the difference between two probabilities is upper bounded by $1$. As every $\mu_i$ is metastable for at least a super-polynomial number of steps, we have, by using $\tau^\star$ as a shorthand for $\tauset{S \setminus N}$,
\begin{align*}
\tv{\Prob{\x}{X_{t^\star} \mid E} - \nu_\x} & =  \tv{\sum_i \sum_{\y \in T_i} \Prob{\x}{X_{\tau^\star} = \y \mid E} \cdot \Prob{\x}{X_{t^\star} \mid X_{\tau^\star} = \y \wedge E} - \nu_\x}\\
& \leq \tv{\sum_{i} \sum_{\y \in T_i} \Prob{\x}{X_{\tau^\star} = \y \mid E} \left(P^{t^\star - \tau^\star}(\y, \cdot) - \mu_i \right)}\\
& \leq  \sum_{i} \sum_{\y \in T_i} \Prob{\x}{X_{\tau^\star} = \y \mid E} \tv{P^{t^\star - \tau^\star}(\y, \cdot) - \mu_i} \leq  2\varepsilon,
\end{align*}
where the definition of $\tau^\star$ yields $X_{\tau^\star} \in T_i$, for some $i$, which in turns yields the first equality by the law of total probability. In the first inequality above, instead, we use the definition of $\nu_{\x}$ and the fact that by definition of $t^\star$, $E$ implies $t^\star - \tau^\star \geq t^\star - \Time_{S\setminus N}^\varepsilon(\x) \geq \max_i \pmt{\mu_i}{T_i}{\varepsilon}$; the second inequality follows from a simple union bound; and the last inequality follows from Lemma~\ref{lem:metaonetot} (note that $t^\star-\tau^\star$ satisfies the hypothesis of the lemma: the lower bound is showed above, while the upper bound follows from the fact that the $\mu_i$'s are metastable for at least super-polynomial time).
Hence, we have for every sufficiently large $n$ and every $\x \in N$, $\pmt{\nu_\x}{\{\x\}}{3\varepsilon} \leq t^\star = p_\star(n)$.
\end{proof}

\section{Asymptotically well-behaved potential games}
\label{sec:awd}
We now ask what class of potential games are partitioned by the logit dynamics. We know already that the answer must differ from the whole class of potential games, due to Lemma \ref{lem:potg_no_asymMeta}.
However, it is important to understand to what extent it is possible to prove asymptotic metastability for potential games.

Our main aim is to give results, asymptotic in the number $n$ of players, about the behavior of logit dynamics run on potential games.
Clearly, it makes sense to give asymptotic results about the property of an object, only if this property is asymptotically well-defined, that is, the object is uniquely defined for infinitely many values of the parameter according to which we compute the asymptotic and the property of this object does not depend chaotically on this parameter.

For example when we say that a graph has large expansion,
we actually mean that there is a sequence of graphs indexed by the number of vertices,
such that the expansion of each graph can be bounded by a single function of this number.
Similarly, when we say that a Markov Chain has large mixing time,
we actually mean that there is a sequence of Markov chains indexed by the number of states, such that the mixing time of each Markov chain can be bounded by a single function of this number. Yet another example arises in algorithm game theory: when we say that the Price of Anarchy of a game is large, we actually mean that there is a sequence of games indexed, for example, by the number of players such that the Price of Anarchy of each game can be bounded by a single function of this number\footnote{ Sometimes Price of Anarchy results do not refer to a single game, but to a class of games (e.g., congestion games). In this case, we can explain the asymptotic result according to two different viewpoints. Indeed, we can see the game ``horizontally'', that is as a sequence of sets of instances, where each set contains all instances defined for some specified number of players. Then the objects to which the asymptotic result refers is given by these sets of instances and we require that worst Price of Anarchy in each set is bounded by a single function of the number of players. Similarly, we can see the game ``vertically'', that is as a set of games, where each game contains at most one instance for each number of players. Then the objects are these games, and the bound on the Price of Anarchy is just the worst case bound among all games in the set. In this work, we will assume this second viewpoint. 
}.

In this work the object of interest is a potential game
and the property of interest is the behavior of the logit dynamics for this game.
And thus, in our setting, it makes sense to give asymptotic results
only when a potential game is uniquely defined for infinitely many $n$
and the behavior of the logit dynamics for the potential game is not chaotic as $n$ increases.
However, giving a formal definition of what this means is not as immediate
as in the case of the expansion of a graph or of the Price of Anarchy of a game.
Thus, in order to gain insight on how to formalize this concept, let us look at
some examples of games for which the behavior of the logit dynamics is evidently ``almost the same'' as $n$ increases
(these include the examples of partitioned games analyzed above) and
examples in which this behavior instead changes infinitely often.

\paragraph{\cw-game.} The behavior of logit dynamics for this game can be described in a way that is immaterial from the actual value of $n$.
Indeed, the potential function has two equivalent minima when either all players adopt strategy $-1$ or all players adopt strategy $+1$
and it increases as the number of players adopting a strategy different from the one played by the majority of agents increases. The potential reaches its maximum when each strategy is adopted by the same number of players.
Moreover, regardless of the actual value of $n$ it is easy to see that if the number of $+1$ strategies
is sufficiently larger than the number of $-1$ strategies\footnote{The
extent to which the number of $+1$ must be larger than the number of $-1$
can depend on $n$, but it can be bounded by a single function $F$ on the number of players.},
then it must be hard for the logit dynamics to reach a profile with more $-1$'s than $+1$'s,
whereas it must be easy for the dynamics to converge to the potential minimizer in which all players are playing $+1$.

Thus the logit dynamics for the Curie-Weiss game is asymptotically well-behaved for our purposes, since the evolution of the dynamics can be described in a way that is immaterial of the actual value of $n$.
In other words, the evolution of the dynamics when the number of players is $n$, can be mapped into the evolution of the dynamics for smaller or larger numbers of players,
such that the time necessary to some events to happen (e.g.,
the time for reaching or leaving certain subsets of the profile space)
can be bounded by the same function of the number of players.

A similar argument holds even for the other partitioned games described in Section~\ref{subsec:ex_part}.

\paragraph{Pigou's congestion game.} Another example of game for which it is immediate to see that the behavior of the logit dynamics does not change as
the number of players increases is the well-known Pigou's congestion game:
there are two links, one has fixed cost $1$, whereas the second one has congestion-dependent cost $c/n$,
where $c$ denotes the number of players choosing this second link.
It is well known that this game is a potential game \cite{R73},
with potential function $\Phi(\x) = \left[(n-c) + \frac{1}{n} \sum_{i = 1}^c i\right] = \frac{1}{2n} \left(2n^2 - 2nc + c(c + 1)\right)$.

As above, the behavior of this game can be easily described in a way that is immaterial from the actual value of $n$.
Indeed, it is easy to check that the potential function is minimized in the profile in which all players are using the second link,
i.e., the one with congestion-dependent cost, it is maximized in the profile in which all players are using the fixed-cost link,
and, in general, the potential decreases as the number of players adopting the second link increases.
Thus, it happens that the logit dynamics for this game quickly converges from any starting profile to the profile in which 
all players are on the second link regardless of the value that $n$ actually assumes\footnote{How
quick the convergence is depends on the value of $n$, but this convergence time can be bounded by the same small function of $n$.}.

It is then evident that the behavior of the logit dynamics for Pigou's example is asymptotically well-defined
and  the evolution of the dynamics when the number of players is $n$
can be mapped in the evolution of the dynamics for smaller or larger numbers of players,
so that the time for reaching or leaving certain subsets of the profile space
can be bounded by the same function of the number of players.

\paragraph{A game for which the logit dynamics chaotically depends on $n$.} Observe that, even though the potential function of the game in Lemma \ref{lem:potg_no_asymMeta} can be easily described solely as a function of $n$,
just as done for previous games, we cannot describe how the logit dynamics for this game
and a given rationality parameter $\beta$ behaves as $n$ changes.
In particular, we are unable to describe through a single function
the time that is necessary for the logit dynamics to leave the profile in which all players are adopting strategy $1$.
This time, by construction, changes infinitely often and,
for any tentative bound, there will always be a value of $n$
from which that bound will turn out to be not correct.

\paragraph{Asymptotically well-behaved games: the definition.}
From the analysis of these games, it is evident that the behavior of the logit dynamics for potential games
is asymptotically well-defined when profiles of the $n$-player game can be associated to profiles of the $n'$-player game
such that the probability of leaving associated profiles can be always bounded by the same function of the number of players.
Formally, we have the following definition.
\begin{definition}
The logit dynamics for a (potential) game is asymptotically well-behaved  if there is $n_0$ such that for every $n \geq n_0$ and for every $L' \subseteq S_1 \times \cdots \times S_{n}$
  there is a subset $L \subseteq S_1 \times \cdots \times S_{n_0}$ and a function $F_L$ such that
  $B(L) = \Theta(F_L(n_0))$ and $B(L') = \Theta(F_L(n))$.
\end{definition}
For sake of compactness, we will simply say that the potential game is asymptotically well-behaved whenever
the logit dynamics (run on it) is. We also call the constant $n_0$ in the above definition the \emph{asymptotic basis} of the game.


The main result of our paper follows.
\begin{theorem}\label{thm:awd_part}
Let $\G$ be an asymptotically well-behaved $n$-player potential game and let $\Delta(n)$ be the function that, for every $n$, gives the Lipschitz constant of the potential function $\Phi$ for a game $\G$ with $n$ players, i.e.,
$$
 \Delta(n) := \max \left\{ \Phi(\x) - \Phi(\y) \colon H(\x,\y) = 1 \right\}.
$$
Then, for any function $\rho$ at most polynomial in its input,
$\G$ is asymptotically metastable for each $\beta_0 \leq \beta \leq \frac{\rho(n)}{\Delta(n)}$.
\end{theorem}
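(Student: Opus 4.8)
The plan is to derive Theorem~\ref{thm:awd_part} from Theorem~\ref{thm:part_meta}: it suffices to show that an asymptotically well-behaved $n$-player potential game $\G$ is \emph{partitioned} by the logit dynamics for every $\beta$ with $\beta_0 \le \beta \le \rho(n)/\Delta(n)$. Thus the entire task is to produce, for all large $n$ and with $n$-uniform $\varepsilon$, polynomial $p$ and super-polynomial $q$, the connected sets $R_1,\dots,R_k$ and the partition $T_1,\dots,T_k,N$ of $S$ meeting the four requirements of the definition of partitioned game.

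I would build the partition by iteratively ``peeling'' the profile space along its bottleneck structure. Call a connected set $R\subseteq S$ \emph{sealed} if $B(R)\le 1/q(n)$ while every connected $R'$ with $\emptyset\ne R'\subsetneq R$ and $\pi_R(R')\le 1/2$ has restricted bottleneck $B_{R}(R')\ge 1/p(n)$; intuitively $R$ is super-polynomially hard to leave from the global chain yet carries no internal super-polynomial bottleneck. Take $R_1,\dots,R_k$ to be a maximal family of pairwise-disjoint sealed sets, let $T_i$ be the \emph{core} of $R_i$ (the set of $\x\in R_i$ from which, up to step $\tm^{R_i}(\varepsilon)$, the chain leaves $R_i$ with probability at most $\varepsilon$ -- equivalently, the profiles lying deep enough inside the well $R_i$), and put $N=S\setminus\bigcup_i T_i$. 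The role of the asymptotic well-behavedness hypothesis is precisely to make this construction coherent in $n$: because $\G$ is asymptotically well-behaved, the bottleneck ratio of any subset of the $n$-player game equals, up to constants, one of finitely many functions of $n$ (indexed by the subsets of the fixed $n_0$-player game), and these behaviours are non-chaotic -- each is eventually either at least the inverse of a polynomial or at most the inverse of a super-polynomial. Hence ``sealed / not sealed'' does not oscillate with $n$, the peeling terminates (there is no set whose bottleneck sits strictly between the two regimes at which the recursion could get stuck), and the same combinatorial ``shape'' of partition works for all $n\ge n_0$. The analogous property fails for the game of Lemma~\ref{lem:potg_no_asymMeta}, which is why that game lies outside the present class.

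It remains to verify the four properties. Property~1 ($B(R_i)\le 1/q(n)$) and Property~3 ($\Prob{\x}{\tauset{S\setminus R_i}\le\tm^{R_i}(\varepsilon)}\le\varepsilon$ for $\x\in T_i$) hold directly by the definitions of ``sealed'' and ``core'', using the lower bound on hitting time in terms of the bottleneck ratio (the counterpart of Lemma~\ref{lemma:hitting_bottle_ub}), and $R_i\setminus T_i\subseteq N$ by construction. Property~2 ($\tm^{R_i}(\varepsilon)\le p(n)$) is where the bound $\beta\le\rho(n)/\Delta(n)$ is used. Since $R_i$ is sealed, $B_\star$ of $\mathring{P}_{R_i}$ is at least $1/p(n)$, so Theorem~\ref{thm:bottle_rel} gives $1-\lambda_2\ge 1/(2p(n)^2)$; together with the spectral characterization of restricted logit-dynamics chains (which additionally bounds the most negative eigenvalue away from $-1$) this yields a polynomial relaxation time $\tr^{R_i}$. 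By Theorem~\ref{theorem:relaxation} and $\tm^{R_i}(\varepsilon)\le\lceil\log_2\varepsilon^{-1}\rceil\,\tm^{R_i}$ we get $\tm^{R_i}(\varepsilon)=O\!\left(\log(1/\pi_{\min}^{R_i})\cdot\tr^{R_i}\right)$, and here $\log(1/\pi_{\min}^{R_i})\le\log(1/\pi_{\min})\le\log|S|+\beta(\Phi_{\max}-\Phi_{\min})\le n\log m(n)+\beta\,n\,\Delta(n)\le n\log m(n)+n\,\rho(n)$, which is polynomial by the standing assumption on $\log m$; hence $\tm^{R_i}(\varepsilon)$ is polynomial. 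Finally, Property~4 ($\Prob{\x}{\tauset{\bigcup_i T_i}\le p(n)}\ge 1-\varepsilon$ for $\x\in N$) follows from the upper bound on hitting time in terms of the bottleneck ratio of the complement (Lemma~\ref{lemma:hitting_bottle_ub}): by maximality of the peeling, no connected subset of $S\setminus\bigcup_i T_i$ meeting $N$ is sealed, so this region carries no super-polynomial bottleneck, $\Expec{\x}{\tauset{\bigcup_i T_i}}$ is polynomial, and Markov's inequality (after multiplying $p(n)$ by a constant depending on $\varepsilon$) finishes it.

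The main obstacle is the construction together with the spectral estimate. One must show that the peeling really can be pushed until every remaining bottleneck is only polynomial -- i.e.\ that asymptotic well-behavedness genuinely rules out ``intermediate'' bottleneck behaviour -- and that it stabilizes to a bounded, $n$-uniform family of sealed sets, so that a single $(p,q,\varepsilon)$ serves all $n$; this is the step that the pathological game of Lemma~\ref{lem:potg_no_asymMeta} is designed to block. Equally delicate is controlling $\lambda^\star$, and not merely $\lambda_2$, for the restricted chains $\mathring{P}_{R_i}$: a small $B_\star$ bounds only the second largest eigenvalue, so one needs the promised spectral analysis of the transition matrices of restricted logit-dynamics chains to keep the most negative eigenvalue away from $-1$ by a polynomial margin. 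With these in hand the thresholds $p$ and $q$ can be matched across the four properties by routine bookkeeping, and Theorem~\ref{thm:part_meta} concludes the argument; in particular, specializing to the \cw-game (Lemma~\ref{lem:curie_Weiss_partition}) settles the open question of \cite{afppSODA12j} on the metastability of the Curie--Weiss game.
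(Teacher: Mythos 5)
Your overall architecture matches the paper's: reduce to showing the game is partitioned (Theorem~\ref{thm:part_meta}), build the $R_i$'s by peeling off small-bottleneck sets, and verify the four properties via the non-negativity of the eigenvalues of the restricted chains, the relaxation-time bound with $\log(1/\pimin)$ controlled by $\beta\le\rho(n)/\Delta(n)$, and the two hitting-time/bottleneck inequalities. However, there is a genuine gap at the step you fold into the definition of a ``sealed'' set. The asymptotic well-behavedness hypothesis classifies the bottleneck ratios $B(L)$ computed in the \emph{original} chain; it says nothing directly about the restricted bottleneck ratios $B_{R}(R')$ of the chain $\mathring{P}_{R}$, which is the quantity your sealedness condition constrains and the quantity you need for Property~2. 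A subset can have a small bottleneck ratio in the restricted dynamics but not in the original one (the outflow across $\partial R$ is folded into self-loops of $\mathring{P}_R$), so the dichotomy of Lemma~\ref{lem:awd_classified} does not by itself show that sealed sets exist, nor that ``sealed/not sealed'' is stable in $n$. The paper closes this gap with Lemma~\ref{lemma:bottleneck_poly}: choosing $R_i$ to be the set of \emph{smallest stationary probability} among those with $B\le 1/q(n)$, one shows that any $A\subset R_i$ with $B_{R_i}(A)$ too small would force either $B(A)$ or $B(R_i\setminus A)$, measured in the original chain, below $1/q(n)$ --- via a two-case analysis on whether $\pi(A)/Q(A,S\setminus R_i)$ is polynomial or super-polynomial --- contradicting the minimality of $\pi(R_i)$. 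Your proposal neither makes this minimality choice nor supplies a substitute argument, and a ``maximal family of pairwise-disjoint sealed sets'' is not obviously non-empty or exhaustive without it.

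A second, smaller issue concerns Property~4: you invoke the hard-to-leave direction of the hitting-time/bottleneck relationship and then assert that the expected hitting time of $\bigcup_i T_i$ is polynomial. What is actually needed (and what the paper proves as Lemma~\ref{lemma:hitting_bottle_lb}) is a tail bound $\Prob{\x}{\tauset{S\setminus N}>t}\le\varepsilon$ for polynomial $t$, obtained spectrally from $1-\lambda^{\overline{N}}_{\max}\ge (B^N_\star)^2/2$ together with the $\log(1/\pi_N(\x))$ estimate; this in turn requires first showing (Lemma~\ref{lemma:hitting}) that every core $T_i$ is non-empty, so that at termination $N$ genuinely contains no set of small bottleneck ratio. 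These pieces are fixable along the lines you sketch, but as written the chain of implications from ``no sealed subset of $N$'' to ``polynomial hitting time of $\bigcup_i T_i$'' is incomplete.
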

The proof builds upon Theorem~\ref{thm:part_meta} and proves that 
any asymptotic well-behaved potential game is partitioned by the logit dynamics for $\beta$ not too large as in the statement.

\subsection{Proof of Theorem \ref{thm:awd_part}}
\label{sec:sufficient}
We begin by highlighting a property of asymptotically well-behaved games will turn out to be very important below. 
\begin{lemma}
 \label{lem:awd_classified}
 Let $\G$ be an asymptotically well-behaved $n$-player potential game and let $n_0$ be its asymptotic basis.
 Then, there is a polynomial $p$ and a super-polynomial $q$ such that
 for any $n \geq n_0$ and for every $L' \subseteq S_1 \times \cdots \times S_{n}$,
 either $B(L') \geq \frac{1}{p(n)}$  or $B(L') \leq \frac{1}{q(n)}$.
\end{lemma}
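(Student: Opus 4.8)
The plan is to reduce the lemma to a finite statement and then feed in the structure of bottleneck ratios of logit dynamics on potential games. Since $n_0$ is fixed, the profile space $S_1 \times \cdots \times S_{n_0}$ of the $n_0$-player game is a fixed finite set (of size at most $m(n_0)^{n_0}$, a quantity depending only on $n_0$), so it has only finitely many subsets $L_1, \ldots, L_N$, and the definition of asymptotically well-behaved thus supplies only finitely many functions $F_{L_1}, \ldots, F_{L_N}$. That same definition guarantees that for every $n \geq n_0$ and every $L' \subseteq S_1 \times \cdots \times S_n$ one has $B(L') = \Theta(F_{L_j}(n))$ for some $j \leq N$. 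It therefore suffices to show that each $F_{L_j}$ is, for all sufficiently large $n$, either at least $1/p_j(n)$ for some polynomial $p_j$, or at most $1/q_j(n)$ for some super-polynomial $q_j$: one then takes $p$ to be a polynomial dominating the $p_j$ of the first class and $q$ a super-polynomial dominated by the $q_j$ of the second class, each enlarged or shrunk by a suitable constant factor to absorb the implicit $\Theta$-constants and the finitely many values of $n$ lying below the various thresholds, and the asserted dichotomy holds for every $n \geq n_0$.

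Next I would record the coarse shape of a bottleneck ratio. By Theorem~\ref{thm:gibbsPot} the stationary distribution of the logit dynamics on the potential game is the Gibbs measure $\pi(\x) = e^{-\beta\Phi(\x)}/Z$; since $\Phi$ is $\Delta(n)$-Lipschitz, every edge transition probability $P(\x,\y)$ with $H(\x,\y)=1$ lies in $\left[\frac{1}{n\,m(n)}\,e^{-\beta\Delta(n)},\,1\right]$, while the standing assumption that $\log m(\cdot)$ is at most polynomial yields $|S| \leq 2^{\mathrm{poly}(n)}$ and $\pimin \geq \frac{1}{|S|}\,e^{-\beta n\Delta(n)}$. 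Substituting these into $B(L') = Q(L', S\setminus L')/\pi(L')$, and using that $Q(L', S\setminus L') \geq Q(\x,\y)$ for a single boundary edge while $B(L') \leq 1$ always, one gets
\[
 e^{-\mathrm{poly}(n)} \;\leq\; B(L') \;\leq\; 1
\]
for every non-empty proper $L' \subseteq S$, as long as $\beta\Delta(n)$ is at most a polynomial in $n$ (which holds in the regime $\beta_0 \leq \beta \leq \rho(n)/\Delta(n)$ of Theorem~\ref{thm:awd_part}). In particular every value $F_{L_j}(n) = \Theta(B(L'))$ already lies between the reciprocal of a \emph{fixed} super-polynomial and a constant, so the dichotomy can only fail through ``intermediate'' asymptotics of some $F_{L_j}$.

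The crux --- and the step I expect to be the main obstacle --- is exactly to rule out such intermediate behaviour of each $F := F_{L_j}$: namely, to exclude that $F(n)$ is $\Theta(1/\mathrm{poly}(n))$ along one infinite set of values of $n$ while being super-polynomially small along another. This oscillation is precisely the pathology realised by the game of Lemma~\ref{lem:potg_no_asymMeta}, whose $\Time(n)$ --- the reciprocal of the bottleneck ratio of the profile $(1,\ldots,1)$ --- is engineered to sit asymptotically between every polynomial and every super-polynomial; and it is exactly what asymptotic well-behavedness is designed to forbid, since there a \emph{single} function $F_{L_j}$, pinned at $n_0$ by the constant $B(L_j)$, must describe genuine bottleneck ratios coherently across all $n$. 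Once the oscillating case is excluded, $F$ must eventually remain in the ``polynomially large'' range (giving $B(L') \geq 1/p(n)$) or eventually remain in the ``super-polynomially small'' range (giving $B(L') \leq 1/q(n)$), which is the desired dichotomy. Making the exclusion of oscillation fully rigorous directly from the definition --- rather than merely reading it off intuitively --- is the delicate point, and is where it is natural to appeal to the sharpened well-behavedness condition of Section~\ref{sec:awc} in place of the bare definition.
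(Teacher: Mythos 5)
Your overall strategy coincides with the paper's: funnel every bottleneck ratio $B(L')$, for every $n \geq n_0$, through one of the finitely many functions $F_L$ indexed by subsets $L$ of the fixed profile space $S_1 \times \cdots \times S_{n_0}$, classify each $F_L$ as either at least an inverse polynomial or at most an inverse super-polynomial, and then extract a single uniform pair $p,q$ from the finite family, absorbing the $\Theta$-constants. That is the entirety of the paper's proof. Your intermediate ``coarse shape'' estimate sandwiching $B(L')$ between $e^{-\mathrm{poly}(n)}$ and $1$ is not used by the paper, is not needed for the conclusion, and quietly imports the hypothesis $\beta \leq \rho(n)/\Delta(n)$ from Theorem~\ref{thm:awd_part}, which is not among the hypotheses of this lemma; it should be dropped.

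The substantive issue is that you stop at the step you yourself identify as carrying all the content --- showing that no individual $F_L$ can be $\Theta(1/\mathrm{poly}(n))$ along one infinite set of values of $n$ and super-polynomially small along another --- and you leave it unresolved, proposing to retreat to the ``well-classified'' refinement of Section~\ref{sec:awc}. As a proof of the lemma this is therefore incomplete at precisely the decisive point. For comparison, the paper disposes of that point in a single sentence (``since there are a finite number of such functions $F_L$, we can always separate polynomials and super-polynomials ones''): it takes for granted that each individual $F_L$ admits the classification --- this being the intended semantics of well-behavedness, a single function pinned at $n_0$ that describes genuine bottleneck ratios coherently for all $n$ --- and uses finiteness only to pass from per-subset bounds to one uniform $p$ and $q$. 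You have correctly located the soft spot (the bare definition does not literally forbid an oscillating $F_L$, and the explicit polynomial/super-polynomial/unclassified trichotomy of Definition~\ref{def:awd} exists precisely to make this assumption overt), but locating a gap is not the same as closing it: to finish along the paper's lines you must either accept the classifiability of each $F_L$ as part of the definition, as the authors do, or supply an argument that an oscillating $F_L$ already contradicts well-behavedness, which your proposal does not provide.
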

\begin{proof}
 By definition of asymptotically well-behaved game, there is a subset $L \subseteq S = S_1 \times \cdots \times S_{n_0}$
 and a function $F_L$, such that $B(L') = \Theta(F_L(n))$.
 Since there are a finite number (specifically, at most $m^{n_0}$) of such functions $F_L$,
 we can always separate polynomials and super-polynomials ones.
 That is, we can partition the subsets of $S$ in two (possibly empty) subsets $S'$ and $S''$
 such that for any $L$ in $S'$, $F_L(n) \geq 1/p(n)$ and, for any $L$ in $S''$, $F_L(n) \leq 1/q(n)$,
 for any $n \geq n_0$ and for $p$ polynomial and $q$ super-polynomial.
\end{proof}
Henceforth, we will say that the functions $p, q$ are \emph{generated} by the game $\G$.

We now introduce a (computationally infeasible) algorithm that computes
subsets $R_1, \ldots, R_k$ of $S$ and a partition $T_1, \ldots, T_k, N$ of $S$. Next we show that under the condition that the potential game is asymptotically well-behaved,
the sets returned by this algorithm enjoy the properties required by the definition of partitioned games.

The procedure works its way by finding subsets of profiles that act as super-polynomial bottlenecks for the Markov chain.
The algorithm $\A_{p,q}$ takes in input an $n$-player asymptotically well-behaved potential game $\G$, a rationality level $\beta$, a constant $\varepsilon > 0$ and $n$; it is parametrized by two functions $p$ at most polynomial and $q$ at least super-polynomial generated by $\G$.
\begin{algorithm}[$\A_{p,q}$]
\label{algo:meta_sets}
Set $N = S$ and $i = 1$. While there is a set $L \subseteq N$ with $\pi(L) \leq 1/2$ such that $B(L) \leq 1/q(n)$, do:
\begin{enumerate}
 \item Denote with $R_i$ one such subset with the smallest stationary probability;
 \item Denote with $T_i$ the largest subset of $R_i$ such that for every $\y \in T_i$,
 $$\Prob{\y}{\tauset{S\setminus R_i} \leq \tm^{R_i}(\varepsilon)} \leq \varepsilon;$$
 \item If $T_i$ is not empty, return $R_i$ and $T_i$, delete from $N$ all profiles contained in $T_i$ and increase $i$. Otherwise, terminate the algorithm.
\end{enumerate}
\end{algorithm}
Observe that if there is a disconnected set $L$ such that $B(L) \leq 1/ q(n)$, then each connected component $L'$ of $L$ will have $B(L') \leq 1/ q(n)$ and smaller stationary probability: hence, the set $R_i$ returned by the algorithm will be connected.
Note also that by Theorem~\ref{theorem:bottleneck} and the assumption that we are considering only cases in which the mixing time is super-polynomial,
the algorithm above enters at least once in the loop (and thus at least a subset $R_i$ is computed).

Clearly the sets $R_i$ returned by the algorithm enjoy the property of having super-polynomial bottleneck ratio
and the sets $T_i$ satisfy the requirement that, starting from any $\x \in T_i$, it is unlikely to leave $R_i$ quickly.
It is then left to prove that the mixing time of the chains restricted to $R_i$ is polynomial
and that it is easy to leave the set $N$.
This follows from the following propositions that are proved in the next sections.
\begin{prop}
 \label{prop:mixing_restricted_chain}
Let $\G$ be an asymptotically well-behaved $n$-player potential game; fix $\varepsilon > 0$ and a function $\rho$ at most polynomial in its input.
Let $p, q$ be the functions generated by $\G$.
Consider the sequence of sets $R_i$ returned by $\A_{p,q}$.
For any $n$ sufficiently large, if $\beta_0 \leq \beta \leq \frac{\rho(n)}{\Delta(n)}$ then $\tm^{R_i}(\varepsilon)$ is at most polynomial.
\end{prop}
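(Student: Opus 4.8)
The plan is to bound $\tm^{R_i}(\varepsilon)$ via the relaxation time of the restricted chain $\mathring{P}_{R_i}$, using Theorem~\ref{theorem:relaxation}, which gives $\tm^{R_i}(\varepsilon) \leq \log(4/\pi_{R_i,\min})\cdot \tr^{R_i}$ where $\pi_{R_i,\min} = \min_{\x \in R_i}\pi_{R_i}(\x)$. The first point to dispatch is that $\log(1/\pi_{R_i,\min})$ is polynomial: since $\pi_{R_i}(\x) = e^{-\beta\Phi(\x)}/\sum_{\y\in R_i}e^{-\beta\Phi(\y)}$, the ratio of the largest to the smallest entry is at most $e^{\beta(\max_{R_i}\Phi - \min_{R_i}\Phi)} \leq e^{\beta\, \diam(R_i)\,\Delta(n)}$, where $\diam(R_i)$ is the diameter of $R_i$ in the Hamming graph, which is at most $n$. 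Combined with $|R_i| \leq m(n)^n$, we get $\log(1/\pi_{R_i,\min}) \leq n\log m(n) + \beta\, n\, \Delta(n) \leq n\log m(n) + n\,\rho(n)$, which is polynomial by the standing assumptions on $\log m$ and on $\rho$. So everything reduces to showing $\tr^{R_i} = 1/(1-\lambda^\star)$ is polynomial, and by Theorem~\ref{thm:bottle_rel} (applied to the chain $\mathring{P}_{R_i}$) together with control of the smallest eigenvalue (the chain is lazy enough — every state has a self-loop of probability at least something like $1/(2n)$ coming from the logit update, so $\lambda_{|R_i|} \geq -1 + c/n$ can be argued, hence $\lambda^\star = \lambda_2$ up to a polynomial factor), it suffices to lower bound the interior bottleneck ratio $B_\star(R_i) := \min_{A\subseteq R_i:\ \pi_{R_i}(A)\leq 1/2} B_{R_i}(A)$ by the inverse of a polynomial.

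The heart of the argument is therefore: \emph{every $A \subseteq R_i$ with $\pi_{R_i}(A) \leq 1/2$ has $B_{R_i}(A) \geq 1/p'(n)$ for some polynomial $p'$.} This is where asymptotic well-behavedness and the construction of $R_i$ by the algorithm $\A_{p,q}$ both get used. Viewing $A$ as a subset of $S$, Lemma~\ref{lem:awd_classified} tells us the \emph{global} bottleneck ratio $B(A)$ (in the full chain $P$) is either $\geq 1/p(n)$ or $\leq 1/q(n)$. I would first relate $B_{R_i}(A)$ to $B(A)$ up to polynomial factors: the restricted edge measure $Q_{R_i}(\x,\y) = \pi_{R_i}(\x)\mathring{P}_{R_i}(\x,\y)$ agrees with $\pi(\x)P(\x,\y)/\pi(R_i)$ for $\x\neq\y$ both in $R_i$, while $\pi_{R_i}(A) = \pi(A)/\pi(R_i) \geq \pi(A)$. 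Hence $B_{R_i}(A) = Q_{R_i}(A, R_i\setminus A)/\pi_{R_i}(A) \geq Q(A, S\setminus A)/\pi(A) \cdot \frac{1}{\pi(R_i)} \cdot \frac{\text{(cut inside }R_i)}{\text{(full cut)}}$; the subtlety is that the cut of $A$ inside $R_i$ could be much smaller than the full cut $Q(A,S\setminus A)$ if most of $A$'s boundary escapes through $S\setminus R_i$. To rule this out, observe that $R_i$ was chosen by $\A_{p,q}$ as a set of \emph{minimal stationary probability} among sets with super-polynomial bottleneck; a standard argument (essentially the one showing such minimal sets are "indivisible") shows that no proper subset $A$ of $R_i$ with $\pi(A)\leq \pi(R_i)/2$ can itself be a super-polynomial bottleneck — otherwise either $A$ or a modification of it would have been picked instead. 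This forces $B(A)\geq 1/p(n)$ for the relevant $A$, and then a short computation bounds $B_{R_i}(A)$ from below by $1/p(n)$ times $\pi(R_i)$ divided by the fraction of $A$'s boundary mass that stays in $R_i$; since $A\subsetneq R_i$ and $B(R_i)\leq 1/q(n)$ is tiny, most boundary mass does stay inside, and one gets $B_{R_i}(A) \geq \frac{1}{2 p(n)}$ or so for $n$ large.

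I expect the main obstacle to be exactly this last step: making rigorous the claim that the minimality of $R_i$ in the algorithm prevents any sub-bottleneck, and carefully accounting for the boundary mass that leaks out of $R_i$ so that $B_{R_i}(A)$ is genuinely comparable to $B(A)$ rather than just bounded by it from one side. One has to be careful because $\pi_{R_i}$ reweights things by $1/\pi(R_i)$, which is large, and this reweighting helps us (it inflates $\pi_{R_i}(A)$ but inflates the cut equally), but the leakage of edges to $S\setminus R_i$ hurts — the two effects must be balanced using $B(R_i)\leq 1/q(n)\ll 1/p(n)$. A secondary but routine obstacle is the lower bound $\lambda_{|R_i|} > -1 + \mathrm{poly}^{-1}(n)$ needed to pass from $1-\lambda_2$ to $1-\lambda^\star$; this follows from laziness of logit dynamics (each profile is fixed with probability $\Omega(1/(nm(n)))$ under an update, hence $\mathring{P}_{R_i}(\x,\x)$ is bounded below by an inverse polynomial), so that $\lambda^\star = \max(\lambda_2, |\lambda_{|R_i|}|)$ is at most $1 - \mathrm{poly}^{-1}(n)$ once we have the bound on $\lambda_2$. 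Assembling Theorem~\ref{theorem:relaxation}, Theorem~\ref{thm:bottle_rel}, the $B_\star(R_i)$ lower bound, the laziness bound, and the polynomial bound on $\log(1/\pi_{R_i,\min})$ yields $\tm^{R_i}(\varepsilon) = \mathrm{poly}(n)$ for all sufficiently large $n$, as claimed.
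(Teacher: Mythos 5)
Your overall architecture matches the paper's: bound $\tm^{R_i}(\varepsilon)$ via Theorem~\ref{theorem:relaxation}, control $\log(1/\pi_{R_i,\min})$ by $n(\log m(n)+\rho(n))$ using $\beta\le\rho(n)/\Delta(n)$ (this part is exactly the paper's computation), and reduce the relaxation time to a polynomial lower bound on $B^{R_i}_\star=\min_{A\subseteq R_i,\,\pi_{R_i}(A)\le 1/2}B_{R_i}(A)$ via Theorem~\ref{thm:bottle_rel}. However, there are two genuine gaps. First, your treatment of the smallest eigenvalue is wrong: you claim laziness, i.e.\ that each profile is fixed with probability $\Omega(1/(n\,m(n)))$ under an update, but for $\beta>0$ the probability that player $i$ re-selects $x_i$ is $e^{\beta u_i(\x)}/Z_i(\x_{-i})\ge e^{-\beta\Delta(n)}/m(n)$, and with $\beta$ as large as $\rho(n)/\Delta(n)$ this is only $e^{-\rho(n)}/m(n)$ --- exponentially small, not inverse-polynomial. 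So you cannot conclude $\lambda_{|R_i|}\ge -1+\mathrm{poly}^{-1}(n)$ this way. The paper closes this hole structurally: Lemma~\ref{lemma:diod_conj_restricted} shows that every ``nice restriction'' of the logit transition matrix of a potential game (in particular $\mathring{P}_{R_i}$) has \emph{no negative eigenvalues at all}, so $\lambda^\star=\lambda_2$ and only the Cheeger-type bound on $1-\lambda_2$ is needed. Your proposal has no substitute for this spectral fact.

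Second, and more centrally, your lower bound on $B_{R_i}(A)$ does not go through in the case you yourself flag as the main obstacle. Writing $B(A)=B_{R_i}(A)+Q(A,S\setminus R_i)/\pi(A)$, your minimality argument correctly gives $B(A)\ge 1/p(n)$ for every $A\subsetneq R_i$ (else $A$ would have been picked before $R_i$), but to conclude $B_{R_i}(A)\ge 1/p(n)-o(1/p(n))$ you must show the leakage term $Q(A,S\setminus R_i)/\pi(A)$ is negligible. Your justification --- ``since $B(R_i)\le 1/q(n)$ is tiny, most boundary mass stays inside'' --- only bounds $Q(A,S\setminus R_i)\le Q(R_i,S\setminus R_i)\le\pi(R_i)/q(n)$ \emph{relative to $\pi(R_i)$}, not relative to $\pi(A)$. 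If $\pi(A)$ is super-polynomially smaller than $\pi(R_i)$ (which $\pi_{R_i}(A)\le 1/2$ does not exclude), the leakage term can swamp $1/p(n)$ and your bound collapses. The paper's Lemma~\ref{lemma:bottleneck_poly} handles exactly this by a case split on $v(n)=\pi(A)/Q(A,S\setminus R_i)$: when $v$ is super-polynomial your argument works, but when $v$ is polynomial one must instead pass to the complement $\overline{A}=R_i\setminus A$, show $\pi(A)\le (v(n)/q(n))\pi(R_i)$, deduce $B(\overline{A})=O(1/(q(n)-v(n)))\le 1/q(n)$, and derive a contradiction because $\overline{A}$ (having smaller stationary mass than $R_i$) would have been selected by $\A_{p,q}$ first. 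Without this complement argument the proof is incomplete.
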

\begin{prop}
 \label{prop:hitting_outN}
Let $\G$ be an asymptotically well-behaved $n$-player potential game; fix $\varepsilon > 0$ and a function $\rho$ at most polynomial in its input.
Let $p, q$ the functions generated by $\G$ and recall that, for $\x \in L$ and $0 < \varepsilon < 1$, $\Time_{S \setminus L}^\varepsilon(\x)$
is defined as the first time step $t$ in which $\Prob{\x}{\tauset{S \setminus L} > t} \leq \varepsilon$.
Consider the set $N$ returned by $\A_{p,q}$.
For any $n$ sufficiently large, if $\beta_0 \leq \beta \leq \frac{\rho(n)}{\Delta(n)}$ then $\Time_{S \setminus N}^\varepsilon(\x)$ is at most polynomial.
\end{prop}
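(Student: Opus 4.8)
The aim is to certify property~4 in the definition of partitioned games for the set $N$ output by $\A_{p,q}$, i.e.\ that from every $\x\in N$ the chain reaches $S\setminus N=\bigcup_i T_i$ within a polynomial number of steps with probability at least $1-\varepsilon$. The plan is to bound the \emph{expected} hitting time $\Expec{\x}{\tauset{S\setminus N}}$ by a polynomial $T(n)$ uniformly over $\x\in N$ and then invoke Markov's inequality: $\Prob{\x}{\tauset{S\setminus N}>T(n)/\varepsilon}\le\varepsilon$, so $\Time_{S\setminus N}^\varepsilon(\x)\le T(n)/\varepsilon$, which is polynomial for constant $\varepsilon$. The first step is to pin down $N$ on termination. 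I claim $\A_{p,q}$ never exits through the ``otherwise'' of Step~3, i.e.\ the core $T_i$ of the chosen $R_i$ is always non-empty: by Proposition~\ref{prop:mixing_restricted_chain} we have $\tm^{R_i}(\varepsilon)\le p(n)$ in the range $\beta_0\le\beta\le\rho(n)/\Delta(n)$, while the edge-stationary estimate behind Lemma~\ref{lemma:hitting_bottle_ub} gives $\Prob{\pi_{R_i}}{\tauset{S\setminus R_i}\le t}\le t\cdot B(R_i)$; plugging $t=\tm^{R_i}(\varepsilon)$ and $B(R_i)\le1/q(n)$ yields $\Prob{\pi_{R_i}}{\tauset{S\setminus R_i}\le\tm^{R_i}(\varepsilon)}\le p(n)/q(n)<\varepsilon$ for large $n$, so some $\pi_{R_i}$-typical profile $\y$ has $\Prob{\y}{\tauset{S\setminus R_i}\le\tm^{R_i}(\varepsilon)}\le\varepsilon$ and hence lies in $T_i$. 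Thus the algorithm halts only because the while-guard fails, so there is no $L\subseteq N$ with $\pi(L)\le1/2$ and $B(L)\le1/q(n)$; by Lemma~\ref{lem:awd_classified}, every $L\subseteq N$ with $\pi(L)\le1/2$ then satisfies $B(L)\ge1/p(n)$.

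The second step converts this into a polynomial lower bound on the bottleneck ratio governing escape from $N$. If $\pi(N)\le1/2$, take $L=N$ to get $B(N)\ge1/p(n)$ directly. If $\pi(N)>1/2$, then $\pi(S\setminus N)<1/2$, and I would instead reason about $S\setminus N=\bigcup_iT_i$: Lemma~\ref{lem:awd_classified} gives $B(S\setminus N)\ge1/p(n)$ or $B(S\setminus N)\le1/q(n)$, and the latter is ruled out because $\bigcup_iT_i$ is the union of the cores that $\A_{p,q}$ has extracted from the already exhausted super-polynomial bottlenecks and so cannot itself be a fresh super-polynomial bottleneck; since reversibility makes $Q(N,S\setminus N)$ symmetric, $B(N)=B(S\setminus N)\cdot\pi(S\setminus N)/\pi(N)$, and combining $B(S\setminus N)\ge1/p(n)$ with a polynomial lower bound on $\pi(S\setminus N)$ — which follows because the super-polynomially small bottleneck ratios of the $R_i$'s, together with $\beta\le\rho(n)/\Delta(n)$, force each well $R_i$, hence its core, to carry a non-negligible share of the Gibbs mass — gives $B(N)\ge1/\mathrm{poly}(n)$ again.

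The third step uses $B(N)\ge1/\mathrm{poly}(n)$ in the upper-bound half of the bottleneck/hitting-time correspondence (the counterpart, promised in the introduction, of the lower bound behind Lemma~\ref{lemma:hitting_bottle_ub}) to obtain $\Expec{\pi_N}{\tauset{S\setminus N}}=\mathrm{poly}(n)$. To pass from the start $\pi_N$ to an arbitrary $\x\in N$, I would run the chain for $\tm^N(\varepsilon)$ steps: by Lemma~\ref{lemma:PeqP_R} it agrees with the restricted chain $\mathring{P}_N$ until $\tauset{S\setminus N}$, so either the chain has already left $N$, or, conditioned on not having, it is within $\varepsilon$ of $\pi_N$ in total variation; restarting the estimate then gives $\Expec{\x}{\tauset{S\setminus N}}\le\tm^N(\varepsilon)+\mathrm{poly}(n)$. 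The remaining ingredient, that $\tm^N(\varepsilon)$ is polynomial (handled connected component by connected component of $N$), I would get by re-running the spectral analysis of Proposition~\ref{prop:mixing_restricted_chain} on $\mathring{P}_N$, whose bottleneck ratios are polynomial precisely because $N$ has been stripped of all super-polynomial bottlenecks.

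The main obstacle I anticipate is the case $\pi(N)>1/2$ in the second step: bounding the bottleneck ratio of $N$ ``from outside'', i.e.\ showing simultaneously that $\bigcup_iT_i$ is not a super-polynomial bottleneck and that it carries polynomially large stationary mass. This is exactly where the two standing hypotheses must be used together: asymptotic well-behavedness (through the dichotomy of Lemma~\ref{lem:awd_classified}) and the upper bound $\beta\le\rho(n)/\Delta(n)$ on the rationality level; without the latter, a single exponentially deep yet exponentially light potential well would make $N$ genuinely hard to leave, and the statement would be false — consistent with the game of Lemma~\ref{lem:potg_no_asymMeta}.
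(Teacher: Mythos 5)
Your first step is sound and coincides with the paper's Lemma~\ref{lemma:hitting}: the cores are non-empty (via the upper bound of Lemma~\ref{lemma:hitting_bottle_ub} applied with $t=\tm^{R_i}(\varepsilon)\leq p(n)$ and $B(R_i)\leq 1/q(n)$), so $\A_{p,q}$ halts only when the while-guard fails, and Lemma~\ref{lem:awd_classified} then upgrades ``no $L\subseteq N$ with $B(L)\leq 1/q(n)$'' to ``every $L\subseteq N$ with $\pi(L)\leq 1/2$ has $B(L)\geq 1/p(n)$.'' But at that point you have already proved exactly the quantity the paper needs, namely $B^N_\star:=\min_{A\subseteq N,\,\pi(A)\leq 1/2}B(A)\geq 1/p(n)$, and Lemma~\ref{lemma:hitting_bottle_lb} converts this \emph{directly} into $\Time^\varepsilon_{S\setminus N}(\x)\leq (B^N_\star)^{-2}\bigl(\tfrac{2(1-\varepsilon)}{\varepsilon}+\log\tfrac{1}{\pi_N(\x)}\bigr)$ for every $\x\in N$; the hypothesis $\beta\leq\rho(n)/\Delta(n)$ is used only to bound $\log(1/\pi_N(\x))\leq n(\log m(n)+\rho(n))$. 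Your steps 2 and 3 are therefore an unnecessary detour, and they contain genuine gaps.

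Concretely: (i) the claim that $\bigcup_i T_i$ ``cannot itself be a fresh super-polynomial bottleneck'' is unsupported and false in general --- the algorithm only certifies that subsets of $N$ have no super-polynomial bottleneck, and says nothing about $S\setminus N$; e.g.\ in the ring coordination game $T_1\cup T_2=\{\p,\m\}$ and $B(\{\p,\m\})\leq\max(B(\{\p\}),B(\{\m\}))\leq 1/q(n)$ is super-polynomially small. (ii) A lower bound on $B(N)$ alone does not bound the escape time from an arbitrary $\x\in N$, nor even the expected escape time from $\pi_N$: after one step conditioned on not exiting, the law is no longer $\pi_N$, so the one-step escape probability $B(N)$ cannot be iterated. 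The correct route is spectral --- $\Prob{\x}{\tauset{S\setminus N}>t}\leq\exp(t\log\lambda^{\overline{N}}_{\max}+\tfrac12\log\tfrac{1}{\pi_N(\x)})$ (Lemma~\ref{lem:hittin_lambda_max_ub}) together with the Cheeger-type bound $1-\lambda^{\overline{N}}_{\max}\geq (B^N_\star)^2/2$ (Lemma~\ref{lemma:lambda_max_bottle_ub}) --- and this genuinely requires control of \emph{all} small-mass subsets of $N$, i.e.\ $B^N_\star$, not just $B(N)$. (iii) The polynomial mixing of $\mathring{P}_N$ that your step 3 invokes is not free: it would need its own analogue of Lemma~\ref{lemma:bottleneck_poly} relating $B_N(\cdot)$ to $B(\cdot)$, and $N$ need not be connected, whereas Proposition~\ref{prop:mixing_restricted_chain} is proved only for the connected sets $R_i$ selected by the algorithm. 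Replacing your steps 2--3 with a single application of Lemma~\ref{lemma:hitting_bottle_lb} closes all of these gaps and recovers the paper's proof.
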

This proves that asymptotically well-behaved potential games are partitioned by the logit dynamics
and, hence, they are asymptotically metastable, concluding the proof of Theorem~\ref{thm:awd_part}.

\subsection{Proof of Proposition~\ref{prop:mixing_restricted_chain}}
A high level idea of the proof of Proposition~\ref{prop:mixing_restricted_chain} is discussed next.
We first give a spectral characterization of the transition matrix defined in \eqref{eq:restricted_loop} (Lemma~\ref{lemma:diod_conj_restricted}).
Then we show that no subset of $R_i$ has small bottleneck ratio in the restricted chain (Lemma~\ref{lemma:bottleneck_poly}).
Note that this does not directly follow from $R_i$ being the smaller possible subset with super-polynomial bottleneck ratio.
Indeed, the bottleneck ratio of a subset depends on the dynamics according to which it is computed.
Thus, a subset can have a small bottleneck ratio when computed within the reference frame of the restricted dynamics,
but not when we refer to the original dynamics.
Nevertheless, we show that this is not the case. Specifically,
we will show that for asymptotically well-behaved games
there is a relationship between the bottleneck ratio of a subset of profiles in the restricted and in the original dynamics.
Finally, the result follows from the known relationship among mixing time,
relaxation time and bottleneck ratio (see Theorems~\ref{theorem:relaxation} and \ref{thm:bottle_rel}).

\subsubsection{Spectral property of logit dynamics restrictions}
In~\cite{afpppSPAA11j} it has been shown that all the eigenvalues of the transition matrix of logit dynamics for potential games are non-negative.
The technique used in that proof can be generalized to work also for some restrictions of these matrices.

To begin, we note that the definition of reversibility can be extended in a natural way to any square matrix and probability distribution over the set of rows of the matrix. We then state a fairly standard result relating eigenvalues of matrices to certain inner products.
\begin{lemma}\label{le:innerproducts}
Let $P$ be a square matrix on state space $S$ and $\pi$ be a probability distribution on $S$. If $P$ is reversible with respect to $\pi$ and has no negative eigenvalues then for any function $f:S \rightarrow \mathbb{R}$ we have
$$
\langle Pf,f \rangle_\pi := \sum_{\x \in S} \pi(\x) (Pf)(\x) f(\x) \geq 0.
$$
\end{lemma}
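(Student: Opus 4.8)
The plan is to diagonalize $P$ with respect to the $\pi$-weighted inner product and then read off the sign from the eigenvalues. First I would observe that reversibility of $P$ with respect to $\pi$ is precisely the statement that $P$ is self-adjoint on the finite-dimensional real inner product space consisting of all $f\colon S\to\mathbb{R}$ equipped with $\langle g,h\rangle_\pi=\sum_{\x}\pi(\x)g(\x)h(\x)$. Indeed,
$$
\langle Pf,g\rangle_\pi=\sum_{\x,\y}\pi(\x)P(\x,\y)f(\y)g(\x)=\sum_{\x,\y}\pi(\y)P(\y,\x)f(\y)g(\x)=\langle f,Pg\rangle_\pi,
$$
where the middle equality is the detailed-balance identity $\pi(\x)P(\x,\y)=\pi(\y)P(\y,\x)$. (Here I assume, as holds in all applications in this paper, that $\pi$ has full support, so that $\langle\cdot,\cdot\rangle_\pi$ is a genuine inner product; otherwise one simply restricts attention to the support of $\pi$, on which the same computation applies.)

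Next I would invoke the finite-dimensional spectral theorem: a self-adjoint operator on a real inner product space admits an orthonormal basis $\varphi_1,\dots,\varphi_{|S|}$ of eigenvectors, say $P\varphi_j=\lambda_j\varphi_j$, and the $\lambda_j$ are exactly the eigenvalues of the matrix $P$, all real. By hypothesis $\lambda_j\ge 0$ for every $j$.

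Finally, expanding an arbitrary $f$ in this basis as $f=\sum_j c_j\varphi_j$ with $c_j=\langle f,\varphi_j\rangle_\pi\in\mathbb{R}$ gives $Pf=\sum_j\lambda_j c_j\varphi_j$, and hence, by orthonormality of the $\varphi_j$,
$$
\langle Pf,f\rangle_\pi=\sum_j\lambda_j c_j^2\ge 0,
$$
since each summand is a product of two nonnegative numbers. This is the claimed inequality.

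The argument is essentially routine; the only points that require a little care are the reduction to the case where $\pi$ has full support (so that the spectral theorem genuinely applies), and the observation that reversibility is being used here for a general — not necessarily stochastic — square matrix, which is harmless because the detailed-balance computation above never invokes $\sum_\y P(\x,\y)=1$.
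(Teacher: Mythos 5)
Your proof is correct and follows essentially the same route as the paper's: both exploit that reversibility makes $P$ self-adjoint with respect to $\langle\cdot,\cdot\rangle_\pi$, expand $f$ in the resulting orthonormal eigenbasis, and conclude from the nonnegativity of the eigenvalues. If anything, your handling of the cross terms is slightly more careful than the paper's displayed computation, which cancels them pointwise rather than only after summing against $\pi$ via orthogonality.
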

\begin{proof}
Let $\lambda_1, \ldots, \lambda_s$, $s = |S|$, be the eigenvalues of $P$.
Moreover, let $f_1, \ldots, f_s$ denote their corresponding eigenfunctions. For any $\x \in S$, we then have $(Pf_i)(\x) f_i(\x)=\lambda_i f_i(\x)$. Since $P$ is reversible then we know that the eigenfunctions assume real values and that they form an orthonormal basis for the space $(\mathbb{R}^s, \langle \cdot, \cdot \rangle_\pi)$ (see, e.g., Lemma 12.2 in \cite{lpwAMS08}). Then any real-valued function $f$ defined upon $S$ can be expressed as a linear combination of the $f_i$'s. Thus, there exist  $\alpha_i$'s in $\mathbb{R}$ such that
\[
\sum_{\x \in S} \pi(\x) (Pf)(\x) f(\x) = \sum_{\x \in S} \pi(\x) \sum_{i=1}^s \alpha_i^2 (P f_i)(\x) f_i(\x) = \sum_{\x \in S} \pi(\x) \sum_{i=1}^s \alpha_i^2 \lambda_i f_i^2(\x) \geq 0. \tag*{\qed}
\]
\let\qed\relax
\end{proof}

To specify the restrictions of the transition matrix we are interested in, let $\G$ be a game with profile space $S$ and let $P$ be the transition matrix of the logit dynamics for $\G$; we say that a $|A| \times |A|$ matrix $P'$, with $A \subseteq S$, is a \emph{nice restriction} of $P$ if there exists $L \subseteq A$, $L \neq \emptyset$, such that $P'(\x,\x)\geq P(\x,\x)$ for $\x \in L$, $P'(\x,\y) = P(\x,\y)$ if $\x,\y \in L$, $\x \neq \y$, and is $0$ otherwise. Note that $P$ is a nice restriction of itself. We generalize the result given in~\cite{afpppSPAA11j} to nice restrictions of the transition matrix of logit dynamics for potential games.
\begin{lemma}
 \label{lemma:diod_conj_restricted}
Let $\mathcal{G}$ be a game with profile space $S$, let $P$ be the transition matrix of the logit dynamics for $\mathcal{G}$ and let $P'$ be a nice restriction of $P$ with state space $A$. If $P$ is reversible then no eigenvalue of $P'$ is negative.
\end{lemma}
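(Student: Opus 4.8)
The plan is to mimic the argument from \cite{afpppSPAA11j} that establishes non-negativity of the eigenvalues of the transition matrix of logit dynamics for potential games, and to show that it degrades gracefully when we pass from $P$ to a nice restriction $P'$. The core idea is the decomposition of the logit dynamics into a convex combination of ``single-player'' moves. For each player $i$, let $P_i$ be the matrix describing the update of player $i$'s strategy according to the Boltzmann distribution (freezing everyone else), so that $P = \frac{1}{n}\sum_{i=1}^n P_i$. Each $P_i$ is a block-diagonal matrix whose blocks — one for each choice $\x_{-i}$ of the others' strategies — are rank-one stochastic matrices of the form $\mathbf{1}\,\sigma_i(\cdot\mid\x_{-i})^\top$, hence each $P_i$ is reversible with respect to $\pi$ and has all eigenvalues in $\{0,1\}$; in particular $P_i$ is positive semidefinite as an operator on $(\mathbb{R}^{|S|},\langle\cdot,\cdot\rangle_\pi)$, which is exactly what drives the non-negativity of the spectrum of $P$.

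First I would set up the analogue for the restriction. Given the set $L\subseteq A$ witnessing that $P'$ is a nice restriction, I would write $P' = \frac{1}{n}\sum_{i=1}^n P_i'$, where $P_i'$ is obtained from $P_i$ by keeping the off-diagonal entries only inside $L$ and compensating the deficit on the diagonal of rows in $L$ (and zeroing out everything outside). Concretely, for $\x,\y\in L$ with $\x\neq\y$ I set $P_i'(\x,\y)=P_i(\x,\y)$, for $\x\in L$ I set $P_i'(\x,\x) = 1 - \sum_{\y\in L,\y\neq\x}P_i(\x,\y)$, and all remaining entries are $0$; the diagonal boost in the hypothesis, $P'(\x,\x)\ge P(\x,\x)$, is precisely what makes $\frac{1}{n}\sum_i P_i' = P'$ hold with the $P_i'$ genuinely stochastic on $L$. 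The key structural observation is that $P_i'$ is still reversible with respect to $\pi_L$ (equivalently, with respect to $\pi$ restricted to $L$ up to normalization), because within a single block the matrix $P_i$ is a rank-one matrix proportional to the row vector $\sigma_i(\cdot\mid\x_{-i})$, and deleting columns/rows outside $L$ and adjusting the diagonal keeps the block a convex combination of the identity and that same rank-one stochastic matrix — i.e. $P_i'$ restricted to a block looks like $(1-c)I + c\,\mathbf{1}v^\top$ for the appropriate $v$ proportional to the Gibbs weights, which is manifestly positive semidefinite and reversible. Then by Lemma~\ref{le:innerproducts} (in the ``only if'' direction, together with the standard fact that a reversible matrix with $\langle P'f,f\rangle_\pi\ge 0$ for all $f$ has no negative eigenvalue), each $P_i'$ has no negative eigenvalue, and since the cone of positive-semidefinite operators is closed under non-negative combinations, neither does $P'=\frac1n\sum_i P_i'$.

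The main obstacle — and the step that needs real care rather than bookkeeping — is verifying that each block of $P_i'$ really is of the form $(1-c)I + c\,\mathbf{1}v^\top$ with $v\ge 0$ summing to $1$ and reversible with respect to $\pi_L$. The subtlety is that the off-diagonal entries of $P_i$ inside $L$ are $\frac1n\sigma_i(y_i\mid\x_{-i})$, and after restricting to $L$ the surviving entries within one $\x_{-i}$-block correspond to a subset $T\subseteq S_i$ of strategies; one must check that putting the remaining mass on the diagonal yields exactly $(1-\frac1n\sum_{z_i\in T, z_i\neq x_i}\sigma_i(z_i\mid\x_{-i}))$ on the diagonal and that this block equals $(1-c)I+c\,\mathbf{1}v^\top$ where $v$ is the Gibbs distribution on $S_i$ (not on $T$!) — which is where the diagonal inequality in the definition of nice restriction, rather than an equality, is needed, since the extra mass $P'(\x,\x)-P(\x,\x)$ absorbs the strategies of $S_i$ that would lead outside $L$. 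Once this block decomposition is pinned down, reversibility of each block with respect to $\pi_L$ follows from the detailed-balance identity $\pi(\x)\sigma_i(y_i\mid\x_{-i}) = \pi(\x_{-i},y_i)\sigma_i(x_i\mid\x_{-i})$ (which is Theorem~\ref{thm:gibbsPot} applied to the single-coordinate move), and positive semidefiniteness of each block is immediate from its eigenvalues being $1-c$ (with multiplicity $|T|-1$) and $1$ (simple). Assembling the blocks, $P_i'$ is reversible with respect to $\pi_L$ and positive semidefinite, and the proof concludes as above. I would also remark that taking $L=A=S$ recovers the result of \cite{afpppSPAA11j} as the special case where $P'=P$.
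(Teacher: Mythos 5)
Your overall strategy is the same as the paper's: decompose the restricted matrix into single-player, per-block rank-one pieces, check that each piece is reversible with respect to the (restricted) stationary measure and positive semidefinite in the $\langle\cdot,\cdot\rangle_{\pi}$ inner product, and conclude via Lemma~\ref{le:innerproducts} that a non-negative combination of such pieces has no negative eigenvalue. The block computation you flag as the delicate step does go through for the matrix you construct (the block over the surviving strategies $T\subseteq S_i$ is $(1-\sigma_i(T\mid\x_{-i}))I+\mathbf{1}\,\sigma_T^\top$, whose eigenvalues are $1-\sigma_i(T\mid\x_{-i})$ and $1$; incidentally the rank-one factor is the Gibbs distribution renormalized \emph{on $T$}, not on $S_i$ as your parenthetical asserts, though this does not affect non-negativity).

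The genuine gap is one of generality: your identity $P'=\frac1n\sum_i P_i'$, with each $P_i'$ made \emph{stochastic} on $L$ by loading the deficit onto the diagonal, pins the diagonal of $P'$ to the single value $P'(\x,\x)=1-\sum_{\y\in L,\y\neq\x}P(\x,\y)$. That is the restriction $\mathring{P}_L$ of \eqref{eq:restricted_loop}, but the definition of a nice restriction only requires $P'(\x,\x)\geq P(\x,\x)$, and the lemma is applied in Section~5.3 to the \emph{substochastic} matrix $P_{\overline{L}}$ of \eqref{eq:restricted}, whose diagonal is exactly $P(\x,\x)$ and for which your decomposition identity simply fails. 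You cannot recover this case by subtracting a non-negative diagonal from your stochastic $P_i'$, since that operation does not preserve positive semidefiniteness. The repair is the one the paper uses: write $P'=P^L+(P'-P^L)$, where $P^L$ keeps the entries of $P$ on $L\times L$ (including the diagonal $P(\x,\x)$) and $(P'-P^L)$ is a non-negative diagonal matrix, hence trivially reversible with non-negative eigenvalues; then decompose $P^L$ into the pure rank-one non-negative blocks $\frac1n\mathbf{1}\,\sigma_T^\top$ \emph{without} completing them to stochastic matrices — a non-negative rank-one matrix has eigenvalues $0$ and its trace, so no identity term is needed. With that modification (and noting that reversibility of each block follows from reversibility of $P'$ itself, so no appeal to the potential/Gibbs structure is required — the lemma is stated for general reversible games), your argument coincides with the paper's.
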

\begin{proof}
Firstly, note that if $P$ is reversible with respect to $\pi$ then the nice restriction $P'$, defined upon a subset of states $A$, is reversible with respect to $\pi'$ defined as $\pi$ restricted to $A$, i.e., $\pi'(\x)=\pi(\x)/\pi(A)$ for $\x \in A$.

Assume for sake of contradiction that there exists an eigenvalue $\lambda<0$ of $P'$. Let $f_\lambda$ be an eigenfunction of $\lambda$. Note that since $P$ is reversible then $f_\lambda$ is real-valued. By definition, $f_\lambda \neq \0$; hence, since $\lambda < 0$ and as $(P'f_\lambda)(\x)= \lambda f_\lambda(\x)$, then for every profile $\x \in A$ such that $f_\lambda(\x) \neq 0$ we have $\sign{(P'f_\lambda)(\x)} \neq \sign{f_\lambda(\x)}$ and thus
 $$
  \langle P'f_\lambda, f_\lambda \rangle_{\pi'} = \sum_{\x \in A} \pi'(\x) (P'f_\lambda)(\x) f_\lambda(\x) < 0.
 $$

Let $L$ denote the maximal subset of $A$ for which $P'$ is a nice restriction of $P$. Let us denote with $P^L$ the transition matrix on the state space $A$ such that $P^L(\x,\y) = P(\x,\y)$ for every $\x,\y \in L$ and $P^L(\x,\y)=0$ otherwise. Then we can write $P'$ as $P^L + (P' - P^L)$: by the definition of nice restriction $(P' - P^L)$ is a non-negative diagonal matrix. Therefore, $(P' - P^L)$ is reversible with respect to $\pi'$. Since the eigenvalues of a diagonal matrix are exactly the diagonal elements, we have that $(P' - P^L)$ has non-negative eigenvalues and then, by Lemma \ref{le:innerproducts},
$\langle (P'-P^L)f_\lambda, f_\lambda \rangle_{\pi'} \geq 0$.
Moreover, for every $i$ and for every $\z_{-i}$, we denote with $P_{i,\z_{-i}}$ the matrix such that for every $\x,\y \in A$
 \begin{equation}
 \label{eq:Pizi}
  P_{i, \z_{-i}}(\x,\y) = \frac{1}{nZ_i(\z_{-i})}\begin{cases}
          e^{\beta u_i(\y)}, & \mbox{if } \x_{-i} = \y_{-i} = \z_{-i} \mbox{ and } \x,\y \in L;\\
	  0, & \mbox{otherwise}.
         \end{cases}
 \end{equation}
Observe that $P_{i, \z_{-i}}$ has at least one non-zero row and that all non-zero rows of $P_{i, \z_{-i}}$ are the same. Thus $P_{i, \z_{-i}}$ has rank 1, and hence since it is a non-negative matrix all its eigenvalues are non-negative \cite{hj90}\footnote{This result about the eigenvalues of matrices with rank 1 appears as an exercise at page 61 of \cite{hj90} and in \cite{Osn05}.}. Moreover, since all off-diagonal entries of $P_{i, \z_{-i}}$ are either $0$ or equal to the corresponding entry of $P'$ we can conclude that $P_{i, \z_{-i}}$ is reversible with respect to $\pi'$. Thus, Lemma \ref{le:innerproducts} yields $\langle P_{i, \z_{-i}}f_\lambda, f_\lambda \rangle_{\pi'} \geq 0$.
Finally, observe that $P^L = \sum_i \sum_{\z_{-i}} P_{i, \z_{-i}}$. Hence from the linearity of the inner product, it follows that $\langle P'f_\lambda, f_\lambda \rangle_{\pi'} \geq 0$ and thus we reach a contradiction.
\end{proof}

It is immediate to see that the restricted chain $\mathring{P}_L$ defined in \eqref{eq:restricted_loop}
is a nice restriction of $P$ and hence all its eigenvalues are non-negative by the theorem above.

\subsubsection{Mixing time of the restricted chains}
\label{subsec:mixing}
Before bounding the mixing time of the restricted chain we prove a very important preliminary lemma.
\begin{lemma}
\label{lemma:bottleneck_poly}
Let $\G$ be an asymptotically well-behaved $n$-player potential game and fix $\beta \geq \beta_0, \varepsilon > 0$.
Let $p, q$ the functions generated by $\G$.
Consider the sequence of sets $R_i$ returned by $\A_{p,q}$. Then, for $n$ sufficiently large and for any $A \subseteq R_i$, we have
$$B_{R_i}(A) \geq \frac{1}{p(n)} - \frac{1}{\ell(n)},$$
where $\ell$ is at least super-polynomial.
\end{lemma}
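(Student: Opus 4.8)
The plan is to compare the bottleneck ratio $B_{R_i}(A)$ inside the restricted chain $\mathring{P}_{R_i}$ with bottleneck ratios computed in the \emph{original} chain $P$, exploiting that $R_i$ was produced by $\A_{p,q}$ as a super-polynomial bottleneck of \emph{smallest} stationary probability. Throughout one may assume $\emptyset\neq A\subsetneq R_i$; this is the only range relevant to $B_\star$, hence to Proposition~\ref{prop:mixing_restricted_chain}. First I would record the reformulation: since $\G$ is a potential game, $P$ is reversible (Theorem~\ref{thm:gibbsPot}), so the edge measure $Q(\x,\y)=\pi(\x)P(\x,\y)$ is symmetric, and from~\eqref{eq:restricted_loop} one gets $Q_{R_i}(\x,\y)=Q(\x,\y)/\pi(R_i)$ for distinct $\x,\y\in R_i$; therefore
$$
B_{R_i}(A)=\frac{Q(A,R_i\setminus A)}{\pi(A)}=B(A)-\frac{Q(A,S\setminus R_i)}{\pi(A)} ,
$$
and, crucially, $Q(A,S\setminus R_i)\le Q(R_i,S\setminus R_i)=B(R_i)\,\pi(R_i)\le \pi(R_i)/q(n)$ because $B(R_i)\le 1/q(n)$.

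The key observation is that \emph{both} $A$ and $R_i\setminus A$ have inverse-polynomial bottleneck ratio in $P$: each is a subset of the set $N$ present at the iteration of $\A_{p,q}$ that output $R_i$, and has stationary probability strictly below $\pi(R_i)\le 1/2$; so if its original bottleneck ratio were at most $1/q(n)$ it would be an admissible choice at that iteration with stationary probability strictly smaller than that of $R_i$, contradicting the minimality in Step~1 of $\A_{p,q}$. Hence both ratios exceed $1/q(n)$, and the dichotomy of Lemma~\ref{lem:awd_classified} --- this is exactly where asymptotic well-behavedness is used --- forces $B(A)\ge 1/p(n)$ and $B(R_i\setminus A)\ge 1/p(n)$.

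To conclude I would combine two lower bounds. From $B(A)\ge 1/p(n)$ and the displayed identity, $B_{R_i}(A)\ge 1/p(n)-\pi(R_i)/(q(n)\pi(A))$, which is strong when $\pi(A)$ is a constant fraction of $\pi(R_i)$. For small $A$ I would instead write $B_{R_i}(A)=Q(R_i\setminus A,A)/\pi(A)$ and use $Q(R_i\setminus A,A)=Q(R_i\setminus A,\,S\setminus(R_i\setminus A))-Q(R_i\setminus A,S\setminus R_i)\ge \pi(R_i\setminus A)/p(n)-\pi(R_i)/q(n)$ (via $B(R_i\setminus A)\ge 1/p(n)$ and the bound on the escape mass), which gives $B_{R_i}(A)\ge \frac{\pi(R_i)}{\pi(A)}\!\left(\frac{1}{p(n)}-\frac{1}{q(n)}\right)-\frac{1}{p(n)}$. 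Splitting according to whether $\pi(A)\le \pi(R_i)/2$ (use the second bound, where $\pi(R_i)/\pi(A)\ge 2$) or $\pi(A)>\pi(R_i)/2$ (use the first, where $\pi(R_i)/\pi(A)<2$), in both cases one obtains $B_{R_i}(A)\ge 1/p(n)-2/q(n)$ for $n$ large; taking $\ell:=q/2$, which is still super-polynomial, finishes the proof.

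The hard part is the small-$A$ regime: the naive estimate $B_{R_i}(A)\ge 1/p(n)-\pi(R_i)/(q(n)\pi(A))$ becomes vacuous once $\pi(A)$ is super-polynomially smaller than $\pi(R_i)$ --- which genuinely happens, e.g.\ for singletons deep inside a potential basin --- so the escape mass $Q(A,S\setminus R_i)$ cannot be controlled against $\pi(A)$ directly. The way out is to notice that the complement $R_i\setminus A$ then carries nearly all of $\pi(R_i)$ and, what is essential, still has inverse-polynomial bottleneck ratio in $P$; proving this last fact is precisely where the full strength of the well-behavedness hypothesis (through Lemma~\ref{lem:awd_classified}) enters, since ``$R_i$ is a $\pi$-minimal super-polynomial bottleneck'' by itself does not rule out $R_i$ harbouring a sub-bottleneck of super-polynomially small mass.
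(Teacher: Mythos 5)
Your proof is correct. It uses the same basic ingredients as the paper's argument --- the reversibility-based identity $B_{R_i}(A)=B(A)-Q(A,S\setminus R_i)/\pi(A)$, the bound $Q(A,S\setminus R_i)\le\pi(R_i)/q(n)$, the $\pi$-minimality of $R_i$ in Step~1 of $\A_{p,q}$, and the dichotomy of Lemma~\ref{lem:awd_classified} --- but organizes them differently. The paper proceeds by contradiction: assuming $B_{R_i}(A)<1/p(n)-1/\ell(n)$, it shows that one of $A$, $R_i\setminus A$ would have bottleneck ratio at most $1/q(n)$ and hence would have been selected instead of $R_i$, splitting cases according to whether $v(n)=\pi(A)/Q(A,S\setminus R_i)$ is polynomial or super-polynomial; the function $\ell$ is only pinned down inside that argument (``take $\ell\le v$''), which is delicate because $v$ depends on the hypothetical sets $A(n)$ and need not itself be classifiable as polynomial or super-polynomial. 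You instead deduce directly and unconditionally that \emph{both} $B(A)\ge 1/p(n)$ and $B(R_i\setminus A)\ge 1/p(n)$ (minimality forces each to exceed $1/q(n)$, and the dichotomy upgrades this), and then split on whether $\pi(A)\le\pi(R_i)/2$, using the bound through $A$ in one regime and the bound through its complement (via $Q(A,R_i\setminus A)=Q(R_i\setminus A,A)$) in the other. This yields the explicit conclusion $B_{R_i}(A)\ge 1/p(n)-2/q(n)$, i.e.\ $\ell=q/2$, avoids the asymptotic classification of $v$, and makes the role of the well-behavedness hypothesis (through Lemma~\ref{lem:awd_classified}) transparent. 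Your restriction to proper nonempty $A\subsetneq R_i$ is consistent with the paper's implicit usage (the statement is vacuous or false for $A=R_i$ and only proper subsets with $\pi_{R_i}(A)\le 1/2$ enter Proposition~\ref{prop:mixing_restricted_chain}). Both routes are valid; yours is arguably the cleaner of the two.
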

\begin{proof}
Let us postpone the exact definition of $\ell$ and suppose, by contradiction, that there are infinitely many $n$ for which there is $A \subset R_i$ such that $B_{R_i}(A) < \frac{1}{p(n)} - \frac{1}{\ell(n)}$.

We will show that for $n$ sufficiently large either $B\left(A\right)\leq 1/q(n)$ or $B(\overline{A})\leq 1/q(n)$, where $\overline{A} = R_i \setminus A$. Then, since they are contained in $R_i$ and hence their stationary probability is less than $\pi\left(R_i\right)$, one of these set must be chosen before $R_i$ by $\A_{p,q}$. But since in the third step of Algorithm~\ref{algo:meta_sets} either at least one element of such sets should be deleted from $N$ or the algorithm terminates, as a consequence, we have that $R_i$ cannot be returned by the algorithm, thus a contradiction.

Consider the function $v(\cdot)$ that sets $v(n)= \frac{\pi\left(A\right)}{Q\left(A, S \setminus R_i\right)}$. We distinguish two cases depending on how $v$ evolves as $n$ grows.

\medskip \noindent \underline{If $v(\cdot)$ is at least super-polynomial in the input:}
We have
 \begin{align*}
  B\left(A\right) & = \frac{Q\left(A, S \setminus A\right)}{\pi\left(A\right)} = \frac{Q\left(A, R_i \setminus A\right)}{\pi\left(A\right)} + \frac{Q\left(A, S \setminus R_i\right)}{\pi\left(A\right)}\\
  & = \frac{\sum_{\x \in A} \sum_{\y \in R_i \setminus A} \pi(\x) P(\x,\y)}{\pi\left(A\right)}+ \frac{Q\left(A, S \setminus R_i\right)}{\pi\left(A\right)} \\
  & =  \frac{\sum_{\x \in A} \sum_{\y \in R_i \setminus A} \pi_{R_i}(\x) \mathring{P}_{R_i}(\x,\y)}{\pi_{R_i}\left(A\right)} + \frac{Q\left(A, S \setminus R_i\right)}{\pi\left(A\right)}\\
  & = B_{R_i}\left(A\right) + \frac{Q\left(A, S \setminus R_i\right)}{\pi\left(A\right)}< \frac{1}{p(n)} + \frac{1}{v(n)} - \frac{1}{\ell(n)}.
 \end{align*}
 By taking $\ell(n) \leq v(n)$ for each $n$ sufficiently large, we have that $B\left(A\right) < \frac{1}{p(n)}$.
 Then, since $\G$ is asymptotically well-behaved, from Lemma~\ref{lem:awd_classified} it follows that $B\left(A\right) \leq \frac{1}{q(n)}$.
 
\medskip \noindent \underline{If $v(\cdot)$ is polynomial in the input:}
Note that $\frac{Q\left(A, S \setminus R_i\right)}{\pi\left(R_i\right)} + \frac{Q\left(\overline{A}, S \setminus R_i\right)}{\pi\left(R_i\right)} = B\left(R_i\right) \leq \frac{1}{q(n)}$, otherwise $R_i$ was not returned by the algorithm. Hence, we obtain
$$
 Q\left(A, S \setminus R_i\right) \leq \frac{1}{q(n)} \cdot \pi\left(R_i\right) \qquad \text{and} \qquad Q\left(\overline{A}, S \setminus R_i\right) \leq  \frac{1}{q(n)} \cdot \pi\left(R_i\right).
$$
From the first of these inequalities, we have
$\pi\left(A\right) \leq \frac{v(n)}{q(n)} \cdot \pi\left(R_i\right)$.
Hence
$$\frac{Q\left(A, \overline{A}\right)}{\pi\left(R_i\right)} \leq \frac{v(n)}{q(n)} \cdot \frac{Q\left(A, \overline{A}\right)}{\pi\left(A\right)} = \frac{v(n)}{q(n)} \cdot B_{R_i}\left(A\right) < \frac{v(n)}{q(n)} \left(\frac{1}{p(n)} - \frac{1}{\ell(n)}\right).
$$
Then we obtain
\begin{align*}
 B\left(\overline{A}\right) & = \frac{Q\left(\overline{A}, S \setminus \overline{A}\right)}{\pi\left(\overline{A}\right)} = \frac{Q\left(\overline{A}, A\right)}{\pi\left(R_i\right) - \pi\left(A\right)} + \frac{Q\left(\overline{A}, S \setminus R_i\right)}{\pi\left(R_i\right) - \pi\left(A\right)}\\
 \text{(by reversibility of $P$)} \quad & = \frac{Q\left(A, \overline{A}\right)}{\pi\left(R_i\right) - \pi\left(A\right)} + \frac{Q\left(\overline{A}, S \setminus R_i\right)}{\pi\left(R_i\right) - \pi\left(A\right)}\\
 & \leq \frac{v(n)}{q(n)} \left(\frac{1}{p(n)} - \frac{1}{\ell(n)}\right) \left(1 - \frac{v(n)}{q(n)}\right)^{-1} + \frac{1}{q(n)} \left(1 - \frac{v(v)}{q(n)}\right)^{-1} \\ & = O\left(\frac{1}{q(n) - v(n)}\right),
\end{align*}
where the upper bounds hold for each choice of super-polynomial function $\ell$. Since $q(n) - v(n)$ evolves at least as a super-polynomial, if $n$ is sufficiently large, $B(\overline{A}) < \frac{1}{p(n)}$.
Then, since $\G$ is asymptotically well-behaved, from Lemma~\ref{lem:awd_classified} it follows that $B\left(\overline{A}\right) \leq \frac{1}{q(n)}$.
\end{proof}

Now we are ready to prove the mixing time of the chain restricted to $R_i$ is polynomial.
\begin{proof}[Proof of Proposition~\ref{prop:mixing_restricted_chain}]
Consider the set of profiles $A_\star \subset R_i$ that minimizes $B_{R_i}\left(A\right)$ among all $A \subset R_i$ such that $\pi_{R_i}\left(A\right) \leq 1/2$. By Lemma~\ref{lemma:bottleneck_poly}, $B_{R_i}\left(A_\star\right) \geq 1/p(n) - 1/\ell(n)$ for each $n$ sufficiently large.

Moreover, for each $n$ and each $\x \in R_i$, since $|S| \leq m(n)^n$, it follows that
$$
 \log \frac{1}{\pi_{R_i}\left(\x\right)} \leq \log \frac{|S| e^{-\beta \Phi_{\min}}}{e^{-\beta \Phi_{\max}}} \leq \log \frac{e^{n \log m(n)} e^{-\beta \Phi_{\min}}}{e^{-\beta \Phi_{\max}}} = n \log m(n) + \beta \left(\Phi_{\max} - \Phi_{\min}\right),
$$
where $\Phi_{\max}$ and $\Phi_{\min}$ denote the maximum and minimum of the potential $\Phi$ overall possible strategy profiles.
Since $\Phi_{\max} - \Phi_{\min} \leq n \cdot \Delta(n)$ and $\beta \leq \rho(n) / \Delta(n)$, then
$$
 \log \frac{1}{\pi_{R_i}\left(\x\right)} \leq n \cdot \left(\log m(n) + \rho(n)\right).
$$

Then, since $\left(\frac{1}{p} - \frac{1}{\ell}\right) = \Theta\left(\frac{1}{p}\right)$, from Lemma~\ref{lemma:diod_conj_restricted} and the properties of the relaxation time (see Theorems~\ref{thm:bottle_rel} and \ref{theorem:relaxation}) it follows that the mixing time is
 $$
  \tm^{R_i}(\varepsilon) \leq \left(\frac{1}{p(n)} - \frac{1}{\ell(n)}\right)^{-2} \cdot \left(n \log m(n) + \rho(n)\right) \cdot 2 \log \frac{4}{\varepsilon} = O(p_\star(n)).
 $$
Since $p$, $\log m$ and $\rho$ are at most polynomial, then $p_\star$ is at most polynomial in its input and the lemma follows.
\end{proof}

\subsection{Proof of Proposition~\ref{prop:hitting_outN}}
In order to prove Proposition~\ref{prop:hitting_outN},
we show that there is a strong relationship between hitting time and metastability (see Lemma~\ref{lemma:hitting_bottle_ub} and Lemma~\ref{lemma:hitting_bottle_lb}) and, in particular, that high hitting time implies the existence of a subset with small bottleneck ratio.
Note that $N$ contains subsets of small bottleneck ratio only if at some iteration $T_i$ is empty.
Therefore, it is sufficient to prove that the cores are not empty for asymptotically well-behaved games (see Lemma~\ref{lemma:hitting}).

\subsubsection{The relation between bottleneck ratio and hitting time}
\label{sec:meta_preliminaries}
For a game $\G$ with potential function $\Phi$ and profile space $S$, and a rationality level $\beta$, let $P$ be the transition matrix of the Markov chain defined by the logit dynamics on $\G$. For a non-empty $L \subseteq S$, we denote with $P_{\overline{L}}$ the matrix
\begin{equation}
 \label{eq:restricted}
 P_{\overline{L}}(\x,\y) = \begin{cases}
                          P(\x,\y) & \text{if } \x,\y \in L;\\
                          0 & \text{otherwise}.
                         \end{cases}
\end{equation}
Let $\lambda^{\overline{L}}_1 \geq \lambda^{\overline{L}}_2 \geq \ldots \geq \lambda^{\overline{L}}_{|S|}$ be the eigenvalues of $P_{\overline{L}}$: notice that $\lambda^{\overline{L}}_1$ can be different from $1$ since the matrix $P_{\overline{L}}$ is not stochastic. Lemma~\ref{lemma:diod_conj_restricted} implies that  $\lambda^{\overline{L}}_1 \geq \lambda^{\overline{L}}_2 \geq \ldots \geq \lambda^{\overline{L}}_{|S|} \geq 0$, and thus for $\lambda^{\overline{L}}_{\max}$, the largest eigenvalue of $P_{\overline{L}}$ in absolute value, we have: $\lambda^{\overline{L}}_{\max} = \max_i |\lambda^{\overline{L}}_i| = \lambda^{\overline{L}}_1$.

We start with two characterizations of $1 - \lambda^{\overline{L}}_{\max}$ in terms of bottleneck ratio.
The first one is an easy extension of the similar characterization of the spectral gap of stochastic matrices.
\begin{lemma}
 \label{lemma:lambda_max_bottle}
 For finite $\beta$ and any $\emptyset \neq L \subseteq S$,
 $
  1 - \lambda^{\overline{L}}_{\max} \leq B(L).
 $
\end{lemma}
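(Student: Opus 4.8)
The plan is to invoke the variational (Rayleigh-quotient) characterization of the top eigenvalue of $P_{\overline{L}}$ and test it against the indicator function $\mathbf{1}_L$ of $L$; this is the ``easy direction'' of a Cheeger-type bound, specialized to the killed chain.

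First I would record that, since the logit dynamics for a potential game is reversible with respect to $\pi$, the matrix $P_{\overline{L}}$ is self-adjoint for the inner product $\langle f,g\rangle_\pi=\sum_{\x\in S}\pi(\x)f(\x)g(\x)$: indeed $\pi(\x)P_{\overline{L}}(\x,\y)=\pi(\x)P(\x,\y)=\pi(\y)P(\y,\x)=\pi(\y)P_{\overline{L}}(\y,\x)$ whenever $\x,\y\in L$, and both sides vanish otherwise. (Here the hypothesis that $\beta$ is finite guarantees $\pi(\x)>0$ for all $\x$, so $\langle\cdot,\cdot\rangle_\pi$ is a genuine inner product.) By Lemma~\ref{lemma:diod_conj_restricted} all eigenvalues of $P_{\overline{L}}$ are non-negative, hence $\lambda^{\overline{L}}_{\max}=\lambda^{\overline{L}}_1$, and by the spectral theorem the largest eigenvalue is the supremum of the Rayleigh quotient,
$$
\lambda^{\overline{L}}_{\max}=\max_{f\neq \0}\frac{\langle P_{\overline{L}}f,f\rangle_\pi}{\langle f,f\rangle_\pi}.
$$

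Then I would evaluate this quotient at $f=\mathbf{1}_L$. One has $\langle f,f\rangle_\pi=\sum_{\x\in L}\pi(\x)=\pi(L)$, and, using $\sum_{\y\in S}P(\x,\y)=1$,
$$
\langle P_{\overline{L}}f,f\rangle_\pi=\sum_{\x\in L}\pi(\x)\sum_{\y\in L}P(\x,\y)=\sum_{\x\in L}\pi(\x)\Bigl(1-\sum_{\y\in S\setminus L}P(\x,\y)\Bigr)=\pi(L)-Q(L,S\setminus L).
$$
Consequently $\lambda^{\overline{L}}_{\max}\geq\bigl(\pi(L)-Q(L,S\setminus L)\bigr)/\pi(L)=1-B(L)$, which rearranges to the claimed inequality $1-\lambda^{\overline{L}}_{\max}\leq B(L)$.

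The computation is entirely routine; the only place that needs a little care is the legitimacy of the variational formula for the sub-stochastic, non-ergodic matrix $P_{\overline{L}}$, which is precisely why I first note that $P_{\overline{L}}$ is self-adjoint with respect to $\langle\cdot,\cdot\rangle_\pi$ and use Lemma~\ref{lemma:diod_conj_restricted} to know that its spectrum is contained in $[0,\lambda^{\overline{L}}_1]$. Everything else is the standard test-function argument with $\mathbf{1}_L$.
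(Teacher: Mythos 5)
Your proof is correct and is essentially the paper's own argument in different clothing: the paper tests the variational characterization $1-\lambda^{\overline{L}}_{\max}=\inf_\varphi \Dir_P(\varphi)/\Expec{\pi}{\varphi^2}$ (its Lemma~\ref{lem:lambda_max}) with the function $\pi(L)\mathbf{1}_L$, and since $\Dir_P(\varphi)=\langle\varphi,\varphi\rangle_\pi-\langle P_{\overline{L}}\varphi,\varphi\rangle_\pi$ for $\varphi$ supported on $L$, this is exactly your Rayleigh-quotient computation with the indicator of $L$ (the scalar multiple being immaterial). The computations match, so no further comment is needed.
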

\begin{proof}
 Define the function $\varphi_L: S \rightarrow [0,1]$ to be such that $\varphi_L(\x) = \pi(L)$ if $\x \in L$, and $\varphi_L(\x) = 0$ otherwise. Consider now the function
 \begin{equation}\label{eq:def:dir}
 \Dir_P(\varphi_L) := \frac{1}{2} \sum_{\x, \y \in S} \pi(\x) P(\x,\y) (\varphi_L(\x) - \varphi_L(\y))^2.
 \end{equation}
 By Theorem~\ref{thm:gibbsPot}, $\pi(L) \neq 0$ and then $\Expec{\pi}{\varphi_L^2} = \pi(L)^3 \neq 0$. Moreover, by recalling the definition of $\partial L$ as the set of profiles $\x \in L$ that have at least one neighbor profile in $S \setminus L$ and denoting with $E(A_1,A_2)$ the pairs of neighbor profiles $(\x,\y)$ such that $\x \in A_1$ and $\y \in A_2$.
 We have:
 \begin{align*}
  \Dir_P(\varphi_L) & = \frac{\pi(L)^2}{2} \left(\sum_{(\x,\y) \in E(L, S \setminus L)} \pi(\x) P(\x,\y) + \sum_{(\x,\y) \in E(S \setminus L, L)} \pi(\x) P(\x,\y) \right)\\
  & = \pi(L)^2 \sum_{\x \in \partial L} \pi(\x) \sum_{\begin{subarray}{c}\y \in S\setminus L \colon\\H(\x,\y) = 1\end{subarray}} P(\x,\y) = \pi(L)^2 Q(L,S \setminus L),
 \end{align*}
where we used the reversibility of $P$ in the penultimate equality. Hence, we have $\frac{\Dir_P(\varphi_L)}{\Expec{\pi}{\varphi_L^2}} = B(L)$. The lemma follows since $1 - \lambda^{\overline{L}}_{\max} \leq \frac{\Dir_P(\varphi_L)}{\Expec{\pi}{\varphi_L^2}}$ (see Lemma~\ref{lem:lambda_max} in Appendix).
\end{proof}
The second characterization may be proved in exactly the same way as a similar well-known characterization for the spectral gap of stochastic matrices (see, for example, Section~13.3.3 in~\cite{lpwAMS08}).
\begin{lemma}
 \label{lemma:lambda_max_bottle_ub}
 For any $\emptyset \neq L \subseteq S$,
 $$
  1 - \lambda^{\overline{L}}_{\max} \geq \frac{(B^L_\star)^2}{2}.
 $$
\end{lemma}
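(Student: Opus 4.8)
The plan is to mimic the classical variational argument that lower-bounds the spectral gap of a stochastic reversible matrix by $(\Phi_\star)^2/2$ (the "easy half" of Cheeger's inequality), but carried out for the sub-stochastic matrix $P_{\overline{L}}$ restricted to $L$. First I would recall the variational characterization of $\lambda^{\overline{L}}_{\max}$: since $P_{\overline{L}}$ is reversible with respect to $\pi_L$ (the restriction of $\pi$ to $L$, normalized) and has only non-negative eigenvalues by Lemma~\ref{lemma:diod_conj_restricted}, its largest eigenvalue $\lambda^{\overline{L}}_{\max}=\lambda^{\overline{L}}_1$ satisfies
$$
1-\lambda^{\overline{L}}_{\max}=\min_{f\neq 0}\frac{\Dir_{\mathring P_L}(f)}{\Expec{\pi_L}{f^2}},
$$
where $\Dir_{\mathring P_L}(f)=\tfrac12\sum_{\x,\y\in L}\pi_L(\x)\mathring P_L(\x,\y)(f(\x)-f(\y))^2$ is the Dirichlet form of the (loop-augmented) restricted chain $\mathring P_L$ — note that here one must be slightly careful, because $P_{\overline{L}}$ differs from $\mathring P_L$ only by a diagonal term that does not affect the Dirichlet form, so the two chains have the same quadratic form in the numerator while the eigenvalue statement transfers. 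Then I would take $f^\star$ to be the optimizing eigenfunction, so that $1-\lambda^{\overline{L}}_{\max}=\Dir_{\mathring P_L}(f^\star)/\Expec{\pi_L}{(f^\star)^2}$.

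Next I would run the standard "order the states and slice at a level set" trick. Shifting $f^\star$ by a constant (which changes the Dirichlet form not at all and can only help the denominator), one may assume $\pi_L(\{f^\star>0\})\le 1/2$ and $\pi_L(\{f^\star<0\})\le 1/2$, then split $f^\star=f_+-f_-$ with $f_\pm\ge 0$ supported on the corresponding halves; a convexity/telescoping estimate shows that at least one of $f_+,f_-$, call it $g$, satisfies $\Dir_{\mathring P_L}(g)/\Expec{\pi_L}{g^2}\le \Dir_{\mathring P_L}(f^\star)/\Expec{\pi_L}{(f^\star)^2}$ and has $\pi_L(\mathrm{supp}\,g)\le 1/2$. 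The core computation is then the Cauchy–Schwarz step: order $L$ so that $g$ is non-increasing, write $g(\x)^2-g(\y)^2$ as a telescoping sum of increments $g(\z_k)^2-g(\z_{k+1})^2$ along the chain of states between $\x$ and $\y$ in this order, and bound
$$
\left(\sum_{\x,\y}\pi_L(\x)\mathring P_L(\x,\y)\,|g(\x)^2-g(\y)^2|\right)^2
\le 2\,\Dir_{\mathring P_L}(g)\cdot \sum_{\x,\y}\pi_L(\x)\mathring P_L(\x,\y)(g(\x)+g(\y))^2.
$$
The right-hand factor is at most $4\Expec{\pi_L}{g^2}$, while the left-hand side, after regrouping the telescoped sum by the cut between level sets $L_t=\{\x\in L: g(\x)\ge t\}$, equals $\big(\sum_t (\text{something})\,Q_L(L_t,L\setminus L_t)\big)^2$ and is bounded below using $Q_L(L_t,L\setminus L_t)\ge B^L_\star\,\pi_L(L_t)$ (valid because $\pi_L(L_t)\le\pi_L(\mathrm{supp}\,g)\le 1/2$) by $(B^L_\star)^2\,\Expec{\pi_L}{g^2}^2$. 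Combining these yields $\Dir_{\mathring P_L}(g)/\Expec{\pi_L}{g^2}\ge (B^L_\star)^2/2$, and hence the same lower bound for $1-\lambda^{\overline{L}}_{\max}$.

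The only genuinely new point relative to the textbook proof — and the one I would be most careful about — is the passage between $P_{\overline{L}}$ and $\mathring P_L$: one needs that $1-\lambda^{\overline{L}}_{\max}$ admits the Dirichlet-form variational characterization despite $P_{\overline{L}}$ being sub-stochastic. This is handled by observing that $\mathring P_L = P_{\overline{L}} + D$ for a non-negative diagonal matrix $D$ (the "loop" probabilities $\sum_{\z\notin L}P(\x,\z)$), that $\mathring P_L$ is genuinely stochastic, reversible w.r.t.\ $\pi_L$, with non-negative spectrum (Lemma~\ref{lemma:diod_conj_restricted} again), and that adding the non-negative diagonal $D$ can only increase eigenvalues, so $\lambda^{\overline{L}}_{\max}\le\lambda_1(\mathring P_L)=1$ with $1-\lambda^{\overline{L}}_{\max}\ge 1-\lambda_2(\mathring P_L)\ge (B^L_\star)^2/2$ by the standard Cheeger bound (Theorem~\ref{thm:bottle_rel}) applied to the stochastic chain $\mathring P_L$. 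In fact this last remark shows the lemma can be obtained almost for free by citing Theorem~\ref{thm:bottle_rel} for $\mathring P_L$ together with the eigenvalue interlacing $\lambda^{\overline{L}}_{\max}\le\lambda_1(\mathring P_L)$ — which is the route I would ultimately take, since it avoids re-deriving Cheeger's inequality from scratch. Everything else is routine.
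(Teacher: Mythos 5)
Your proposal contains a genuine gap, and it sits exactly at the point you flag as "the only genuinely new point": the passage between $P_{\overline{L}}$ and $\mathring{P}_L$. The variational characterization you start from, $1-\lambda^{\overline{L}}_{\max}=\min_{f\neq 0}\Dir_{\mathring{P}_L}(f)/\Expec{\pi_L}{f^2}$, is false: the right-hand side is $0$ (take $f$ constant), since $\mathring{P}_L$ is a genuine stochastic chain with top eigenvalue $1$. Writing $I-P_{\overline{L}}=(I-\mathring{P}_L)+D$ with $D(\x,\x)=\sum_{\z\notin L}P(\x,\z)$, the quadratic form of $P_{\overline{L}}$ is $\Dir_{\mathring{P}_L}(f)+\sum_{\x\in L}\pi_L(\x)D(\x,\x)f(\x)^2$; the diagonal term does not vanish, and it is precisely this ``killing'' term that makes $1-\lambda^{\overline{L}}_{\max}$ comparable to the bottleneck ratios $B(A)$ of the \emph{original} chain (which count edges escaping to $S\setminus L$) rather than to the internal bottlenecks of the restricted chain. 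The correct starting point is Lemma~\ref{lem:lambda_max} in the appendix: $1-\lambda^{\overline{L}}_{\max}=\inf_\varphi \Dir_P(\varphi)/\Expec{\pi}{\varphi^2}$ over $\varphi$ vanishing on $S\setminus L$, where $\Dir_P$ is the Dirichlet form of the full chain on $S$ and thus automatically contains the boundary contribution. The level-set/Cauchy--Schwarz argument then has to be run on this form, with level sets $A_t=\{\varphi\geq t\}\subseteq L$ and the full-chain edge measure $Q(A_t,S\setminus A_t)\geq B^L_\star\,\pi(A_t)$; your version, which uses $Q_L(L_t,L\setminus L_t)$, bounds the wrong quantity.

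The shortcut you say you would ultimately take fails for the same reason, twice over. First, the interlacing claim $\lambda_1(P_{\overline{L}})\leq\lambda_2(\mathring{P}_L)$ is false: Weyl's inequality with the PSD perturbation $D$ only gives the vacuous $\lambda_1(P_{\overline{L}})\leq\lambda_1(\mathring{P}_L)=1$. Concretely, take $S=\{1,2,3\}$, $L=\{1,2\}$, $P(1,2)=P(2,1)=\epsilon$ and $P(1,3)=P(2,3)=\delta$ with $\delta\ll\epsilon$ (easy to mix inside $L$, hard to leave): then $\lambda_1(P_{\overline{L}})=1-\delta$ while $\lambda_2(\mathring{P}_L)=1-2\epsilon$, so the inequality goes the wrong way. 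Second, even granting the interlacing, Theorem~\ref{thm:bottle_rel} applied to $\mathring{P}_L$ controls $1-\lambda_2(\mathring{P}_L)$ by the square of the minimal bottleneck ratio \emph{of the restricted chain}, i.e. $\min_A B_L(A)$ computed with $\mathring{P}_L$ and $\pi_L$, which excludes all escape edges; the statement instead involves $B^L_\star=\min_{A\subseteq L,\,\pi(A)\leq 1/2}B(A)$ computed in the original chain. In the example above these are $\epsilon$ versus $\delta$ --- unrelated quantities. The paper itself gives no written proof (it points to Section~13.3.3 of \cite{lpwAMS08} and to the adaptation via Lemma~\ref{lem:lambda_max}), so the burden of carrying out the Dirichlet-form version correctly is real, and your write-up as it stands proves a bound on the wrong spectral quantity.
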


Finally, let us recall a couple of lemmata relating $\tauset{S \setminus L}$ and $\lambda^{\overline{L}}_{\max}$
and already stated in e.g. \cite{msFOCS09}.
\begin{lemma}
\label{lem:hittin_lambda_max_lb}
 For a reversible Markov chain with state space $S$, any $L \subseteq S$ and any $t$ it holds that
 $$
  \max_{\x \in L} \Prob{\x}{\tauset{S \setminus L} > t} \geq \exp\left(t \log \lambda^{\overline{L}}_{\max}\right).
 $$
\end{lemma}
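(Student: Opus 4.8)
The plan is to read off $\Prob{\x}{\tauset{S\setminus L}>t}$ directly from the substochastic matrix $P_{\overline L}$ of \eqref{eq:restricted} and then compare the all-ones vector on $L$ with the Perron eigenfunction of $P_{\overline L}$. First I would record the elementary path-counting identity
$$
\Prob{\x}{\tauset{S\setminus L}>t}=\sum_{\y\in L}P_{\overline L}^{\,t}(\x,\y)=\bigl(P_{\overline L}^{\,t}\mathbf{1}_L\bigr)(\x),\qquad \x\in L,
$$
where $\mathbf{1}_L$ is the indicator vector of $L$. This holds because $P_{\overline L}$ keeps exactly the within-$L$ transition probabilities and kills every move that exits $L$, so $P_{\overline L}^{\,t}(\x,\y)$ equals the sum over all length-$t$ paths $\x=x_0,x_1,\dots,x_t=\y$ with every $x_j\in L$ of $\prod_j P(x_{j-1},x_j)$, i.e. precisely the probability that the chain sits at $\y$ at time $t$ without having left $L$ in the meantime; summing over $\y\in L$ gives the probability of not having left $L$ at all by time $t$, which is $\Prob{\x}{\tauset{S\setminus L}>t}$ since $X_0=\x\in L$.

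Next I would invoke Perron--Frobenius. The matrix $P_{\overline L}$ is nonnegative, hence its spectral radius --- which is exactly $\lambda^{\overline L}_{\max}=\lambda^{\overline L}_1$ by Lemma~\ref{lemma:diod_conj_restricted} and the remark following it --- is itself an eigenvalue and admits a nonnegative eigenfunction $f\colon L\to\mathbb{R}_{\geq 0}$, $f\not\equiv 0$. Normalizing so that $\max_{\x\in L}f(\x)=1$, we get $0\le f(\x)\le 1$ for all $\x\in L$, i.e. $f\le\mathbf{1}_L$ entrywise; since $P_{\overline L}^{\,t}$ is nonnegative this is preserved, so $\bigl(P_{\overline L}^{\,t}f\bigr)(\x)\le\bigl(P_{\overline L}^{\,t}\mathbf{1}_L\bigr)(\x)$ for every $\x\in L$. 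Using $P_{\overline L}^{\,t}f=(\lambda^{\overline L}_{\max})^{t}f$ and taking the maximum over $\x\in L$,
$$
\max_{\x\in L}\Prob{\x}{\tauset{S\setminus L}>t}
=\max_{\x\in L}\bigl(P_{\overline L}^{\,t}\mathbf{1}_L\bigr)(\x)
\ge\max_{\x\in L}(\lambda^{\overline L}_{\max})^{t}f(\x)
=(\lambda^{\overline L}_{\max})^{t}
=\exp\!\bigl(t\log\lambda^{\overline L}_{\max}\bigr),
$$
which is the claimed bound; the case $\lambda^{\overline L}_{\max}=0$ is trivial since the right-hand side is then $0$.

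The only genuinely delicate point is the appeal to Perron--Frobenius for a possibly reducible nonnegative matrix: one must be sure that $\lambda^{\overline L}_{\max}$ is a \emph{nonnegative real} and is attained by a \emph{nonnegative} eigenfunction, so that both the comparison $P_{\overline L}^{\,t}f\le P_{\overline L}^{\,t}\mathbf{1}_L$ and the identity $P_{\overline L}^{\,t}f=(\lambda^{\overline L}_{\max})^{t}f$ point in the right direction. This is standard and is in any case consistent with the spectral information already extracted in Lemma~\ref{lemma:diod_conj_restricted}; everything else --- the path-counting identity and the normalization of $f$ --- is routine bookkeeping. I would also note in passing that reversibility is not really used by this argument beyond what is needed to identify $\lambda^{\overline L}_{\max}$ as the top eigenvalue $\lambda^{\overline L}_1$.
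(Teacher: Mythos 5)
Your proof is correct. Note that the paper itself does not prove this lemma at all: it is stated as a recalled fact with a pointer to \cite{msFOCS09} (only the companion upper bound, Lemma~\ref{lem:hittin_lambda_max_ub}, gets a proof in Appendix~\ref{sec:hitting}). So there is no in-paper argument to compare against, but your derivation is a valid self-contained one: the identity $\Prob{\x}{\tauset{S\setminus L}>t}=(P_{\overline L}^{\,t}\mathbf{1}_L)(\x)$ is exactly right (paths confined to $L$), and the comparison of $\mathbf{1}_L$ with the normalized Perron eigenfunction, using that the spectral radius of the nonnegative matrix $P_{\overline L}$ equals $\lambda^{\overline L}_{\max}$ (via Lemma~\ref{lemma:diod_conj_restricted}) and admits a nonnegative eigenfunction even in the reducible case, gives the bound with the correct constant $1$. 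Your closing observation is also accurate: reversibility enters only through the identification $\lambda^{\overline L}_{\max}=\lambda^{\overline L}_1$, and in fact Perron--Frobenius alone already identifies $\max_i|\lambda^{\overline L}_i|$ with the spectral radius for any nonnegative matrix. For contrast, the route more commonly seen in the literature the paper cites goes through the quadratic form $\langle P_{\overline L}^{\,t}\mathbf{1}_L,\mathbf{1}_L\rangle_{\pi_L}\geq a_1^2(\lambda^{\overline L}_{\max})^t$ combined with submultiplicativity of $u(t)=\max_{\x\in L}\Prob{\x}{\tauset{S\setminus L}>t}$ to remove the constant $a_1^2$; your eigenfunction comparison reaches the clean constant in one step and is arguably simpler.
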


\begin{lemma}
\label{lem:hittin_lambda_max_ub}
 For a reversible Markov chain with state space $S$, any $L \subseteq S$ and any $t$ it holds that
 $$
  \Prob{\x}{\tauset{S \setminus L} > t} \leq \exp\left(t \log \lambda^{\overline{L}}_{\max} + \frac{1}{2} \log \frac{1}{\pi_L(\x)} \right),
 $$
 where $\pi_L(\x)$ has been defined in~\eqref{eq:piL}.
\end{lemma}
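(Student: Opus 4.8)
The plan is to realise the survival probability $\Prob{\x}{\tauset{S\setminus L}>t}$ as a row-sum of a power of the sub-stochastic matrix $P_{\overline{L}}$ (obtained from $P$ by zeroing out every transition that touches $S\setminus L$, as in \eqref{eq:restricted}), convert that row-sum into a diagonal return probability by a $\pi_L$-weighted Cauchy--Schwarz inequality, and then bound the return probability by the operator norm of $P_{\overline{L}}$, which is precisely $\lambda^{\overline{L}}_{\max}$. First I would observe that, since $\x\in L$, the event $\{\tauset{S\setminus L}>t\}$ coincides with the event that $X_1,\dots,X_t$ all lie in $L$, and since $P_{\overline{L}}$ vanishes on every pair of states not both in $L$, matrix multiplication automatically confines intermediate states to $L$; hence
$$
\Prob{\x}{\tauset{S\setminus L}>t}=\sum_{\y_1,\dots,\y_t\in L}P(\x,\y_1)\cdots P(\y_{t-1},\y_t)=\sum_{\y\in S}P_{\overline{L}}^{t}(\x,\y).
$$
Reversibility of $P$ with respect to $\pi$ descends to $P_{\overline{L}}$ with respect to $\pi_L$ (rescaling the mass on $L$ preserves detailed balance, and the identity is trivial when an endpoint lies outside $L$), hence also to every power: $\pi_L(\x)\,P_{\overline{L}}^{t}(\x,\y)=\pi_L(\y)\,P_{\overline{L}}^{t}(\y,\x)$.

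I would then apply Cauchy--Schwarz with weights $\pi_L$:
$$
\sum_{\y\in L}P_{\overline{L}}^{t}(\x,\y)=\sum_{\y\in L}\frac{P_{\overline{L}}^{t}(\x,\y)}{\sqrt{\pi_L(\y)}}\cdot\sqrt{\pi_L(\y)}\le\Bigl(\sum_{\y\in L}\frac{P_{\overline{L}}^{t}(\x,\y)^2}{\pi_L(\y)}\Bigr)^{1/2}\Bigl(\sum_{\y\in L}\pi_L(\y)\Bigr)^{1/2},
$$
whose second factor equals $1$; using the reversibility identity in the first factor gives
$$
\sum_{\y\in L}\frac{P_{\overline{L}}^{t}(\x,\y)^2}{\pi_L(\y)}=\frac{1}{\pi_L(\x)}\sum_{\y\in L}P_{\overline{L}}^{t}(\x,\y)\,P_{\overline{L}}^{t}(\y,\x)=\frac{P_{\overline{L}}^{2t}(\x,\x)}{\pi_L(\x)},
$$
so it only remains to bound the return probability $P_{\overline{L}}^{2t}(\x,\x)$.

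Here I would use that $P_{\overline{L}}$ acts as a self-adjoint operator on $\ell^2(\pi_L)$, so its operator norm equals its spectral radius, namely $\lambda^{\overline{L}}_{\max}$; with $u=\delta_{\x}/\sqrt{\pi_L(\x)}$, a unit vector in $\ell^2(\pi_L)$, we get $P_{\overline{L}}^{2t}(\x,\x)=\langle P_{\overline{L}}^{2t}u,u\rangle_{\pi_L}\le\|P_{\overline{L}}\|^{2t}\,\|u\|_{\pi_L}^2=(\lambda^{\overline{L}}_{\max})^{2t}$. Combining the displays yields
$$
\Prob{\x}{\tauset{S\setminus L}>t}\le\Bigl(\frac{(\lambda^{\overline{L}}_{\max})^{2t}}{\pi_L(\x)}\Bigr)^{1/2}=(\lambda^{\overline{L}}_{\max})^{t}\,\pi_L(\x)^{-1/2}=\exp\Bigl(t\log\lambda^{\overline{L}}_{\max}+\tfrac{1}{2}\log\tfrac{1}{\pi_L(\x)}\Bigr),
$$
which is exactly the claimed bound.

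An alternative is to expand $\sum_{\y}P_{\overline{L}}^{t}(\x,\y)$ directly in a $\pi_L$-orthonormal eigenbasis of $P_{\overline{L}}$ and apply Cauchy--Schwarz to the resulting sum; that route additionally invokes the non-negativity of the eigenvalues of $P_{\overline{L}}$ (Lemma~\ref{lemma:diod_conj_restricted}) to handle odd $t$, whereas routing through the even power $2t$ above sidesteps it. Overall the argument is routine and is essentially the one already implicit in \cite{msFOCS09}; the only points that genuinely require care are keeping the $\pi$-versus-$\pi_L$ normalisation straight and using the detailed-balance identities for the correct power of the sub-stochastic matrix, so I do not anticipate a real obstacle.
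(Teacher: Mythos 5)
Your proof is correct and takes essentially the same approach as the paper's: both realise the survival probability as $(P_{\overline{L}}^t\varphi_L)(\x)$, exploit reversibility of $P_{\overline{L}}$ with respect to $\pi_L$, and extract the factor $(\lambda^{\overline{L}}_{\max})^t\,\pi_L(\x)^{-1/2}$ from the spectral bound on the self-adjoint operator $P_{\overline{L}}$ in $\ell^2(\pi_L)$. The only (cosmetic) difference is that you route through Cauchy--Schwarz and the return probability $P_{\overline{L}}^{2t}(\x,\x)$, whereas the paper expands $\varphi_L$ in the $\pi_L$-orthonormal eigenbasis, bounds $\langle P_{\overline{L}}^t\varphi_L,P_{\overline{L}}^t\varphi_L\rangle_{\pi_L}\leq(\lambda^{\overline{L}}_{\max})^{2t}$, and then uses the pointwise domination $\pi_L(\x)[(P_{\overline{L}}^t\varphi_L)(\x)]^2\leq\langle P_{\overline{L}}^t\varphi_L,P_{\overline{L}}^t\varphi_L\rangle_{\pi_L}$ --- the two computations are interchangeable and neither needs the non-negativity of the spectrum.
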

Since the statement of Lemma~\ref{lem:hittin_lambda_max_ub} is slightly different from the ones found in previous literature,
we provide a proof in Appendix~\ref{sec:hitting} for sake of completeness.

The above lemmata represent the main ingredients to prove the following relations between bottleneck ratio and hitting time.
\begin{lemma}
\label{lemma:hitting_bottle_ub}
 Let $\G$ be a potential game with profile space $S$ and let $P$ be the transition matrix of the logit dynamics for $\G$. Then for finite $\beta$ and $L \subset S$, $L \neq \emptyset$, we have
 $$
  \min_{\x \in L} \Prob{\x}{\tauset{S \setminus L} \leq t} \leq t \cdot \frac{B(L)}{1 - B(L)}.
 $$
\end{lemma}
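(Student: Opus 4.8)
The statement bounds the probability that the chain, started anywhere in $L$, leaves $L$ by time $t$. The natural quantity controlling this escape is $\lambda^{\overline L}_{\max}$, the top eigenvalue of the sub-stochastic matrix $P_{\overline L}$ defined in \eqref{eq:restricted}, and Lemma~\ref{lemma:lambda_max_bottle} already tells us $1-\lambda^{\overline L}_{\max}\le B(L)$. So the plan is to express $\Prob{\x}{\tauset{S\setminus L}\le t}$ in terms of $\lambda^{\overline L}_{\max}$, bound it by $1-(\lambda^{\overline L}_{\max})^t$ uniformly over $\x\in L$, and then convert this into the claimed bound $t\cdot\frac{B(L)}{1-B(L)}$.

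\medskip

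First I would observe that $\Prob{\x}{\tauset{S\setminus L}>t}=\sum_{\y\in L}P_{\overline L}^t(\x,\y)$, i.e.\ this is the $\x$-th row sum of $P_{\overline L}^t$. One can therefore write $\min_{\x\in L}\Prob{\x}{\tauset{S\setminus L}>t}$ as a minimum of row sums of $P_{\overline L}^t$. Since $P_{\overline L}$ is a non-negative matrix whose largest eigenvalue in absolute value is $\lambda^{\overline L}_{\max}=\lambda^{\overline L}_1\ge 0$ (by Lemma~\ref{lemma:diod_conj_restricted} and the remark following it), and it is reversible with respect to $\pi_L$, we can diagonalize: writing the Perron eigenfunction $h$ (which is non-negative, and strictly positive on the connected component containing $\x$), the row sums of $P_{\overline L}^t$ are controlled from below by $(\lambda^{\overline L}_{\max})^t$ times a quantity depending on $h$ and $\pi_L$. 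The cleanest route, avoiding eigenfunction positivity subtleties, is to use that for a reversible sub-stochastic matrix the escape probability satisfies $\max_{\x\in L}\Prob{\x}{\tauset{S\setminus L}>t}\ge (\lambda^{\overline L}_{\max})^t$, which is exactly Lemma~\ref{lem:hittin_lambda_max_lb} rewritten; but here we want a statement about the \emph{minimum} over $\x$, so instead I would argue directly with the fact that the $t$-step survival probabilities are super-multiplicative-like: $\Prob{\x}{\tauset{S\setminus L}>t}\ge (\min_{\y\in L}\Prob{\y}{\tauset{S\setminus L}>1})^{?}$ is too crude, so the honest approach is the spectral one. Let $g=\mathbf 1_L$ and note $\langle P_{\overline L}^t g, \mathbf 1\rangle_{\pi_L}=\sum_{\x\in L}\pi_L(\x)\Prob{\x}{\tauset{S\setminus L}>t}$; expanding $g$ in the orthonormal eigenbasis $\{f_j\}$ of $P_{\overline L}$ with eigenvalues $\lambda^{\overline L}_j\ge 0$ gives $\sum_j \lambda_j^{t}\langle g,f_j\rangle_{\pi_L}^2$ for the quadratic form $\langle P_{\overline L}^t g, g\rangle_{\pi_L}$, and since all $\lambda_j\ge 0$ and $\sum_j\langle g,f_j\rangle^2_{\pi_L}=\|g\|_{\pi_L}^2=\pi_L(L)=1$, convexity of $\lambda\mapsto\lambda^t$ yields $\langle P_{\overline L}^t g,g\rangle_{\pi_L}\ge (\langle P_{\overline L}g,g\rangle_{\pi_L})^t = (1-\Dir_{P}(\varphi_L)/\pi(L)^?)^t$ — i.e.\ it is at least $(1-B(L))^t$ after the normalization already computed in the proof of Lemma~\ref{lemma:lambda_max_bottle}. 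Hence
$$
\sum_{\x\in L}\pi_L(\x)\,\Prob{\x}{\tauset{S\setminus L}>t}\;\ge\;(1-B(L))^t,
$$
and since a weighted average is at least its minimum only in the wrong direction, I instead take complements: $\sum_{\x\in L}\pi_L(\x)\Prob{\x}{\tauset{S\setminus L}\le t}\le 1-(1-B(L))^t$, so in particular $\min_{\x\in L}\Prob{\x}{\tauset{S\setminus L}\le t}\le 1-(1-B(L))^t$.

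\medskip

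Finally I would finish with the elementary inequality $1-(1-b)^t\le \frac{tb}{1-b}$ for $0\le b<1$: indeed $1-(1-b)^t = b\sum_{j=0}^{t-1}(1-b)^j \le b\cdot t$ trivially, which is already $\le \frac{tb}{1-b}$ since $\frac{1}{1-b}\ge 1$; so in fact $\min_{\x\in L}\Prob{\x}{\tauset{S\setminus L}\le t}\le tB(L)\le t\frac{B(L)}{1-B(L)}$, giving the stated bound (and slightly more). The only real obstacle is getting the convexity step and the normalization constants to line up exactly with the Dirichlet-form computation already carried out in Lemma~\ref{lemma:lambda_max_bottle}; once $\langle P_{\overline L}g,g\rangle_{\pi_L}=1-B(L)$ is in hand, the rest is routine. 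If the convexity argument over the $\pi_L$-weighted average turns out to give only an averaged rather than a pointwise-minimum bound in the form needed, the fallback is to invoke Lemma~\ref{le:innerproducts}-style positivity together with the sub-multiplicativity $\Prob{\x}{\tauset{S\setminus L}>s+t}\le \Prob{\x}{\tauset{S\setminus L}>s}\cdot\max_{\y\in L}\Prob{\y}{\tauset{S\setminus L}>t}$ to bootstrap a per-state bound, but I expect the direct spectral computation to suffice.
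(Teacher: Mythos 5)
Your argument is correct and in fact yields the slightly stronger bound $\min_{\x \in L} \Prob{\x}{\tauset{S\setminus L} \leq t} \leq t\, B(L)$, but it takes a genuinely different route from the paper's. The paper passes through the top eigenvalue $\lambda^{\overline{L}}_{\max}$ of $P_{\overline{L}}$ in two steps: it invokes Lemma~\ref{lem:hittin_lambda_max_lb} (cited from earlier literature without proof) to get $\max_{\x\in L}\Prob{\x}{\tauset{S \setminus L} > t} \geq (\lambda^{\overline{L}}_{\max})^t$, then Lemma~\ref{lemma:lambda_max_bottle} (the variational characterization) to get $\lambda^{\overline{L}}_{\max} \geq 1 - B(L)$, and finishes with the elementary inequalities $1-a \geq e^{-a/(1-a)}$ and $1 - e^{-a} \leq a$, which is where the factor $1/(1-B(L))$ enters. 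You instead bound the $\pi_L$-weighted \emph{average} survival probability: expanding $\mathbf{1}_L$ in the orthonormal eigenbasis of $P_{\overline{L}}$ and applying Jensen's inequality to the convex map $\lambda \mapsto \lambda^t$ --- legitimate precisely because Lemma~\ref{lemma:diod_conj_restricted} guarantees all eigenvalues of $P_{\overline{L}}$ are non-negative --- you obtain $\sum_{\x \in L}\pi_L(\x)\Prob{\x}{\tauset{S\setminus L} > t} \geq \langle P_{\overline{L}}\mathbf{1}_L, \mathbf{1}_L\rangle_{\pi_L}^{\,t} = (1-B(L))^t$, the last identity being a one-line computation ($\langle P_{\overline{L}}\mathbf{1}_L, \mathbf{1}_L\rangle_{\pi_L} = 1 - \frac{1}{\pi(L)}\sum_{\x\in L}\pi(\x)P(\x, S\setminus L) = 1-B(L)$ exactly, resolving the normalization you left with a question mark) rather than a variational argument. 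Passing to complements and using that the minimum is at most the $\pi_L$-average, together with $1-(1-b)^t = b\sum_{j=0}^{t-1}(1-b)^j \leq tb$, closes the proof. What your approach buys is self-containedness (no appeal to the unproved Lemma~\ref{lem:hittin_lambda_max_lb}) and a cleaner constant; what the paper's approach buys is reuse of the two spectral lemmas it needs anyway for the companion lower bound, Lemma~\ref{lemma:hitting_bottle_lb}. The meandering in your middle paragraph does not affect correctness --- the fallback you sketch is never needed.
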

\begin{proof}We observe:
{\allowdisplaybreaks
 \begin{align*}
 \min_{\x \in L} \Prob{\x}{\tauset{S \setminus L} \leq t} & = 1 - \max_{\x \in L} \Prob{\x}{\tauset{\overline{L}} > t}\\
 \text{(by Lemma~\ref{lem:hittin_lambda_max_lb})} \qquad & \leq 1- \exp\left(t \log \lambda^{\overline{L}}_{\max}\right) \\
 & = 1- \exp\left(t \log (1 - (1 - \lambda^{\overline{L}}_{\max}))\right) \\
 \text{(since $1 - a \geq e^{-\frac{a}{1-a}}$)} \qquad & \leq  1- \exp\left(-t\frac{1 - \lambda^{\overline{L}}_{\max}}{\lambda^{\overline{L}}_{\max}}\right)\\
 \text{(by Lemma~\ref{lemma:lambda_max_bottle})} \qquad & \leq 1 - \exp\left(-t \cdot \frac{B(L)}{1 - B(L)}\right)\\
 \text{(since $1 - e^{-a} \leq a$)} \qquad & \leq  t \cdot \frac{B(L)}{1 - B(L)}.  \tag*{\qed}
 \end{align*}}
 \let\qed\relax
\end{proof}
Moreover, we have the following lemma.
\begin{lemma}
\label{lemma:hitting_bottle_lb}
 Let $\G$ be a potential game with profile space $S$ and $P$ be the transition matrix of the logit dynamics for $\G$. For $\beta>0$, $\emptyset \neq L \subset S$, $\x \in L$ and $0 < \varepsilon < 1$, we have
 $$
  \Time_{S \setminus L}^\varepsilon(\x) \leq (B^L_\star)^{-2} \left(\frac{2(1-\varepsilon)}{\varepsilon} + \log \frac{1}{\pi_L(\x)}\right),
 $$
 where $\pi_L(\x) = \frac{\pi(\x)}{\pi(L)}$ and $B^L_\star = \min_{\begin{subarray}{c}A \subseteq L \colon\\\pi(A) \leq 1/2 \end{subarray}} B(A)$.
\end{lemma}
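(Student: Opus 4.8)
The plan is to reduce the bound on $\Time_{S\setminus L}^\varepsilon(\x)$ to the explicit tail estimate of Lemma~\ref{lem:hittin_lambda_max_ub} combined with the spectral-gap lower bound of Lemma~\ref{lemma:lambda_max_bottle_ub}. Set
$$
t^\star = (B^L_\star)^{-2}\left(\frac{2(1-\varepsilon)}{\varepsilon} + \log\frac{1}{\pi_L(\x)}\right).
$$
By the definition of $\Time_{S\setminus L}^\varepsilon(\x)$ as the first time step at which the tail of $\tauset{S\setminus L}$ drops to $\varepsilon$, it suffices to show $\Prob{\x}{\tauset{S\setminus L} > t^\star} \leq \varepsilon$.

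First I would record that $0 < \lambda^{\overline{L}}_{\max} < 1$. Non-negativity of all eigenvalues of $P_{\overline{L}}$ follows from Lemma~\ref{lemma:diod_conj_restricted}, since $P_{\overline{L}}$ is a nice restriction of the reversible matrix $P$; positivity of $\lambda^{\overline{L}}_{\max}$ holds because for finite $\beta$ the diagonal entries of $P$ on $L$ are strictly positive, so the trace of $P_{\overline L}$ — hence its largest eigenvalue — is positive; and $\lambda^{\overline{L}}_{\max} < 1$ is exactly the statement $1 - \lambda^{\overline{L}}_{\max} \geq (B^L_\star)^2/2 > 0$ of Lemma~\ref{lemma:lambda_max_bottle_ub} (here $B^L_\star > 0$ because all neighbouring transition probabilities are positive for finite $\beta$). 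Thus $\log\lambda^{\overline{L}}_{\max}$ is a well-defined negative number, and the elementary inequality $\log(1-x)\leq -x$ applied with $x = 1 - \lambda^{\overline{L}}_{\max}$ gives
$$
\log\lambda^{\overline{L}}_{\max} \leq -(1-\lambda^{\overline{L}}_{\max}) \leq -\frac{(B^L_\star)^2}{2}.
$$

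Next I would substitute $t = t^\star$ into Lemma~\ref{lem:hittin_lambda_max_ub}. Using the displayed bound,
$$
t^\star\log\lambda^{\overline{L}}_{\max} \leq -\frac{(B^L_\star)^2}{2}\, t^\star = -\frac{1-\varepsilon}{\varepsilon} - \frac12\log\frac{1}{\pi_L(\x)},
$$
so that
$$
\Prob{\x}{\tauset{S\setminus L} > t^\star} \leq \exp\!\left(t^\star\log\lambda^{\overline{L}}_{\max} + \tfrac12\log\tfrac{1}{\pi_L(\x)}\right) \leq \exp\!\left(-\tfrac{1-\varepsilon}{\varepsilon}\right) = e^{1-1/\varepsilon}.
$$
Finally I would invoke the standard inequality $\log x \geq 1 - 1/x$ (valid for all $x>0$, equality at $x=1$) with $x = \varepsilon$, which rearranges to $e^{1-1/\varepsilon} \leq \varepsilon$. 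Hence $\Prob{\x}{\tauset{S\setminus L} > t^\star} \leq \varepsilon$, i.e. $\Time_{S\setminus L}^\varepsilon(\x) \leq t^\star$, which is the claim.

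I do not expect a genuine obstacle: the argument is a short chain of elementary logarithmic inequalities attached to Lemmas~\ref{lem:hittin_lambda_max_ub} and~\ref{lemma:lambda_max_bottle_ub}. The only points needing care are (i) ensuring $\lambda^{\overline{L}}_{\max}$ lies strictly in $(0,1)$ so that $\log\lambda^{\overline{L}}_{\max}$ is finite and negative, handled above via the positive diagonal of $P_{\overline L}$ and the strict positivity of $B^L_\star$, and (ii) the harmless mismatch between the integer quantity $\Time_{S\setminus L}^\varepsilon(\x)$ and the real-valued right-hand side, which is absorbed by reading the inequality up to the obvious ceiling (the tail bound being monotone in $t$).
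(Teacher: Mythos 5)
Your argument is correct and is essentially the paper's own proof: both apply Lemma~\ref{lem:hittin_lambda_max_ub}, convert $\log\lambda^{\overline{L}}_{\max}$ into $-(B^L_\star)^2/2$ via $\log(1-a)\leq -a$ and Lemma~\ref{lemma:lambda_max_bottle_ub}, and then check that the stated $t^\star$ drives the tail below $\varepsilon$ (your inequality $e^{1-1/\varepsilon}\leq\varepsilon$ is the same final step the paper phrases as $e^{-a}\leq(1+a)^{-1}$). The extra care you take that $\lambda^{\overline{L}}_{\max}\in(0,1)$ is a harmless refinement the paper leaves implicit.
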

\begin{proof}
 From Lemma~\ref{lem:hittin_lambda_max_ub} we know that the hitting time of $S \setminus L$ can be expressed as a function of the eigenvalues of the matrix $P_{\overline{L}}$. In particular, we have
\begin{align*}
 \Prob{\x}{\tauset{S \setminus L} > t} & \leq \exp\left(t \log \lambda^{\overline{L}}_{\max} + \frac{1}{2} \log \frac{1}{\pi_L(\x)} \right)\\
 \text{(since $1 - a \leq e^{-a}$)} \qquad & \leq \exp\left(-t\left(1 - \lambda^{\overline{L}}_{\max}\right) + \frac{1}{2} \log \frac{1}{\pi_L(\x)}\right)\\
 \text{(by Lemma~\ref{lemma:lambda_max_bottle_ub})} \qquad & \leq \exp\left[-\frac{1}{2} \left(t (B^L_\star)^2 - \log \frac{1}{\pi_L(\x)}\right)\right]\\
 \text{(since $e^{-a} \leq (1 + a)^{-1}$)} \qquad & \leq \left(1 + \frac{1}{2} \left(t (B^L_\star)^2 - \log \frac{1}{\pi_L(\x)}\right)\right)^{-1}.
 \end{align*}
 Thus, by setting $t = (B^L_\star)^{-2} \left(\frac{2(1-\varepsilon)}{\varepsilon} + \log \frac{1}{\pi_L(\x)}\right)$, we have $\Prob{\x}{\tauset{S \setminus L} > t} \leq \varepsilon$ and then $\Time_{S \setminus L}^\varepsilon(\x)$ is upper bounded by this value of $t$.
\end{proof}

\subsubsection{Bounding the hitting time}
\label{subsubsec:pmt_N_small_beta}
The following lemma turns out to be useful for proving fast hitting time of profiles not in $N$.
\begin{lemma}
\label{lemma:hitting}
Let $\G$ be an asymptotically well-behaved $n$-player potential game and fix $\beta \geq \beta_0, \varepsilon > 0$.
Let $p, q$ the functions generated by $\G$.
Then, for each $n$ sufficiently large, at the end of algorithm $\A_{p,q}$ on input $\G, \beta, \varepsilon$ and $n$ it holds that for each subset $L \subseteq N$ such that $\pi(L) \leq 1/2$, $B(L) \geq 1/p(n)$.
\end{lemma}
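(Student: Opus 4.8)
The plan is to show that the only way $\A_{p,q}$ can terminate while leaving a super-polynomial bottleneck inside $N$ is the degenerate one, and then to rule it out for $n$ large. Observe that $\A_{p,q}$ halts for one of two reasons: either (i) the while-condition fails, so that no $L \subseteq N$ with $\pi(L) \leq 1/2$ has $B(L) \leq 1/q(n)$; or (ii) some core $T_i$ computed in step~2 turns out to be empty. In case (i) the statement is immediate: every $L \subseteq N$ with $\pi(L) \leq 1/2$ then has $B(L) > 1/q(n)$, and since $\G$ is asymptotically well-behaved, Lemma~\ref{lem:awd_classified} (with the very functions $p,q$ generated by $\G$) forces $B(L) \geq 1/p(n)$. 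So all the content lies in proving that, for $n$ large enough, case (ii) cannot occur; equivalently, that whenever $\A_{p,q}$ reaches step~2 the set $R_i$ picked in step~1 has a non-empty core.

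So I would fix such an iteration, with $R_i$ the connected subset of the current $N$ of smallest stationary probability among all $L \subseteq N$ having $\pi(L) \leq 1/2$ and $B(L) \leq 1/q(n)$. First I would bound the restricted mixing time $\tm^{R_i}(\varepsilon)$. The argument of Lemma~\ref{lemma:bottleneck_poly} applies to this $R_i$ --- it uses only that $R_i$ minimizes $\pi$ among subsets of $N$ of super-polynomially small bottleneck ratio, not that its core was non-empty --- so every $A \subseteq R_i$ has $B_{R_i}(A) \geq 1/p(n) - 1/\ell(n) = \Theta(1/p(n))$ for $n$ large. Since $\mathring{P}_{R_i}$ is a nice restriction of $P$, Lemma~\ref{lemma:diod_conj_restricted} gives that it has no negative eigenvalue, so its relaxation time equals $1/(1-\lambda_2)$; Theorems~\ref{thm:bottle_rel} and~\ref{theorem:relaxation} then yield $\tm^{R_i}(\varepsilon) = O(p(n)^2 \log(1/\pimin))$, and $\log(1/\pimin) \leq n\log m(n) + \beta(\Phi_{\max} - \Phi_{\min}) \leq n\log m(n) + n\rho(n)$ is a polynomial once $\beta \leq \rho(n)/\Delta(n)$ is imposed (this is exactly Proposition~\ref{prop:mixing_restricted_chain}, available by the time we prove the present lemma). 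Thus $\tm^{R_i}(\varepsilon) \leq \tilde p(n)$ for a polynomial $\tilde p$. Now apply Lemma~\ref{lemma:hitting_bottle_ub} with $L = R_i$ and $t = \tm^{R_i}(\varepsilon)$:
$$\min_{\x \in R_i} \Prob{\x}{\tauset{S \setminus R_i} \leq \tm^{R_i}(\varepsilon)} \ \leq\ \tm^{R_i}(\varepsilon) \cdot \frac{B(R_i)}{1 - B(R_i)} \ \leq\ \frac{2\,\tilde p(n)}{q(n)},$$
using $B(R_i) \leq 1/q(n) \leq 1/2$ for $n$ large. As $q$ is super-polynomial and $\tilde p$ polynomial, the right-hand side is $\leq \varepsilon$ for all large $n$, so some profile of $R_i$ belongs to $T_i$, whence $T_i \neq \emptyset$ and case (ii) is impossible.

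\textbf{Main obstacle.} The step I expect to require the most care is the bound on $\tm^{R_i}(\varepsilon)$: one must check that Lemma~\ref{lemma:bottleneck_poly} genuinely covers a set $R_i$ whose core is empty (hence not literally ``returned'' by $\A_{p,q}$, though still the minimizing set of step~1), and one needs the hypothesis $\beta \leq \rho(n)/\Delta(n)$ to keep $\log(1/\pimin)$ polynomial --- otherwise the restricted mixing time could in principle outgrow the separating super-polynomial $q$ and the final comparison would break. The remaining ingredients --- the termination dichotomy, the appeal to Lemma~\ref{lem:awd_classified}, and the hitting estimate via Lemma~\ref{lemma:hitting_bottle_ub} --- are routine.
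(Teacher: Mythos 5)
Your proposal is correct and follows essentially the same route as the paper: reduce the claim to showing each core $T_i$ is non-empty (so that termination can only occur because the while-condition fails, at which point Lemma~\ref{lem:awd_classified} upgrades $B(L) > 1/q(n)$ to $B(L) \geq 1/p(n)$), and then establish non-emptiness via Lemma~\ref{lemma:hitting_bottle_ub} combined with the polynomial bound on $\tm^{R_i}(\varepsilon)$ from Proposition~\ref{prop:mixing_restricted_chain} and the super-polynomial lower bound on $B(R_i)^{-1}$. The extra care you take in checking that Lemma~\ref{lemma:bottleneck_poly} applies to a set $R_i$ selected at step~1 even if its core turns out empty is a legitimate point the paper glosses over, but it does not change the argument.
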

\begin{proof}
It is sufficient to prove that for each $R_i$ chosen by $\A_{p,q}$, its core $T_i$ is non-empty. Indeed, in this case, the algorithm ends only if no subset $L \subseteq N$ such that $\pi(L) \leq 1/2$ has $B(L) > 1/q(n)$. Then, since $\G$ is asymptotically well-behaved, from Lemma~\ref{lem:awd_classified} it follows that the last condition is equivalent to $B(L) \geq 1/p(n)$.

As for the non-emptiness of the core, Lemma~\ref{lemma:hitting_bottle_ub} implies that there exists at least one $\x \in R_i$ such that
\[
\Prob{\x}{\tauset{S\setminus R_i} \leq \tm^{R_i}(\varepsilon)} \leq \frac{\tm^{R_i}(\varepsilon) \cdot B(R_i)}{1-B(R_i)} \leq \varepsilon,
\]
where the last step holds for $n$ sufficiently large since $\tm^{R_i}$ is at most polynomial by Proposition~\ref{prop:mixing_restricted_chain} and $B(R_i)$ is at most the inverse of a super-polynomial by hypothesis.
\end{proof}

We are now ready to prove Proposition~\ref{prop:hitting_outN}
\begin{proof}[Proof of Proposition~\ref{prop:hitting_outN}.]
Consider the set of profiles $A_\star \subseteq N$ that minimizes $B\left(A\right)$ among all $A \subseteq N$ such that $\pi\left(A\right) \leq 1/2$.
By Lemma~\ref{lemma:hitting}, $B\left(A_\star\right) \geq 1/p(n)$.
Moreover, for each $n$ and each $\x \in N$, observe that
$$
 \log \frac{1}{\pi_N\left(\x\right)} \leq \log \frac{|S| e^{-\beta \Phi_{\min}}}{e^{-\beta \Phi_{\max}}} \leq \log \frac{e^{n \log m(n)} e^{-\beta \Phi_{\min}}}{e^{-\beta \Phi_{\max}}} = n \log m(n) + \beta \left(\Phi_{\max} - \Phi_{\min}\right),
$$
where $\Phi_{\max}$ and $\Phi_{\min}$ denote the maximum and minimum of the potential $\Phi$ overall possible strategy profiles.
Since $\Phi_{\max} - \Phi_{\min} \leq n \cdot \Delta(n)$ and $\beta \leq \rho(n) / \Delta(n)$, then
$$
 \log \frac{1}{\pi_N\left(\x\right)} \leq n \cdot \left(\log m(n) + \rho(n)\right) = \rho'(n),
$$
where, by assumption on $m$ and $\rho$, $\rho'$ is a function at most polynomial in its input.
Then, for every $\x \in N$, from Lemma~\ref{lemma:hitting_bottle_lb} it follows
$$
 \Time_{S \setminus N}^\varepsilon\left(\x\right) \leq \left(\frac{1}{B\left(A_\star\right)}\right)^2 \cdot \left(\frac{2(1-\varepsilon)}{\varepsilon} + \log \frac{1}{\pi_{N}\left(\x\right)}\right) \leq p(n)^2 \cdot \left(\frac{2(1-\varepsilon)}{\varepsilon} + \rho'(n)\right) = \rho_\star(n),
$$
where $\rho_\star$ is a function at most polynomial in its input, since $p$ and $\rho'$ are.
\end{proof}

\section{Conclusions and open problems}
In this work we prove that for any asymptotically well-behaved potential game and any starting point of this game
there is a distribution that is metastable for super-polynomial time and it is quickly reached.
In the proof we also give a sufficient condition for a game (not necessarily potential)
to enjoy this metastable behavior.
It is a very interesting open problem to prove that this property is also \emph{necessary}.
The main obstacle for this direction consists of the fact that
we do not know any tool for proving or disproving metastability of 
distributions that are largely different from the ones considered in this work. Also, given that our arguments are game-independent, it would be interesting to see whether sufficient and necessary conditions can be refined for specific subclass of games.

Our convergence rate results hold if $\beta$ is small enough. As we mention above, an assumption on $\beta$ is in general necessary because when $\beta$ is high enough logit dynamics roughly behaves as best-response dynamics. Moreover, in this case, the only metastable distributions have to be concentrated around the set of Nash equilibria.
This is because for $\beta$ very high, it is extremely unlikely that a player leaves a Nash equilibrium.
Then, the hardness results about the convergence of best-response dynamics for potential games, cf. e.g. \cite{FPT04}, imply that the convergence to metastable distributions for high $\beta$ is similarly computationally hard. Interestingly, this difference in the behavior of the logit dynamics for different values of $\beta$ suggests that ``the more noisy the system is, the more (meta)stable it is.''

Our result is in a sense existential, since it is unpractical to explicitly describe the distributions via the execution of Algorithm~\ref{algo:meta_sets}. It is then an interesting open problem to characterize the sets $R_i$'s and $T_i$'s returned by this algorithm for some specific class of games in order to understand better the stability guarantee of the distributions.
A better understanding of spectra of the transition matrix along the lines of the results we prove in Section~\ref{sec:spectral} may help in answering some of the questions above.

Naturally, there are other questions of general interest about metastability that we do not consider. For example, akin to price of anarchy and price of stability, one may ask what is the performance of a system in a metastable distribution? One might also want to investigate metastable behavior of different dynamics in potential games, such as best-response dynamics. However, in the latter case, no matter what selection rule is used to choose which player has to move next, a profile is never visited twice in time since at each step the potential goes down. Therefore, the ``transient'' behavior of best-response dynamics would roughly correspond to a (possibly exponentially long) sequence of profiles visited. This, however, would not add much to our understanding of the transient phase of best-response dynamics.

\section*{Acknowledgments} We wish to thank Paul W. Goldberg for many invaluable discussions related to a number of results discussed in this paper, and an anonymous reviewer for the enlightening comments on an earlier version of this work.

\bibliographystyle{plain}
\bibliography{meta4LD}

\newpage

\appendix

\section{Hitting time tools}\label{sec:hitting}
Consider a reversible Markov chain with state space $S$ and transition matrix $P$. For $L \subseteq S$ let $P_{\overline{L}}$, $\lambda^{\overline{L}}_i$ and $\lambda^{\overline{L}}_{\max}$ as defined in Section~\ref{sec:meta_preliminaries}.
Here we give a well known (see, e.g., ~\cite{msFOCS09}) variational characterization of $\lambda^{\overline{L}}_{\max}$ as expressed by the following lemma.
\begin{lemma}
 \label{lem:lambda_max}
 Consider a reversible Markov chain with state space $S$, transition matrix $P$ and stationary distribution $\pi$. For any $L \subseteq S$ we have
 $$
  1 - \lambda^{\overline{L}}_{\max} = \inf_\varphi \frac{\Dir_P(\varphi)}{\Expec{\pi}{\varphi^2}},
 $$
 where
 $
  \Dir_P(\varphi) \text{ is defined as in~\eqref{eq:def:dir}} \text{, } \Expec{\pi}{\varphi^2} = \sum_\x \pi(\x) \varphi^2(\x)
 $
 and the inf is taken over functions $\varphi$ such that $\varphi(\x) = 0$ for $\x \in S \setminus L$ and $\Expec{\pi}{\varphi^2} \neq 0$.
\end{lemma}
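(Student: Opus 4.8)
The plan is to reduce the claimed identity to the Courant--Fischer (Rayleigh--Ritz) variational principle for the self-adjoint operator obtained by restricting $P$ to functions supported on $L$. The whole argument is the adaptation of the well-known spectral-gap characterization of stochastic matrices to the sub-stochastic matrix $P_{\overline{L}}$.

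First I would record the standard Dirichlet-form identity. Using reversibility of $P$ with respect to $\pi$ together with $\sum_\y P(\x,\y)=1$, a direct manipulation of~\eqref{eq:def:dir} gives $\Dir_P(\varphi) = \langle(I-P)\varphi,\varphi\rangle_\pi$ for every $\varphi\colon S\to\mathbb{R}$, where $\langle f,g\rangle_\pi=\sum_\x\pi(\x)f(\x)g(\x)$: the two quadratic terms each collapse to $\langle\varphi,\varphi\rangle_\pi$ (the second one after invoking reversibility to swap $\x$ and $\y$), and the cross term equals $-\langle P\varphi,\varphi\rangle_\pi$. Dividing by $\Expec{\pi}{\varphi^2}=\langle\varphi,\varphi\rangle_\pi$ yields
$$
\frac{\Dir_P(\varphi)}{\Expec{\pi}{\varphi^2}}=1-\frac{\langle P\varphi,\varphi\rangle_\pi}{\langle\varphi,\varphi\rangle_\pi}.
$$

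The second step is to replace $P$ by $P_{\overline{L}}$ on the relevant subspace. Let $V_L=\{\varphi\colon\varphi(\x)=0\text{ for all }\x\in S\setminus L\}$. For $\varphi\in V_L$, in $\langle P\varphi,\varphi\rangle_\pi=\sum_{\x}\pi(\x)\varphi(\x)\sum_\y P(\x,\y)\varphi(\y)$ only pairs with $\x,\y\in L$ survive, and there $P(\x,\y)=P_{\overline{L}}(\x,\y)$ by~\eqref{eq:restricted}; hence $\langle P\varphi,\varphi\rangle_\pi=\langle P_{\overline{L}}\varphi,\varphi\rangle_\pi$. Therefore minimizing the ratio of Step~1 over $\varphi\in V_L$ with $\Expec{\pi}{\varphi^2}\neq0$ is the same as computing
$$
1-\sup_{\varphi\in V_L\setminus\{\0\}}\frac{\langle P_{\overline{L}}\varphi,\varphi\rangle_\pi}{\langle\varphi,\varphi\rangle_\pi}.
$$
Finally I would invoke the variational principle. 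Reversibility gives $\pi(\x)P(\x,\y)=\pi(\y)P(\y,\x)$ for $\x,\y\in L$, so $P_{\overline{L}}$ acts as a self-adjoint operator on $(V_L,\langle\cdot,\cdot\rangle_\pi)$; as a matrix on $S$ it is block-diagonal, the $L$-block being the submatrix of $P$ indexed by $L$ and the $S\setminus L$-block being zero, so its spectrum is that of the submatrix together with $|S\setminus L|$ zeros. By Rayleigh--Ritz the supremum above equals the top eigenvalue of $P_{\overline{L}}$ restricted to $V_L$, i.e. the largest eigenvalue of the $L$-block, which by the non-negativity of the whole spectrum (Lemma~\ref{lemma:diod_conj_restricted}, as recorded in Section~\ref{sec:meta_preliminaries}) is exactly $\lambda^{\overline{L}}_{\max}=\lambda^{\overline{L}}_1$. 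Chaining the three steps yields $\inf_\varphi\Dir_P(\varphi)/\Expec{\pi}{\varphi^2}=1-\lambda^{\overline{L}}_{\max}$.

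I expect the only delicate point to be the spectral bookkeeping in the last step: one must verify that $P_{\overline{L}}$ is genuinely self-adjoint on $V_L$ (which is exactly where reversibility re-enters), that the supremum of the Rayleigh quotient is attained at the \emph{top} eigenvalue rather than some interior one, and that the extra zeros contributed by the $S\setminus L$-block neither raise the supremum nor disturb the identification $\lambda^{\overline{L}}_{\max}=\lambda^{\overline{L}}_1$ --- this last point relying on the non-negativity of the spectrum already established. Everything else is the routine reversibility algebra of Step~1 and the support argument of Step~2.
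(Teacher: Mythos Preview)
Your argument is correct and is precisely the standard derivation one finds in the references the paper points to. Note, however, that the paper does \emph{not} supply its own proof of Lemma~\ref{lem:lambda_max}: it merely states the result as ``well known'' with a pointer to~\cite{msFOCS09}, so there is nothing to compare against beyond confirming that you have reproduced the textbook route (Dirichlet-form identity $\Dir_P(\varphi)=\langle(I-P)\varphi,\varphi\rangle_\pi$, restriction of the quadratic form to functions supported on $L$, and Rayleigh--Ritz for the self-adjoint $P_{\overline{L}}$).

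One remark worth making explicit: the Rayleigh quotient naturally returns $\lambda^{\overline{L}}_1$, and the identification with $\lambda^{\overline{L}}_{\max}$ genuinely requires the non-negativity of the spectrum. You handle this correctly by invoking Lemma~\ref{lemma:diod_conj_restricted}, and that is consistent with how the paper sets things up in Section~\ref{sec:meta_preliminaries}; but it means the lemma, as literally stated for an arbitrary reversible chain, is really a characterization of $1-\lambda^{\overline{L}}_1$, which coincides with $1-\lambda^{\overline{L}}_{\max}$ in the paper's setting. Your closing paragraph already flags this, so the proof is complete as written.
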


Since the statement of Lemma~\ref{lem:hittin_lambda_max_ub} is slightly different from the ones found in previous literature,
we attach a proof for sake of completeness.
\begin{proof}[Proof of Lemma~\ref{lem:hittin_lambda_max_ub}]
 Let $\varphi_{L}$ be the characteristic function on $L$, that is $\varphi_{L}(\x) = 1$ if $\x \in L$ and $0$ otherwise. Then
 \begin{equation}
  \label{eq:hittin_ub_1}
  \Prob{\x}{\tauset{S \setminus L} > t} = \sum_{\y \in S} P_{\overline{L}}^t(\x,\y) = \sum_{\y \in S} P_{\overline{L}}^t(\x,\y) \varphi_{L}(\y) = (P_{\overline{L}}^t \varphi_{L})(\x).
 \end{equation}
 Since $P_{\overline{L}}$ is reversible with respect to $\pi_L$, we have that its eigenvectors, $\psi_1, \ldots, \psi_{|S|}$, form an orthonormal basis with respect to the inner product $\langle \cdot, \cdot \rangle_{\pi_L}$: in particular we can write $\varphi_{L} = \sum_{i} \alpha_i \psi_i$, where $\sum_i \alpha_i = 1$ and each $\alpha_i \geq 0$. Hence and from the linearity of the inner product we have
\begin{equation}
  \label{eq:hittin_ub_2}
 \begin{split}
  \langle P_{\overline{L}}^t \varphi_{L}, P_{\overline{L}}^t \varphi_{L} \rangle_{\pi_L} & = \sum_i \sum_j \langle \alpha_i \left(\lambda^{\overline{L}}_{i}\right)^t \psi_i, \alpha_j \left(\lambda^{\overline{L}}_{j}\right)^t \psi_j \rangle_{\pi_L}\\
  \mbox{(by orthogonality)} & = \sum_i \left(\lambda^{\overline{L}}_{i}\right)^{2t} \langle \alpha_i \psi_i, \alpha_i \psi_i \rangle_{\pi_L}\\
  & \leq \left(\lambda^{\overline{L}}_{\max} \right)^{2t} \langle \varphi_{L}, \varphi_{L} \rangle_{\pi_L} = \left(\lambda^{\overline{L}}_{\max} \right)^{2t},
 \end{split}
\end{equation}
 where the last equality follow from the definition of $\varphi_{L}$. Moreover,
 \begin{equation}
  \label{eq:hittin_ub_3}
  \pi_L(\x) [(P_{\overline{L}}^t \varphi_{L})(\x)]^2 \leq \sum_{\y \in S} \pi_L(\y) [(P_{\overline{L}}^t \varphi_{L})(\y)]^2 = \langle P_{\overline{L}}^t \varphi_{L}, P_{\overline{L}}^t \varphi_{L} \rangle_{\pi_L}.
 \end{equation}
 The theorem follows from~\eqref{eq:hittin_ub_1}, \eqref{eq:hittin_ub_2}, \eqref{eq:hittin_ub_3}.
\end{proof}

\section{Markov chain coupling} \label{sec:Markovcoupling}
A {\em coupling} of two probability distributions $\mu$ and $\nu$ on a state space $S$ is a pair of random variables $(X,Y)$ defined on $S \times S$ such that the marginal distribution of $X$ is $\mu$ and the marginal distribution of $Y$ is $\nu$. A {\em coupling of a Markov chain} $\mathcal{M}$ on $S$ with transition matrix $P$ is a process $(X_t,Y_t)_{t=0}^\infty$ with the property that $X_t$ and $Y_t$ are both Markov chains with transition matrix $P$. Similarly, a {\em coupling of Markov chains} $\mathcal{M}$, $\mathcal{\bar M}$ both defined on $S$ with transition matrices $P$ and $\bar P$, respectively, is a process $(X_t,Y_t)_{t=0}^\infty$ with the property that $X_t$ is a Markov chain with transition matrix $P$ and $Y_t$ is a Markov chain with transition matrix $\bar P$.

When the two coupled chains start at $(X_0,Y_0) = (\x,\y)$, we write $\Prob{\x,\y}{\cdot}$ for the probability of an event on the space $S \times S$.
The following theorem, which follows from Proposition~4.7 and Theorem~5.2 in \cite{lpwAMS08} establishes the importance of this tool.
\begin{theorem}[Coupling]
\label{thm:coupling}
Let $\mathcal{M}$, $\mathcal{\bar M}$ be two Markov chains with finite state space $S$ and
transition matrices $P$ and $\bar P$, respectively. For each pair of states $\x,\y \in S$ consider a coupling $(X_t,Y_t)$ of $\mathcal{M}$ and $\mathcal{\bar M}$ with starting states $X_0=\x$ and $Y_0=\y$.
Then
$$
\tv{P^t(\x,\cdot) - \bar{P}^t(\y,\cdot)} \leq \Prob{\x,\y}{X_t \neq Y_t}.
$$
\end{theorem}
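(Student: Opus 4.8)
The plan is to reduce the statement to the generic coupling inequality for a pair of probability distributions, since the Markov structure of the two chains plays no role once the time $t$ is fixed. First I would observe that, for each fixed $t$, the pair $(X_t,Y_t)$ is itself a coupling of the two distributions $\mu := P^t(\x,\cdot)$ and $\nu := \bar{P}^t(\y,\cdot)$ over $S$. Indeed, by the definition of a coupling of Markov chains recalled in Appendix~\ref{sec:Markovcoupling}, $X_t$ is a Markov chain with transition matrix $P$ started at $\x$, so its time-$t$ marginal is exactly $P^t(\x,\cdot)$; symmetrically, the time-$t$ marginal of $Y_t$ is $\bar{P}^t(\y,\cdot)$. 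It therefore suffices to prove that for any coupling $(X,Y)$ of distributions $\mu,\nu$ on $S$ one has $\tv{\mu-\nu}\le \Prob{}{X\neq Y}$, and then to instantiate this with $X=X_t$ and $Y=Y_t$ under the joint measure $\Prob{\x,\y}{\cdot}$.

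For the generic inequality I would work from the set characterization $\tv{\mu-\nu}=\max_{A\subseteq S}|\mu(A)-\nu(A)|$. Fix any $A\subseteq S$ achieving this maximum; without loss of generality $\mu(A)\ge\nu(A)$, since replacing $A$ by its complement leaves $|\mu(A)-\nu(A)|$ unchanged and flips the sign of the difference. Expressing both marginals through the joint law and splitting on the event $\{X=Y\}$,
\begin{align*}
\mu(A)-\nu(A) &= \Prob{}{X\in A}-\Prob{}{Y\in A}\\
&= \big(\Prob{}{X\in A,\,X=Y}+\Prob{}{X\in A,\,X\neq Y}\big)\\
&\quad-\big(\Prob{}{Y\in A,\,X=Y}+\Prob{}{Y\in A,\,X\neq Y}\big).
\end{align*}
On the event $\{X=Y\}$ the indicators of $\{X\in A\}$ and $\{Y\in A\}$ coincide, so the two terms conditioned on $\{X=Y\}$ cancel, leaving
\[
\mu(A)-\nu(A)=\Prob{}{X\in A,\,X\neq Y}-\Prob{}{Y\in A,\,X\neq Y}\le \Prob{}{X\in A,\,X\neq Y}\le \Prob{}{X\neq Y}.
\]
Taking the maximum over $A$ then yields $\tv{\mu-\nu}\le\Prob{}{X\neq Y}$.

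Combining the two steps gives $\tv{P^t(\x,\cdot)-\bar{P}^t(\y,\cdot)}\le \Prob{\x,\y}{X_t\neq Y_t}$, as claimed. The argument is elementary and entirely standard, so there is no genuine obstacle; the proof is really just the assembly of Proposition~4.7 and Theorem~5.2 of \cite{lpwAMS08}. The only points demanding a little care are (i) the clean identification of $(X_t,Y_t)$ as a coupling of the time-$t$ marginals, which is exactly what the definition of a coupling of Markov chains delivers, and (ii) the cancellation on the event $\{X=Y\}$ together with the complement reduction to $\mu(A)\ge\nu(A)$, which is what lets one drop the absolute value in the total variation distance.
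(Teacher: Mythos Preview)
Your proof is correct and is exactly the standard argument: the paper itself does not give a proof of this theorem but simply states that it follows from Proposition~4.7 and Theorem~5.2 of \cite{lpwAMS08}, and your write-up is precisely the content of those two results spelled out. There is nothing to add or correct.
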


\section{Asymptotic well-classified games}\label{sec:awc}
It is natural to ask if asymptotic metastability can hold even for potential games that are not asymptotically well-behaved.
Lemma~\ref{lem:potg_no_asymMeta} states that this is not the case in general,
but we wonder if there are conditions on potential games that are sufficient for asymptotic metastability.
In this section 
we show that the framework described above
can give a slightly more general condition than being asymptotically well-behaved.

To get an intuition of the condition that we are going to define, 
it is worth looking more closely at the game of Lemma~\ref{lem:potg_no_asymMeta}, known to be not partitioned by the logit dynamics (cf. Lemma~\ref{lem:potg_no_partitionable}). This game necessitates the update of the function $\Time$ infinitely often when adding a new player.
This update is done so that the new profile $(1, \ldots, 1)$ has a bottleneck ratio that
cannot be described by any of the functions considered at that point.
This process is never ending and gives no asymptotic meaning to $\Time$ in the limit.

The intuition is then that a game cannot be partitioned by the logit dynamics when for each choice of a polynomial $p$,
a super-polynomial $q$ and infinitely many values of $n$, there is a subset of states $L$ with $n$ players such that $\pi(L) < 1/2$ and
\[
p(n) < B^{-1}(L) < q(n).
\]
A sufficient property can then assume that we can asymptotically classify the bottleneck ratio of each subset of profiles as either polynomial or super-polynomial. More specifically, we can assume there are two functions $p$ at most polynomial in the input and $q$ at least super-polynomial in the input such that for each $\beta$ and each subset $A$ the bottleneck ratio $B(A)$ can be bounded by functions that depends on either $p$ or $q$. 
An equivalent viewpoint would be to see $n$-player potential games as a class to which a kind of oracle is attached that distinguishes between polynomial and super-polynomial bottleneck ratios for any fixed $\beta$. Formally, given a $n$-player potential game $\G$ and fixed $\beta \geq \beta_0$, this oracle can be described as follows: when it is queried about the bottleneck ratio of a subset $A$ with $n$ players its answer states that the bottleneck ratio is either i) at most polynomial if it is lower-bounded by $1/p(n)$; or ii) at least super-polynomial if it is upper-bounded by $1/q(n)$.

Lemma~\ref{lem:awd_classified} actually proves that this property 
holds for asymptotically well-behaved games. However, a more careful look to the game of Lemma~\ref{lem:potg_no_asymMeta} highlights that
to prove asymptotic metastability of a game with respect to a polynomial $p$ and a super-polynomial $q$
we do not need the behavior of each subset to be classified.
That is, we can allow some subsets of profiles to have bottleneck ratio in between the inverse of $q$ and the inverse of $p$.
In this case, we will say that the subset of profiles is \emph{unclassified}.

Our next condition then describes which class of subsets is sufficient to classify in order to have that the sets
returned by the Algorithm~\ref{algo:meta_sets} enjoy the properties required by the definition of partitioned games.
In particular, we define the class of \emph{asymptotically well-classified games} as follows.
\begin{definition}[Asymptotically well-classified games]
\label{def:awd}
 An $n$-player potential games $\G$ is asymptotically well-classified if for every $\beta \geq \beta_0, \varepsilon > 0$ there exist a pair of functions $p$ at most polynomial and $q$ at least super-polynomial, that for each $n$ sufficiently large, satisfy the following conditions:
 \begin{enumerate}
 \item $q(n) \leq \max_{L \colon \pi(L) \leq 1/2} B^{-1}(L)$;
   \item for each $R_i$ computed by $\A_{p,q}$ and for any $L \subset R_i$ such that $\pi_{R_i}(L) \leq 1/2$, if $B_{R_i}(L) < 1/p(n)$, then both $B(L)$ and $B(R_i \setminus L)$ are not unclassified;
  \item for each subset $L \subseteq N$, $N$ being as at the end of the algorithm $\A_{p,q}$, such that $\pi(L) \leq 1/2$, $B(L)$ is not unclassified.
 \end{enumerate}
\end{definition}
By careful looking at their proofs, one can check that Proposition~\ref{prop:mixing_restricted_chain} and Proposition~\ref{prop:hitting_outN} continue to hold even if we substitute asymptotically well-behaved potential games with asymptotically well-classified ones.

\section{Spectral properties of the logit dynamics}
\label{sec:spectral}
We next give other interesting spectral results about the transition matrix generated by the logit dynamics. In particular, by using a matrix decomposition similar to the one adopted in the proof of Lemma~\ref{lemma:diod_conj_restricted} we can prove the following propositions. (We remark that results in this section do not need to assume that the chain is reversible and indeed apply to any strategic game and not only to potential games.)
\begin{prop}
\label{prop:trace}
 Let $\mathcal{G}$ be a game with profile space $S$ and let $P$ be the transition matrix of the logit dynamics for $\mathcal{G}$. The trace of $P$ is independent of $\beta$.
\end{prop}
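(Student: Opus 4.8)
The plan is to compute the trace directly from the definition of $P$ and to observe that the dependence on $\beta$ cancels simply because each Boltzmann distribution $\sigma_i(\cdot \mid \x_{-i})$ is a probability distribution over $S_i$, hence sums to one regardless of the value of $\beta$.

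First I would write $\mathrm{tr}(P) = \sum_{\x \in S} P(\x,\x)$. By Definition~\ref{def:LD} we have $P(\x,\x) = \tfrac{1}{n}\sum_{i=1}^n \sigma_i(x_i \mid \x_{-i})$, so that $\mathrm{tr}(P) = \tfrac{1}{n}\sum_{i=1}^n \sum_{\x \in S}\sigma_i(x_i \mid \x_{-i})$. Then, for each fixed $i$, I would split the sum over profiles as $\sum_{\x \in S} = \sum_{\x_{-i} \in S_{-i}}\sum_{x_i \in S_i}$, where $S_{-i} := \prod_{j\neq i}S_j$, and use that $\sum_{x_i \in S_i}\sigma_i(x_i \mid \x_{-i}) = 1$, which holds by \eqref{eq:updateprob} for \emph{every} value of $\beta$ (the $\beta$-dependent exponentials are exactly cancelled by the normalizing factor $Z_i(\x_{-i})$). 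This gives $\sum_{\x \in S}\sigma_i(x_i \mid \x_{-i}) = |S_{-i}| = \prod_{j\neq i}|S_j|$, and hence $\mathrm{tr}(P) = \tfrac{1}{n}\sum_{i=1}^n \prod_{j\neq i}|S_j|$, a quantity that does not involve $\beta$ at all.

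Alternatively, and more in the spirit of the decomposition used in the proof of Lemma~\ref{lemma:diod_conj_restricted}, one can write $P = \tfrac{1}{n}\sum_{i=1}^n\sum_{\z_{-i}\in S_{-i}} P_{i,\z_{-i}}$, where now $P_{i,\z_{-i}}$ is the analogue of \eqref{eq:Pizi} for the full matrix, i.e.\ the rank-one matrix supported on the block $\{(\z_{-i},s)\colon s\in S_i\}$ whose non-zero rows all equal the vector $\bigl(\sigma_i(s\mid\z_{-i})\bigr)_{s\in S_i}$. Since $\mathrm{tr}(P_{i,\z_{-i}}) = \sum_{s\in S_i}\sigma_i(s\mid\z_{-i}) = 1$ for every $\beta$, and there are $\prod_{j\neq i}|S_j|$ such blocks for each $i$, linearity of the trace yields the same conclusion.

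There is essentially no serious obstacle here; the only point requiring a little care is the bookkeeping of the diagonal contributions coming from the self-loops of the chain — this is precisely where the $\beta$-dependent exponentials would seem to enter, but they collect into the normalizations $Z_i(\x_{-i})$ and cancel, which is the whole content of the statement. I would therefore present the direct computation as the main argument and mention the matrix-decomposition viewpoint only as a remark linking it to the techniques of Section~\ref{sec:spectral}.
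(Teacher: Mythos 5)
Your proof is correct, and your primary argument is genuinely more elementary than the one in the paper. You compute $\mathrm{tr}(P)=\sum_{\x}P(\x,\x)$ directly from Definition~\ref{def:LD}, interchange the sums, and invoke only the fact that each $\sigma_i(\cdot\mid\x_{-i})$ is a probability distribution on $S_i$ (the normalization by $Z_i(\x_{-i})$ in \eqref{eq:updateprob}); this yields the explicit closed form $\mathrm{tr}(P)=\tfrac1n\sum_{i=1}^n\prod_{j\neq i}|S_j|$, which is strictly more information than the bare $\beta$-independence claimed in Proposition~\ref{prop:trace}. The paper instead works through the decomposition $P=\sum_i\sum_{\z_{-i}}P_{i,\z_{-i}}$ of \eqref{eq:Pizi} and determines the trace of each rank-one block indirectly: it combines the row-sum identity for the block with the observation that all non-zero entries in a column of $P_{i,\z_{-i}}$ coincide, solves the resulting equation for the off-diagonal mass $C$, and concludes that each block has constant trace. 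That route never explicitly writes down $\sum_{s}\sigma_i(s\mid\z_{-i})=1$, which is why it is longer; its payoff is that it stays entirely inside the block-decomposition framework reused throughout Section~\ref{sec:spectral} (e.g.\ in Lemma~\ref{lemma:diod_conj_restricted} and Proposition~\ref{prop:determinant}), which is also the content of your closing remark. One bookkeeping point in your favour: with the normalization of \eqref{eq:Pizi}, which carries the factor $1/n$, each block actually has trace $1/n$ rather than $1$ as stated in the paper's proof (the paper's claimed identity $P_{i,\z_{-i}}(\x,\x)=1-\sum_{\y\neq\x}P(\x,\y)$ only holds if the $1/n$ is stripped out); the conclusion is unaffected, and your version, which pulls the $1/n$ out front before taking traces, is the cleaner accounting.
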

\begin{proof}
 For every $i$ and for every $\z_{-i}$ consider the transition matrices $P_{i,\z_{-i}}$ defined in~\eqref{eq:Pizi}, with $L = S$. Let $S_{i,\z_{-i}} = \{(\z_{-i}, s_i) \mid s_i \in S_i\}$. Observe that for every $\x \in S_{i,\z_{-i}}$ we have $P_{i,\z_{-i}}(\x,\x) = 1 - \sum_{\y \in S_{i,\z_{-i}}, \y \neq \x} P(\x,\y)$. Hence, the trace of $P_{i,\z_{-i}}$ is
 $$
  \sum_{\x \in S_{i,\z_{-i}}} P_{i,\z_{-i}}(\x,\x) = |S_i| - \sum_{\x \in S_{i,\z_{-i}}}\sum_{\y \in S_{i,\z_{-i}}, \y \neq \x} P(\x,\y).
 $$
 Since all non-zero elements in a column of $P_{i,\z_{-i}}$ are the same we also have
 $$
  P_{i,\z_{-i}}(\x,\x) = \frac{1}{|S_i|-1} \sum_{\y \in S_{i,\z_{-i}}, \y \neq \x} P(\y,\x).
 $$
 By setting $C = \sum_{\x \in S_{i,\z_{-i}}}\sum_{\y \in S_{i,\z_{-i}}, \y \neq \x} P(\x,\y) = \sum_{\x \in S_{i,\z_{-i}}}\sum_{\y \in S_{i,\z_{-i}}, \y \neq \x} P(\y,\x)$, we have
 $$
  |S_i| - C = \frac{C}{|S_i| - 1} \Longrightarrow C = |S_i| - 1,
 $$
 and thus, the trace of $P_{i,\z_{-i}}$ is always $1$, regardless of $\beta$. The proposition follows since the trace of $P$ is exactly the sum of the traces of all $P_{i,\z_{-i}}$'s.
\end{proof}
The proposition above says that if there exists an eigenvalue of $P$ that gets closer to 1 as $\beta$ increases, then there are other eigenvalues that get smaller: this is very promising in the tentative to characterize the entire spectrum of eigenvalues of $P$, necessary to use powerful tools such as the well-known \emph{random target lemma} \cite{lpwAMS08}.

In order to prove our last characterization of the transition matrix generated by the logit dynamics, we prove the following lemma which gives a lower bound on the probability that the strategy profile is not changed in one step of the logit dynamics for a generic game.
\begin{lemma}
\label{lemma:bound_loop}
 Let $\mathcal{G}$ be a game with profile space $S$ and let $P$ be the transition matrix of the logit dynamics for $\mathcal{G}$. Then for every $\x \in S$ we have that
 $$
  P(\x, \x) = \sum_i P\Big((\x_{-i},s_i^\star), \x\Big),
 $$
 where $s_i^\star \neq x_i$ is an arbitrary strategy of player $i$.
\end{lemma}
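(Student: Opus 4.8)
The plan is to unfold the definition of the transition matrix from Definition~\ref{def:LD} on both sides and check that the two expressions coincide term by term. First I would use \eqref{eq:transmatrix} to write the diagonal entry as
$$
P(\x,\x)=\frac{1}{n}\sum_{i=1}^{n}\sigma_i(x_i\mid\x_{-i}).
$$
The key observation is then that the update probability $\sigma_i(x_i\mid\x_{-i})$, as defined in \eqref{eq:updateprob}, depends on the current profile only through the coordinates of the players other than $i$; in particular it takes the same value for the profile $\x=(\x_{-i},x_i)$ and for the profile $(\x_{-i},s_i^\star)$, whatever the strategy $s_i^\star\in S_i$ of player $i$ is.

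Next I would compute the generic summand on the right-hand side. Since $s_i^\star\neq x_i$, the profiles $(\x_{-i},s_i^\star)$ and $\x$ differ in exactly the $i$-th coordinate, so the first case of \eqref{eq:transmatrix} applies and gives
$$
P\Big((\x_{-i},s_i^\star),\x\Big)=\frac{1}{n}\,\sigma_i\big(x_i\mid(\x_{-i},s_i^\star)_{-i}\big)=\frac{1}{n}\,\sigma_i(x_i\mid\x_{-i}),
$$
where the last equality uses the observation above, namely that $(\x_{-i},s_i^\star)_{-i}=\x_{-i}$. Summing this identity over $i=1,\dots,n$ and comparing with the displayed expression for $P(\x,\x)$ yields the claim.

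I do not expect any real obstacle here: the statement follows directly from the definitions, the only substantive point being that $\sigma_i(\,\cdot\mid\x_{-i})$ is insensitive to the current strategy of the updating player, which is precisely what lets us replace the loop contribution of player $i$ at $\x$ by a transition into $\x$ from a neighbour obtained by perturbing coordinate $i$ arbitrarily. This also explains why the conclusion is valid for an \emph{arbitrary} choice of $s_i^\star\neq x_i$, and indeed for a choice that may depend on $i$.
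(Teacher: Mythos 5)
Your proof is correct and follows essentially the same route as the paper: both arguments reduce each side to $\frac{1}{n}\sum_i \sigma_i(x_i\mid\x_{-i})$ and rest on the key observation that $\sigma_i(\cdot\mid\x_{-i})$ is insensitive to player $i$'s own current strategy, so that $(\x_{-i},s_i^\star)_{-i}=\x_{-i}$ makes the choice of $s_i^\star$ immaterial. The only cosmetic difference is that the paper computes the diagonal entry as $1-\sum_{\y\in N(\x)}P(\x,\y)$ and simplifies, whereas you read it directly off the second case of \eqref{eq:transmatrix}; these are equivalent by row-stochasticity.
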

\begin{proof}
Observe that
 \begin{align*}
 P(\x,\x) & =  1 - \sum_{\y \in N(\x)} P(\x, \y) = \sum_i \left( \frac{1}{n} - \sum_{\y \in N_i(\x)} P(\x,\y)\right)\\
 & = \sum_i \frac{1}{n} \left( 1 - \sum_{\y \in N_i(\x)} \frac{e^{\beta u_i(\y)}}{e^{\beta u_i(\x)} + \sum_{\z \in N_i(\x)} e^{\beta u_i(\z)}}\right)
  =  \sum_i \frac{1}{n} \frac{e^{\beta u_i(\x)}}{e^{\beta u_i(\x)} + \sum_{\z \in N_i(\x)} e^{\beta u_i(\z)}}.
 \end{align*}
 The proof concludes by observing that for every $i$ and for every  $s_i^\star \in S_i$, we have
\[
  P\Big((\x_{-i}, s_i^\star), \x\Big) = \frac{1}{n} \frac{e^{\beta u_i(\x)}}{e^{\beta u_i(\x)} + \sum_{\z \in N_i(\x)} e^{\beta u_i(\z)}}.\tag*{\qed}
\]
\let\qed\relax
\end{proof}
Lemma~\ref{lemma:bound_loop} allows us to calculate the determinant of $P$.
\begin{prop}
 \label{prop:determinant}
Let $\mathcal{G}$ be a game with profile space $S$ and let $P$ be the transition matrix of the logit dynamics for $\mathcal{G}$. The  determinant of $P$ is $0$.
\end{prop}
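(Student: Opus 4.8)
The plan is to exhibit a non-zero row vector $u$ with $u^\top P=0$; since the rows of $P$ are then linearly dependent, this yields $\det P=0$ (equivalently, $0$ is an eigenvalue of $P$, complementing the $\beta$-independence of $\mathrm{tr}\,P$ from Proposition~\ref{prop:trace}). Throughout I use, as is implicitly assumed already in Lemma~\ref{lemma:bound_loop}, that every player has at least two strategies.

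First I would construct $u$. For each player $i$, since $|S_i|\ge 2$ I can fix a function $\psi_i\colon S_i\to\mathbb{R}$ with $\psi_i\not\equiv 0$ and $\sum_{s\in S_i}\psi_i(s)=0$, and set $u(\x)=\prod_{i=1}^n\psi_i(x_i)$; choosing $t_i$ with $\psi_i(t_i)\neq 0$ shows $u(t_1,\dots,t_n)\neq 0$, so $u\neq\0$. (For two-strategy players this is, up to scaling, the $\pm1$ ``parity'' vector; the coordinatewise zero-sum condition is precisely what makes the argument go through for strategy sets of arbitrary size, which is the only real subtlety.)

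Then I would check $u^\top P=0$ one coordinate at a time. Fix a profile $\y$. The entries $P(\x,\y)$ are non-zero only for $\x=\y$ and for $\x\in N_i(\y)$, $i=1,\dots,n$, and the sets $\{\y\},N_1(\y),\dots,N_n(\y)$ are pairwise disjoint. By Lemma~\ref{lemma:bound_loop} — more precisely, by the identities in its proof — $P(\x,\y)$ equals the \emph{same} value $c_i(\y):=\tfrac1n\sigma_i(y_i\mid\y_{-i})$ for \emph{every} $\x\in N_i(\y)$, while $P(\y,\y)=\sum_{i=1}^n c_i(\y)$. On the other hand, the zero-sum property of $\psi_i$ gives $\sum_{\x\in N_i(\y)}u(\x)=\bigl(\prod_{j\neq i}\psi_j(y_j)\bigr)\sum_{s\in S_i,\,s\neq y_i}\psi_i(s)=-\prod_{j}\psi_j(y_j)=-u(\y)$. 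Combining,
\[
(u^\top P)(\y)=u(\y)\,P(\y,\y)+\sum_{i=1}^n c_i(\y)\sum_{\x\in N_i(\y)}u(\x)=u(\y)\sum_{i=1}^n c_i(\y)-u(\y)\sum_{i=1}^n c_i(\y)=0 .
\]
Since $\y$ was arbitrary, $u^\top P=0$, and the proposition follows.

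I do not expect a substantial obstacle here: the only thing to get right is to look for a left null vector (rather than attempt elimination) and to pick it in the form of a product of coordinatewise zero-sum functions, which is exactly what makes the per-direction cancellation above work regardless of the sizes of the $S_i$. As a remark, the same conclusion also drops out of the decomposition $P=\sum_{i,\z_{-i}}P_{i,\z_{-i}}$ already used for Proposition~\ref{prop:trace} and in the proof of Lemma~\ref{lemma:diod_conj_restricted}: every non-zero column of $P_{i,\z_{-i}}$ is proportional to the indicator of $\{(\z_{-i},s)\colon s\in S_i\}$, so the column space of $P$ lies inside the span of these ``cylinder'' indicators, a proper subspace of $\mathbb{R}^{S}$ of dimension $|S|-\prod_i(|S_i|-1)$; this shows in addition that the nullity of $P$ is at least $\prod_i(|S_i|-1)$.
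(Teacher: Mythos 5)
Your proof is correct, and it rests on the same key fact as the paper's --- namely the identity from Lemma~\ref{lemma:bound_loop} that $P(\x,\y)$ takes the common value $\tfrac1n\sigma_i(y_i\mid\y_{-i})$ for every $\x\in N_i(\y)$, while $P(\y,\y)$ is the sum of these values over $i$ --- but the execution is genuinely different and, in my view, cleaner. The paper fixes a profile $\x$ and one alternative strategy $s_i^\star\neq x_i$ per player, and shows by an inclusion--exclusion identity (verified through a three-case analysis on $\y$, according to whether $H(\x,\y)>1$, $H(\x,\y)=1$, or $\x=\y$) that the row of $\x$ is an alternating combination of the rows indexed by the sub-hypercube $\prod_i\{x_i,s_i^\star\}$; the resulting left null vector is the $\pm1$ parity function supported on that sub-hypercube. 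Your vector $u=\prod_i\psi_i$ with coordinatewise zero-sum factors contains the paper's as the special case where each $\psi_i$ is supported on $\{x_i,s_i^\star\}$ with values $\mp1$, and the zero-sum condition lets you verify $u^\top P=0$ in a single uniform computation, with no case distinction and no restriction to a two-point subset of each $S_i$. What this buys beyond the paper: since the relation $u^\top P=0$ is linear in $u$ and holds for all pure tensors of zero-sum functions, the left null space has dimension at least $\prod_i(|S_i|-1)$, a quantitative strengthening the paper does not state (it only exhibits one null vector). Your caveat that every player needs at least two strategies is apt, but the paper's own proof needs the same assumption to choose $s_i^\star\neq x_i$, so this is not a point of divergence.
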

\begin{proof}
It is well-known that a matrix in which one row can be expressed as a linear combination of other rows has determinant zero. In this proof, we fix a profile $\x$ and show that the row of $P$ corresponding to $\x$ can be obtained as a linear combination of other rows of the matrix. For each player $i$, fix a strategy $s_i^\star \in S_i$ such that $s_i^\star \neq x_i$. Let us denote with $S^j$, $j = 0, \ldots, n$, the set of profiles $\y \in S$ obtained from $\x$ by selecting $j$ players $i_1, \ldots, i_j$ and setting their strategies to $s_{i_1}^\star, \ldots, s_{i_j}^\star$, respectively. Notice that $\x$ belongs to $S^0$. By construction, for every profile $\z \in S^j$, $z_i \in \{x_i, s^{\star}_i\}$. Now, for $i=1, \ldots, n$, consider the profile obtained from $\z$ by changing $z_i=x_i$ into $s^{\star}_i$ or viceversa. Note that there are $n$ of such profiles which are neighbors of $\z$ and all contained in the sets $S^{j-1}$ and $S^{j+1}$. We claim that for every $\y\in S$
 \begin{equation}
 \label{eq:det_eq}
  P(\x,\y) = \sum_{j=1}^n (-1)^{j+1} \sum_{\z \in S^j} P(\z,\y).
 \end{equation}
 In order to prove the claim we distinguish three cases:
 \begin{enumerate}
  \item Let $H(\x,\y) > 1$ (and thus $P(\x,\y) = 0$): if there exists $j \in \{0, \ldots, n\}$ such that $\y \in S^j$, then the r.h.s. of~\eqref{eq:det_eq} becomes $\pm \Big( P(\y,\y) - \sum_i P\big((\y_{-i}, s_i^\star), \y\big)\Big) = 0$, from Lemma~\ref{lemma:bound_loop}; if $\y \notin \bigcup_{j = 0}^n S^j$, then consider a profile $\z \in S^j$, for some $j=1, \ldots, n$, such that $\z$ differs from $\y$ only in the strategy of player $k$: if no such profile exists, then the r.h.s. of~\eqref{eq:det_eq} is 0; otherwise, let us assume w.l.o.g. $z_k = x_k$ (the case $z_k = s_k^\star$ can be managed similarly), then the profile $\z' = (\z_{-k},s_k^\star)$ is a neighbor of $\y$, belongs to the set $S^{j+1}$ and $P(\z,\y) = P(\z',\y)$: hence, this two profiles delete each other in the r.h.s. of~\eqref{eq:det_eq}, giving the aimed result.

  \item Let $\x,\y$ differ in the strategy adopted by the player $k$: if there exists $j \in \{0, \ldots, n\}$ such that $\y \in S^j$, then the r.h.s. of~\eqref{eq:det_eq} becomes $P(\y,\y) - \sum_{i \neq k} P\big((\y_{-i}, s_i^\star), \y\big) = P(\x,\y)$, from Lemma~\ref{lemma:bound_loop}; if $\y \notin \bigcup_{j = 0}^n S^j$, then, as above, all profiles in $\bigcup_{j = 0}^n S^j$ that differ from $\y$ only in one player $i \neq k$ delete each other in the r.h.s. of~\eqref{eq:det_eq}: thus, the only element that survives in the r.h.s. of~\eqref{eq:det_eq} is $P\big((\x_{-k},x_k),\y\big) = P(\x,\y)$.

  \item If $\x = \y$, then the r.h.s. of~\eqref{eq:det_eq} becomes $\sum_{i \neq k} P\big((\y_{-i}, s_i^\star), \y\big) = P(\x,\x)$, from Lemma~\ref{lemma:bound_loop}.\qedhere
\end{enumerate}
\end{proof}
Since, as observed above, the logit dynamics for potential games defines a reversible Markov chain, Lemma~\ref{lemma:diod_conj_restricted} and Proposition~\ref{prop:determinant} imply that the last eigenvalue of the logit dynamics for these games is exactly 0. (Note that in \cite{afpppSPAA11j} is only stated the last eigenvalue is non-negative.)
Moreover, from the proof above, it turns out that an eigenvector of such zero eigenvalue is given by the function $f \colon S \rightarrow \mathbb{R}$ defined as
$$
 f(\w) = \begin{cases}
          -1, & \mbox{if $\w \in S^j$ and $j$ is even;}\\
          1, & \mbox{if $\w \in S^j$ and $j$ is odd;}\\
          0, & \mbox{otherwise;}\\
         \end{cases}
$$
where the sets $S^j$'s are defined as in the above proof from some fixed profile $\x$.

\end{document}